\newcommand{\mathem}{\sf}
\newcommand{\REC}{\mbox{\mathem rec}}
\newcommand{\CASE}{\mbox{\mathem case}}
\newcommand{\OF}{\mbox{\mathem of}}
\newcommand{\bi}{\begin{array}[t]{@{}l@{}}}
\newcommand{\ei}{\end{array}}
\newcommand{\ba}{\begin{array}}
\newcommand{\ea}{\end{array}}
\newcommand{\bda}{\[\ba}
\newcommand{\eda}{\ea\]}
\newcommand{\bp}{\begin{quote}\tt\begin{tabbing}}
\newcommand{\ep}{\end{tabbing}\end{quote}}
\def\ruleform#1{{\setlength{\fboxrule}{1pt}\fbox{\normalsize $#1$}}}
\newcommand{\myirule}[2]{{\renewcommand{\arraystretch}{1.2}\ba{c} #1
                      \\ \hline #2 \ea}}
\newcommand{\rlabel}[1]{\mbox{(#1)}}
\newcommand{\ulabel}[1]{\mbox{(#1)$_u$}}
\newcommand{\dlabel}[1]{\mbox{(#1)$_d$}}
\newcommand{\turns}{\, \vdash \,}
\newcommand{\turnsreg}{\, \vdash_r \,}
\newcommand{\sem}[2]{[\![#1]\!]#2}
\newcommand{\matchRel}[3]{ \, \vdash_m \, #1 : #2 \leadsto #3}
\newcommand{\match}[2]{\textit{match}(#1,#2)}
\newcommand{\Val}{{\cal V}}
\newcommand{\Wrong}{\mbox{\bf W}}
\newcommand{\Kons}{{\cal K}}
\newcommand{\conc}{\cdot}
\newcommand{\deriv}[2]{d_{#2}(#1)} 
\newcommand{\nullable}[1]{{\cal N}(#1)}
\newcommand{\false}{\mathit{False}}
\newcommand{\true}{\mathit{True}}
\newcommand{\ms}[1]{} 
\newcommand{\pt}[1]{} 
\newcommand{\CFE}{Context-Free Expressions}
\newcommand{\cfe}{context-free expression}
\newcommand{\reduce}{\Rightarrow}
\newcommand{\reduceNoFold}[1]{\stackrel{\not\mu #1}{\Rightarrow}}
\newcommand{\flatten}[1]{\mathit{flatten}(#1)}
\newcommand{\unfold}[2]{#2^{#1}}
\newcommand{\reach}[2]{\mathit{reach}(#1,#2)}
\newcommand{\reachRel}[3]{#1 \stackrel{#2}{\leadsto} #3}
\newcommand{\ReachF}[3]{{\cal R}(#1,#2,#3)}
\newcommand{\ReachFUnfold}[2]{{\cal R}'(#1,#2)}
\newcommand{\SubTerm}[1]{{\cal T}(#1)}
\newcommand{\Subst}[1]{{\cal S}(#1)}
\newcommand{\apply}[2]{#1(#2)}
\newcommand{\fapp}[2]{#1 \ (#2)}
\newcommand{\restrict}[2]{#1_{\backslash #2}}
\newcommand{\simp}[1]{\mathit{cnf}(#1)}
\newcommand{\Desc}[1]{D(#1)}
\newcommand{\id}{\mathit{id}}
\newcommand\E{\mathcal E}
\newcommand{\FixP}[2]{{\cal X}_{#1,#2}}
\newcommand{\FixStep}[3]{{X^{#1}}_{#2,#3}}
\newcommand{\EqDom}[2]{{\cal E}_{#1,#2}}
\newcommand{\ReachFunc}[2]{{\cal F}_{#1,#2}}
\newcommand{\CFExp}{\textit{CFE}}
\newcommand{\RExp}{\textit{RE}}
\newcommand{\upTy}[2]{\mbox{\tt U}(#1,#2)} 
\newcommand{\dnTy}[2]{\mbox{\tt D}(#1,#2)} 
\newcommand{\plus}[1]{+#1}
\newcommand{\upCoerce}[4]{#1 \turns^{\!\!\Uparrow} \ #2 : \upTy{#3}{#4}}
\newcommand{\downCoerce}[4]{#1 \turns_{\!\!\Downarrow} \ #2 : \dnTy{#3}{#4}}
\newcommand{\mkEmpty}[1]{\mathit{mkE}(#1)}
\newcommand{\SeqOp}{\textsc{Seq}}
\newcommand\Seq[2]{\textsc{Seq}\ #1\ #2}
\newcommand{\Left}{\textsc{Inl}}
\newcommand{\Right}{\textsc{Inr}}
\newcommand{\Fold}{\textsc{Fold}}
\newcommand{\Eps}{\textsc{Eps}}
\newcommand{\Sym}{\textsc{Sym}}
\newcommand{\Just}{\textit{Just}}
\newcommand{\Nothing}{\textit{Nothing}}
\newcommand{\Maybe}{\textit{Maybe}}
\newcommand\semle\leqq 
\newcommand\semeq\equiv              
\newcommand\syneq=         
\newcommand\DirectDescendant\sqsubset
\newcommand\Descendant\preceq
\newcommand\TrueDescendant\prec
\newcommand\Power{\wp}
\newcommand\BeforeOrEqual\le
\title{A Computational Interpretation of \CFE}
\author{Martin Sulzmann\inst1  \and Peter Thiemann\inst2}
 \institute{
   Faculty of Computer Science and Business Information Systems \\
   Karlsruhe University of Applied Sciences \\
   Moltkestrasse 30, 76133 Karlsruhe, Germany\\
   \email{martin.sulzmann@hs-karlsruhe.de}
   \and 
   Faculty of Engineering, University of Freiburg\\ Georges-K{\"o}hler-Allee
   079, 79110 Freiburg, Germany \\
   \email{thiemann@acm.org}
 }
\begin{document}

\maketitle

\begin{abstract}
  We phrase parsing with context-free expressions as a type inhabitation problem where
  values are parse trees and types are context-free expressions.
  We first show how containment among context-free and regular expressions can be
  reduced to a reachability problem by using a canonical representation of states.
  The proofs-as-programs principle yields a
  computational interpretation of the reachability problem in terms of 
  a coercion that transforms the parse tree for a context-free
  expression into a parse tree for a regular expression. It also
  yields a partial coercion from regular parse trees to context-free
  ones.
  The partial coercion from the trivial language of all words
  to a context-free expression
  corresponds to a predictive parser for the expression.
\end{abstract}

\section{Introduction}
\label{sec:introduction}

In the context of regular expressions, there have been a number of works
which give a \emph{computational} interpretation of regular expressions.
For example, Frisch and Cardelli~\cite{cduce-icalp04} show how to phrase
the regular expression parsing problem as a type inhabitation problem.
Parsing usually means that for an input string that matches a regular expression
we obtain a parse tree which gives a precise explanation which parts of the regular expression
have been matched. By interpreting parse trees as values and regular expressions as types,
parsing can be rephrased as type inhabitation as shown by Frisch and Cardelli.
Henglein and Nielsen~\cite{Henglein:2011:REC:1926385.1926429} as well
Lu and Sulzmann~\cite{DBLP:conf/aplas/LuS04,Sulzmann:2006:TEX:1706640.1706940},
formulate containment of regular expressions as a type
conversion problem. From a containment proof, they derive a
transformation (a type coercion) from parse trees of one regular expression into parse
trees of the other regular expression.

This paper extends these ideas to the setting of context-free expressions.
Context-free expressions extend regular expressions with a least fixed point operator,
so they are effectively equivalent to context-free grammars.
An essential new idea is to phrase the containment problem among \cfe s and regular expressions as
a reachability problem~\cite{Reps:1997:PAV:271338.271343}, where
states are represented by regular expressions and reachable states are
Brzozowski-style derivatives~\cite{321249}. 
By characterizing the reachability problem in terms of a natural-deduction style
proof system, we can apply the proofs-are-programs principle to extract
the coercions that implement the desired transformation between parse trees.

In summary, our contributions are:
\begin{itemize}
\item  an interpretation of \cfe s as types which are inhabited by valid
  parse trees
  (Section~\ref{sec:parsing});

\item a reduction of containment among \cfe s and regular expressions
  to a reachability problem
  (Section~\ref{sec:containment-reachability});

\item a formal derivation of coercions between context-free and
  regular parse trees extracted from  a natural-deduction style proof
  of context-free reachability (Section~\ref{sec:coercions}).
\end{itemize}
%

\emph{The optional appendix contains further details such as proofs etc.}

\section{Preliminaries}
\label{sec:mu-expression}

This section introduces some basic notations including the languages of regular and
context-free expressions and restates some known results for
Brzozowski style derivatives.

Let $\Sigma$ be a finite set of symbols with   $x$, $y$, and $z$
ranging over $\Sigma$.
We write $\Sigma^*$ for the set of finite words over $\Sigma$, 
$\varepsilon$ for the empty word, and
$v \conc w$ for the concatenation of words $v$ and $w$. A language is
a subset of $\Sigma^*$.

\begin{definition}[Regular Expressions]
  The set $\RExp$ of \emph{regular expressions} is defined inductively by
\bda{lcl}
 r,s & ::= & \phi \mid \varepsilon \mid x \in \Sigma \mid (r + s) \mid (r
 \conc s) \mid (r^*)
 \eda
\end{definition}
We omit parentheses by assuming that $^*$ binds tighter than $\conc$ and $\conc$ binds
tighter than $+$. 

\begin{definition}[Regular Languages]
 The meaning function $L$ maps a regular expression to a language.
It is defined inductively as follows:
 \\
 $L(\phi) = \{ \}$.
 $L(\varepsilon) = \{ \varepsilon \}$.
 $L(x) = \{ x \}$.
 $L(r + s) = L(r) \cup L(s)$.
 $L(r \conc s) = \{ v \conc w \mid v \in L(r) \wedge w \in L(s) \}$.
 $L(r^*) = \{ w_1 \conc \ldots \conc w_n \mid n \ge0 \wedge \forall i \in \{1,\ldots,n\}.~ w_i \in L(r) \}$.
\end{definition}
We say that regular expressions $r$ and $s$ are equivalent, $r\semeq s$, if $L(r) = L(s)$.
\begin{definition}[Nullability]
A regular expression $r$ is \emph{nullable} if $\varepsilon \in L(r)$.
\end{definition}

The \emph{derivative} of a regular expression $r$ with respect to some symbol $x$,
written $\deriv{r}{x}$, is a regular expression for the left quotient of $L(r)$ with respect to $x$.
That is, $L(\deriv{r}{x}) = \{ w \in\Sigma^* \mid x \conc w \in L(r) \}$.
A derivative $\deriv{r}{x}$ can be computed by recursion
over the structure of the regular expression $r$.

\begin{definition}[Brzozowski Derivatives~\cite{321249}]
\bda{ll}
\deriv{\phi}{x} = \phi
 & 
 \deriv{\varepsilon}{x} = \phi
 \\
 \\
  \deriv{y}{x} = \left \{ \ba{ll} \varepsilon & \mbox{if $x=y$}
                          \\      \phi        & \mbox{otherwise}
                          \ea
                          \right.
 & \deriv{r + s}{x} = \deriv{r}{x} + \deriv{s}{x}
\\                          
\deriv{r \conc s}{x} = \left \{ \ba{ll}  \deriv{r}{x} \conc s & \mbox{if $\varepsilon \not\in L(r)$}
\\  \deriv{r}{x} \conc s + \deriv{s}{x} & \mbox{otherwise}
\ea
\right.
 & 
\deriv{r^*}{x} = \deriv{r}{x} \conc r^*
\eda
\end{definition}

\begin{example}
  The derivative of $(x+y)^*$ with respect to symbol $x$ is $(\varepsilon + \phi) \conc (x+y)^*$.
  The calculation steps are as follows:
  $$
  \deriv{(x+y)^*}{x}
  = \deriv{x+y}{x} \conc (x+y)^*
  = (\deriv{x}{x} + \deriv{y}{x}) \conc (x+y)^*
  = (\varepsilon + \phi) \conc (x+y)^*
  $$

\end{example}

\begin{theorem}[Expansion \cite{321249}]
  \label{th:representation}
  Every regular expression $r$ can be represented
  as the sum of its derivatives with respect to all symbols. If $\Sigma =
  \{x_1, \dots, x_n\}$, then
  \bda{c}
  r \semeq x_1 \conc \deriv{r}{x_1} + \ldots + x_n \conc \deriv{r}{x_n}
  \ \mbox{($+ \varepsilon$ if $r$ nullable)}
  \eda 
\end{theorem}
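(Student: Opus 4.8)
The plan is to prove the Expansion Theorem by structural induction on the regular expression $r$, using the defining equation of the derivative together with the language semantics $L$. Since equivalence $\semeq$ is defined purely in terms of $L$, it suffices to show that the two sides denote the same language, i.e.\ that $w \in L(r)$ iff $w$ lies in the language of the claimed sum.

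First I would reformulate the statement at the level of languages. For any word $w$, either $w = \varepsilon$ or $w = x_i \conc w'$ for a unique symbol $x_i \in \Sigma$ and tail $w'$. In the first case, $\varepsilon \in L(r)$ exactly when $r$ is nullable, which matches the optional $+\varepsilon$ summand; no summand $x_i \conc \deriv{r}{x_i}$ can contribute $\varepsilon$ since every word in $L(x_i \conc \deriv{r}{x_i})$ begins with $x_i$. In the second case, $w = x_i \conc w'$ belongs to the right-hand side iff $w' \in L(\deriv{r}{x_i})$, and by the defining property of the derivative, $L(\deriv{r}{x_i}) = \{ w' \mid x_i \conc w' \in L(r) \}$, so this is equivalent to $w \in L(r)$. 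Thus the only real content is the characterization $L(\deriv{r}{x}) = \{ w \mid x \conc w \in L(r) \}$, i.e.\ that the derivative correctly computes the left quotient.

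Consequently the heart of the argument is to establish that derivative characterization, which I would prove by induction on the structure of $r$ following the clauses of the definition of $\deriv{r}{x}$. The base cases ($\phi$, $\varepsilon$, and symbols $y$) are immediate from the definitions of $L$ and of $\deriv{}{}$. The sum case distributes over $\cup$. The interesting cases are concatenation and star: for $r \conc s$ one must split a word $x \conc w \in L(r \conc s)$ according to whether the leading $x$ is consumed by $r$ or, when $\varepsilon \in L(r)$, by $s$; this is precisely why the definition of $\deriv{r \conc s}{x}$ branches on the nullability of $r$. For $r^*$ one unfolds $L(r^*)$ as $\{\varepsilon\} \cup L(r \conc r^*)$ and reduces to the already-handled concatenation case, noting that the leading symbol must be consumed within the first factor $r$.

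The main obstacle I anticipate is the bookkeeping in the concatenation and star cases, where the nullability side-condition in the definition of the derivative has to be matched carefully against the case split on whether a prefix of the input lies in $L(r)$; getting the nullable and non-nullable subcases to align with the two branches of $\deriv{r \conc s}{x}$ is where an off-by-one or a missed $\varepsilon$-contribution could creep in. Everything else is routine unfolding of definitions, and once the derivative characterization is in hand the top-level expansion follows by the leading-symbol case analysis described above.
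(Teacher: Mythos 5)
The paper itself contains no proof of this theorem: it is imported verbatim from Brzozowski \cite{321249}, and the appendix proves only the paper's new results. So there is no internal proof to compare against; measured against the classical source, your reconstruction is correct and is essentially Brzozowski's own argument. The decomposition is exactly right: the expansion follows from the left-quotient characterization $L(\deriv{r}{x}) = \{ w \mid x \conc w \in L(r) \}$ by the leading-symbol case analysis (noting, as you do, that $\varepsilon$ can only come from the optional $\varepsilon$ summand, and that a word $x_i \conc w'$ can only match the $i$-th summand since every word of $L(x_j \conc \deriv{r}{x_j})$ begins with $x_j$), and that characterization is proved by structural induction following the clauses of the derivative definition.

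One point deserves sharpening. In the star case, the phrase ``reduces to the already-handled concatenation case'' is, taken literally, circular: unfolding $L(r^*) = \{\varepsilon\} \cup L(r \conc r^*)$ and then applying the concatenation clause would, when $r$ is nullable, reintroduce $\deriv{r^*}{x}$ on the right, and $r^*$ is not a structural subterm of itself, so the induction hypothesis does not apply to it. Your parenthetical remark --- that the leading symbol must be consumed within the first factor --- is the actual fix, but it should be made explicit: given $x \conc w \in L(r^*)$, write $x \conc w = w_1 \conc \ldots \conc w_k$ with all $w_i \in L(r)$, drop empty factors so that w.l.o.g.\ $w_1 \neq \varepsilon$, hence $w_1 = x \conc u$ with $u \in L(\deriv{r}{x})$ by the induction hypothesis \emph{for $r$ alone}, giving $w \in L(\deriv{r}{x}) \conc L(r^*) = L(\deriv{r^*}{x})$; the converse inclusion is immediate. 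This matches the definition $\deriv{r^*}{x} = \deriv{r}{x} \conc r^*$, which notably does not branch on nullability of $r$ --- precisely because the nonempty-first-factor normalization makes the nullable subcase redundant. With that step spelled out, the proof is complete and agrees with the argument the paper's citation relies on.
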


\begin{definition}[Descendants and Similarity]
\label{def:desc-sim}
A \emph{descendant} of $r$ is either $r$ itself
or the derivative of a descendant.
We say $r$ and $s$ are \emph{similar}, written $r \sim s$,
if one can be transformed into the other by
finitely many applications of the rewrite rules (Idempotency) $r+r \sim r$,
(Commutativity) $r + s \sim s + r$,
(Associativity) $r + (s + t) \sim (r + s) + t$,
(Elim1) $\varepsilon \conc r \sim r$,
(Elim2) $\phi \conc r \sim \phi$,
(Elim3) $\phi + r \sim r$, and
(Elim4) $r + \phi \sim r$.
\end{definition}
\begin{lemma}
  Similarity is an equivalence relation that respects regular
  expression equivalence: $r \sim s$ implies $r \semeq s$.
\end{lemma}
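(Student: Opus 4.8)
The plan is to treat the two claims separately. For the first, I would observe that $\sim$ is by construction the \emph{equivalence closure}, modulo the expression-forming operators $+$, $\conc$, and $^*$, of the seven base rewrite rules. Reflexivity then holds by applying zero rewrite steps; transitivity holds by concatenating the rewrite sequences that witness $r \sim s$ and $s \sim t$; and symmetry holds because each of the seven rules may be read in either direction (they are stated with the symmetric symbol $\sim$), so any derivation of $r \sim s$ can be reversed step by step into a derivation of $s \sim r$. Hence $\sim$ is an equivalence relation with essentially no calculation.

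For the second claim, the strategy is to show that language equivalence $\semeq$ is itself a congruence that contains all seven base rules; since $\sim$ is the least such congruence, the inclusion $\sim\, \subseteq\, \semeq$ follows at once. Concretely, I would first check that each base rule is language preserving by unfolding the meaning function $L$: idempotency, commutativity, and associativity of $+$ reduce to the corresponding properties of set union $\cup$; (Elim1) follows from $\varepsilon \conc w = w$; and (Elim2), (Elim3), (Elim4) follow from $L(\phi) = \{\}$, using that concatenating with the empty language yields the empty language and that union with $\{\}$ is the identity. Each is a one-line set-theoretic verification.

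The remaining ingredient is that $\semeq$ is closed under the operators. Here the \emph{compositional} definition of $L$ does the work: since $L(r+s)$, $L(r \conc s)$, and $L(r^*)$ are determined solely by $L(r)$ and $L(s)$, replacing a subexpression by a language-equivalent one leaves the overall language unchanged. Formally I would prove, by induction on the structure of an arbitrary one-hole context $C$, that $L(r) = L(s)$ implies $L(C[r]) = L(C[s])$. Combining this with the language preservation of the base rules, a single similarity rewrite step — one application of a base rule inside some context — preserves $L$; an induction on the length of the rewrite sequence then yields $r \sim s \Rightarrow L(r) = L(s)$, that is, $r \semeq s$.

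The main obstacle is less a genuine difficulty than a matter of making the informal phrase \emph{``finitely many applications of the rewrite rules''} precise: one must commit to reading $\sim$ as the congruence closure, with rules applicable to arbitrary subterms rather than only at the top level. That reading is exactly what forces the congruence step for $\semeq$ above, so it is the crux of the argument; once it is fixed, every remaining step is routine.
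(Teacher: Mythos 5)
Your proof is correct. The paper states this lemma without proof---it is a standard fact going back to Brzozowski---and your argument (closure structure of $\sim$ for the equivalence-relation claim; one-line verification that each of the seven base rules preserves $L$, compositionality of $L$ giving the congruence step for contexts, and induction on the length of the rewrite sequence) is exactly the canonical proof the paper implicitly relies on, including your correct observation that $\sim$ must be read as the congruence closure so that rules apply to arbitrary subterms.
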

\begin{theorem}[Finiteness \cite{321249}]
The elements of the set of descendants of a regular expression belong
to finitely many similarity equivalence classes.
\end{theorem}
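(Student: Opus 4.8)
The plan is to prove finiteness by structural induction on the regular expression $r$, showing that the set $\Desc{r}$ of descendants falls into finitely many similarity classes. The base cases $\phi$, $\varepsilon$, and $x$ are immediate: a single derivative step already lands in $\{\phi\}$, $\{\phi,\varepsilon\}$, or $\{\phi,\varepsilon,x\}$ respectively, and these are fixed under further differentiation, so only finitely many classes arise. For the sum $r + s$, the rule $\deriv{r+s}{x} = \deriv{r}{x} + \deriv{s}{x}$ shows that every descendant other than $r+s$ itself is similar to a sum $r' + s'$ with $r' \in \Desc{r}$ and $s' \in \Desc{s}$; by the induction hypotheses there are finitely many of each, hence finitely many such sums up to similarity. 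The real work is in the concatenation and star cases.

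The key device throughout is that the rules (Idempotency), (Commutativity), and (Associativity) let me treat any sum as a finite \emph{set} of summands, while the elimination rules discard $\phi$ summands and collapse $\phi$ and $\varepsilon$ factors. Consequently, if a set $S$ of expressions is finite up to similarity, then the set of all finite sums of elements of $S$ is also finite up to similarity, bounded by the number of subsets of $S$. I will use this repeatedly to collapse the growing sums produced by repeated differentiation.

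For concatenation I would introduce the candidate invariant set
\[
T = \{\, r' \conc s + \textstyle\sum_i s_i \mid r' \in \Desc{r},\ s_i \in \Desc{s} \,\}
\]
(allowing the $r' \conc s$ part, the sum, or both to be absent) and prove three things: that $r \conc s \in T$; that $T$ is closed under taking derivatives up to similarity; and that $T$ is finite up to similarity. Closure is the crux: differentiating $r' \conc s$ yields $\deriv{r'}{x} \conc s$ and, when $r'$ is nullable, an extra summand $\deriv{s}{x}$, while differentiating each $s_i$ yields a descendant of $s$; collecting these and reorganizing with commutativity and associativity lands back in $T$, with $\deriv{r'}{x} = \phi$ handled by the elimination rules (so the $r' \conc s$ part may vanish, which is why its absence must be permitted). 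Finiteness of $T$ then follows from the induction hypotheses on $\Desc{r}$ and $\Desc{s}$ together with the sum-as-set observation. The star case is analogous but simpler: I take $T = \{r^*\} \cup \{\,\text{finite sums of } r' \conc r^* \mid r' \in \Desc{r}\,\}$, use $\deriv{r^*}{x} = \deriv{r}{x}\conc r^*$ and the concatenation rule to establish closure, and bound $T$ by the subsets of $\Desc{r}$.

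I expect the main obstacle to be the closure step in the concatenation case, specifically the bookkeeping forced by the nullability side condition in the derivative of $r' \conc s$: one must verify that the conditionally generated summand $\deriv{s}{x}$ is absorbed into the descendants-of-$s$ component without escaping the invariant, and that every similarity rewrite needed to renormalize the result is actually available. Making $T$ precise enough to be closed yet loose enough to contain $r \conc s$ and remain finite is where the care lies; everything else reduces to the induction hypotheses and the normalization of sums by the idempotency, commutativity, associativity, and elimination rules.
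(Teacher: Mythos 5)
The paper does not prove this theorem itself---it is imported by citation from Brzozowski~\cite{321249}---and your proposal correctly reconstructs exactly that classical argument: structural induction with the ACI rules turning sums into finite sets, the invariant that every descendant of $r \conc s$ is (similar to) a sum of one term $r' \conc s$ with $r' \in \Desc{r}$ plus finitely many descendants of $s$, and the analogous sum-of-$r' \conc r^*$ invariant for star, yielding the standard bounds of the form $2^{|\Desc{s}|}\cdot|\Desc{r}|$ and $2^{|\Desc{r}|}$. Your identified crux (absorbing the nullability-generated summand $\deriv{s}{x}$ into the $\Desc{s}$ component) is handled correctly, and note that closure even holds syntactically---the derivative of a sum is the sum of the derivatives and $\deriv{r'}{x}\in\Desc{r}$ by definition---so similarity is only needed at the end to count classes, not to renormalize at each step.
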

Similarity rules (Idempotency), (Commutativity), and (Associativity)
suffice to achieve finiteness. Elimination rules are added to
obtain a compact \emph{canonical representative} for
equivalence class of similar regular expressions.
The canonical form is obtained by systematic application
of the similarity rules in Definition~\ref{def:desc-sim}.
We enforce right-associativity of concatenated expressions,
sort alternative expressions according to their size and their first
symbol, and concatenations lexicographically,
assuming an arbitrary total order on $\Sigma$.
We further remove duplicates
and apply elimination rules exhaustively (the details are standard \cite{Grabmayer:2005:UPC:2156157.2156171}). 

\begin{definition}[Canonical Representatives]
\label{def:cnf}  
  For a regular expression $r$, we write $\simp{r}$ to denote
  the canonical representative among all expressions
  similar to $r$.
  We write $\Desc{r}$ for the set of canonical representatives
  of the finitely many dissimilar descendants of $r$.
\end{definition}  

\begin{example}
  We find that $\simp{(\varepsilon + \phi) \conc (x+y)^*} = (x+y)^*$ where we assume $x<y$.
\end{example}

Context-free expressions~\cite{DBLP:journals/corr/WinterBR13}
extend regular expressions with a least fixed point operator $\mu$. 
Our definition elides the Kleene star operator because it can be defined 
with the fixed point operator: $e^* = \mu\alpha. e\conc\alpha +
\varepsilon$. 

\begin{definition}[\CFE]
  Let $A$ be a denumerable set of \emph{placeholders} disjoint from
  $\Sigma$. The  set  $\CFExp$ of 
  \emph{\cfe s} is defined inductively by
  \bda{lcl}
  e, f & ::= & \phi \mid \varepsilon \mid x \in \Sigma \mid \alpha \in A
     \mid e + f \mid e \conc f \mid \mu \alpha. e
  \eda
\end{definition}
We only consider closed \cfe s where
(A) all placeholders are bound by some enclosing $\mu$-operator and (B)
the placeholder introduced by a $\mu$-operator must be distinct from
all enclosing bindings $\mu\alpha$.
Requirement (A) guarantees that reduction of a {\cfe} does not get stuck
whereas requirement (B) ensures that there are no name clashes
when manipulating a \cfe.

While Winter et al \cite{DBLP:journals/corr/WinterBR13} define the
semantics of a context-free expression by coalgebraic means, we define
its meaning with a natural-deduction style big-step semantics.

\begin{definition}[Big-Step Semantics] The reduction relation ${\reduce}
  \subseteq \CFExp \times \Sigma^*$ is defined inductively by the
  following inference rules.
\label{def:red-sem}
  \bda{c}
  \varepsilon \reduce \varepsilon
  \ \ \ \
  x \reduce x
  \ \ \ \
  \myirule{e \reduce w}
          {e + f \reduce w}
  \ \ \ \  
  \myirule{f \reduce w}
          {e + f \reduce w}
  \ \ \ \
  \myirule{e \reduce v \ \ f \reduce w}
          { e \conc f \reduce v \conc w}
  \ \ \ \ 
  \myirule{\apply{[\alpha \mapsto \mu \alpha.e]}{e} \reduce w}
          {\mu \alpha.e \reduce w}
  \eda
In the last rule, we write $\apply{[\alpha \mapsto \mu \alpha.e]}{e}$
to denote the expression obtained by replacing
all occurrences of placeholder $\alpha$ in $e$ by $\mu \alpha.e$. If
$\mu\alpha.e$ is closed, then requirement (B) ensures that there is no
inadvertent capture of placeholders.
  
We further define $L(e) = \{ w \in\Sigma^* \mid e \reduce w \}$.
\end{definition}

As an immediate consequence of the last rule, we see that unfolding does
not affect the language.
\begin{lemma}\label{lemma:unfold-cfe}
  $L (\mu\alpha.e) = L (\apply{[\alpha \mapsto \mu \alpha.e]}{e})$.
\end{lemma}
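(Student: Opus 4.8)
The plan is to unfold the definition of $L$ and thereby reduce the claimed set equality to a statement about the reduction relation $\reduce$, after which each inclusion is dispatched by a one-step argument on the big-step rules. By the definition of $L$ in Definition~\ref{def:red-sem}, it suffices to show that for every word $w \in \Sigma^*$ we have $\mu\alpha.e \reduce w$ if and only if $\apply{[\alpha \mapsto \mu \alpha.e]}{e} \reduce w$.

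For the ``if'' direction I would simply apply the last inference rule of Definition~\ref{def:red-sem}: given a derivation of $\apply{[\alpha \mapsto \mu \alpha.e]}{e} \reduce w$, that rule has exactly this judgment as its premise and $\mu\alpha.e \reduce w$ as its conclusion, so we obtain the desired derivation in a single step. This inclusion is thus immediate and carries no subtlety.

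For the ``only if'' direction I would argue by inversion on the derivation of $\mu\alpha.e \reduce w$. The key observation is that the $\mu$-rule is the unique rule of the big-step semantics whose conclusion can have the syntactic form $\mu\alpha.e \reduce w$: the conclusions of all other rules are headed by $\varepsilon$, by a symbol $x$, by a sum $e + f$, or by a concatenation $e \conc f$, none of which matches a $\mu$-expression. Hence any derivation of $\mu\alpha.e \reduce w$ must end with an application of the $\mu$-rule, whose premise is precisely $\apply{[\alpha \mapsto \mu \alpha.e]}{e} \reduce w$; reading off that premise yields the claim.

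The only point requiring care, and the nearest thing to an obstacle, is justifying the inversion step, namely that no other rule produces a conclusion headed by $\mu$. This is a routine syntactic inspection of the finitely many rules in Definition~\ref{def:red-sem}, so the overall argument is essentially immediate, in keeping with the remark preceding the statement.
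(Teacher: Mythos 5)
Your proposal is correct and matches the paper's reasoning: the paper offers no separate proof, stating the lemma as an immediate consequence of the last rule of Definition~\ref{def:red-sem}, which is exactly your two-step argument (apply the $\mu$-rule for one inclusion, invert it for the other, noting it is the only rule whose conclusion is headed by $\mu$). Your explicit justification of the inversion step is a fine, if routine, elaboration of what the paper leaves implicit.
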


\begin{definition}[Containment]
  Let $e$ be a context-free expression or regular expression and let $r$ be a regular expression.
  We define $e \leq r$ iff $L(e) \subseteq L(r)$.
\end{definition}

We express partial functions as total functions composed with
lifting as follows. Let $A$ and $B$ be sets.
The set $\Maybe\ B$ consists of elements which are either $\Nothing$ or of the form $\Just\ b$, for
$b\in B$. Thus a total function $f'$ of type $A \to \Maybe\ B$ corresponds uniquely to a partial function
$f$ from $A$ to $B$: for $a\in A$, if $f(a)$ is not defined, then $f' (a) = \Nothing$; if $f(a)=b$ is defined, then
$f' (a) = \Just\ b$; and vice versa.

\section{Parsing as Type Inhabitation}
\label{sec:parsing}

Parsing for regular expressions has been phrased  as a type inhabitation problem~\cite{cduce-icalp04}.
We follow suit and generalize this approach to parsing for context-free expressions.
For our purposes, parse trees are generated by the following grammar.

\begin{definition}[Parse Trees for \cfe s]
\label{def:parse-trees}  
  \bda{lcl}
   p,q & ::= & \Eps \mid \Sym\ x \mid \Left\ p \mid \Right\ q \mid
   \Seq pq \mid \Fold\ p
  \eda
\end{definition}

Like a derivation tree for a context-free grammar, 
a parse tree is a structured representation of the derivation of a word
from some \cfe. 
The actual word can be obtained by flattening the parse tree.

\begin{definition}[Flattening]
  \bda{l@{\qquad}l}
  \flatten{\Eps} = \varepsilon
  &
  \flatten{\Sym\ x} = x
  \\
  \\
  \flatten{\Left\ p} = \flatten{p}
  &
  \flatten{\Right\ q} = \flatten{q}
  \\
  \\
  \flatten{\Seq pq} = \flatten{p} \conc \flatten{q}
  &
  \flatten{\Fold\ p} = \flatten{p}
  \eda
\end{definition}

Compared to derivation trees whose signatures depend on the underlying grammar, parse trees are generic, but their validity depends on the particular \cfe.
The connection between parse trees and \cfe s is made
via the following typing relation where we interpret \cfe s
as types and parse trees as values.
 
 \begin{definition}[Valid Parse Trees, $\turns p : e$]
 \label{def:typing-relation}  

 \bda{ccc}
  \turns \Eps : \varepsilon
  &
  \turns \Sym\ x : x
  &
  \myirule{\turns p : e \ \ \turns q : f}
          {\turns \Seq pq : e \conc f}
  \\
  \\
  \myirule{\turns p : e}
          {\turns \Left\ p: e + f }
  &
  \myirule{\turns p : f}
          {\turns  \Right\ p : e + f}
  &
  \myirule{\turns p : \apply{[\alpha \mapsto \mu \alpha.e]}{e}}
          {\turns \Fold\ p : \mu \alpha.e}
  \eda

\end{definition}

 We consider $\varepsilon$ as a singleton type with
 value $\Eps$ as its only inhabitant.
 The concatenation operator~$\conc$ effectively corresponds to a pair
 where pair values are formed via the binary constructor $\SeqOp$.
 We treat~$+$ as a disjoint sum with the respective
 injection constructors $\Left$ and $\Right$.
 Recursive $\mu$-expressions represent iso-recursive types
 with $\Fold$
 denoting the isomorphism between the unrolling of a recursive type
 and the recursive type itself.

The following results establish that parse trees
obtained via the typing relation can be related to words
derivable in the language of \cfe\ and vice versa.

\begin{lemma}
\label{le:flatten-cfg-proof}
  Let $e$ be a \cfe\ and $w$ be a word. If $e \reduce w$,
  then there exists a parse tree $p$ such that
   $\turns p : e$ where $\flatten{p}=w$.
\end{lemma}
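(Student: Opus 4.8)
The plan is to prove the statement by induction on the derivation of $e \reduce w$, since the big-step semantics in Definition~\ref{def:red-sem} is defined inductively by inference rules. The typing relation of Definition~\ref{def:typing-relation} is structured in close correspondence with these reduction rules, so each reduction rule should suggest a matching parse-tree constructor. Concretely, I would show simultaneously that the parse tree $p$ can be chosen so that $\turns p : e$ and $\flatten{p} = w$, strengthening the statement to carry the flattening condition through the induction.

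\medskip
\noindent
First I would treat the base cases. For $\varepsilon \reduce \varepsilon$, take $p = \Eps$; then $\turns \Eps : \varepsilon$ holds by the first typing axiom and $\flatten{\Eps} = \varepsilon$ by definition. For $x \reduce x$, take $p = \Sym\ x$; then $\turns \Sym\ x : x$ and $\flatten{\Sym\ x} = x$. Next I would handle the inductive cases by pattern-matching on the last reduction rule applied:

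\begin{itemize}
\item If the derivation ends with the rule for $e + f \reduce w$ via $e \reduce w$, the induction hypothesis yields a $p$ with $\turns p : e$ and $\flatten{p} = w$; set the witness to $\Left\ p$, so that $\turns \Left\ p : e + f$ and $\flatten{\Left\ p} = \flatten{p} = w$. The symmetric case using $f \reduce w$ produces $\Right\ p$ analogously.
\item If it ends with the concatenation rule $e \conc f \reduce v \conc w$ from $e \reduce v$ and $f \reduce w$, apply the induction hypothesis twice to obtain $p$ with $\turns p : e$, $\flatten{p}=v$ and $q$ with $\turns q : f$, $\flatten{q}=w$; set the witness to $\Seq pq$, giving $\turns \Seq pq : e \conc f$ and $\flatten{\Seq pq} = \flatten{p}\conc\flatten{q} = v \conc w$.
\item If it ends with the $\mu$-rule $\mu\alpha.e \reduce w$ from $\apply{[\alpha \mapsto \mu \alpha.e]}{e} \reduce w$, the induction hypothesis gives a $p$ with $\turns p : \apply{[\alpha \mapsto \mu \alpha.e]}{e}$ and $\flatten{p}=w$; set the witness to $\Fold\ p$, so that $\turns \Fold\ p : \mu\alpha.e$ by the fold typing rule and $\flatten{\Fold\ p} = \flatten{p} = w$.
\end{itemize}

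\medskip
\noindent
I expect this proof to be essentially routine, because the parse-tree grammar and the typing rules were evidently designed to mirror the reduction rules one-for-one; each reduction rule has a unique corresponding typing rule and parse-tree constructor, and the flattening function was defined precisely so that it commutes with these correspondences. The only point requiring slight care is the $\mu$-case, where the induction hypothesis is applied to the unfolded expression $\apply{[\alpha \mapsto \mu \alpha.e]}{e}$ rather than to a structurally smaller \cfe; the induction is nevertheless well-founded because it proceeds on the height of the \emph{reduction derivation}, not on the structure of the expression, and the premise of the $\mu$-rule is a strictly smaller derivation. I would make this explicit by stating at the outset that the induction is on the derivation of $e \reduce w$, so that the apparent growth of the expression in the fold case poses no obstacle.
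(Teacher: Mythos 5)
Your proof is correct and follows exactly the paper's approach: the paper proves Lemma~\ref{le:flatten-cfg-proof} by induction on the derivation of $e \reduce w$ (stating only this, with the case analysis left implicit), and your case-by-case construction — including the observation that the $\mu$-case is unproblematic because the induction is on the reduction derivation rather than on expression structure — is precisely the routine argument the paper intends.
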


\begin{lemma}
 \label{le:parse-trees-flattening}
  Let $e$ be a \cfe\ and $p$ a parse tree. If $\turns p : e$,
  then $e \reduce \flatten{p}$.
\end{lemma}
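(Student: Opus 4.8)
The plan is to prove Lemma~\ref{le:parse-trees-flattening} by structural induction on the derivation of the typing judgment $\turns p : e$. Since both the typing relation (Definition~\ref{def:typing-relation}) and the big-step semantics (Definition~\ref{def:red-sem}) are defined by the same shape of inference rules, one for each syntactic constructor, the induction should proceed smoothly with one case per typing rule. The goal in each case is to exhibit a reduction $e \reduce \flatten{p}$, using the induction hypotheses supplied by the premises of the typing rule together with the matching reduction rule.

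First I would dispatch the base cases. For $\turns \Eps : \varepsilon$ we have $\flatten{\Eps} = \varepsilon$, and the axiom $\varepsilon \reduce \varepsilon$ gives the result directly. Similarly, for $\turns \Sym\ x : x$ we have $\flatten{\Sym\ x} = x$ and the axiom $x \reduce x$ applies. The two injection cases are equally direct: if $\turns \Left\ p : e + f$ from $\turns p : e$, the induction hypothesis yields $e \reduce \flatten{p}$, and since $\flatten{\Left\ p} = \flatten{p}$, the left-injection reduction rule for $+$ gives $e + f \reduce \flatten{p}$; the $\Right$ case is symmetric using the right-injection rule. For the concatenation case $\turns \Seq pq : e \conc f$ from premises $\turns p : e$ and $\turns q : f$, the two induction hypotheses give $e \reduce \flatten{p}$ and $f \reduce \flatten{q}$, and the concatenation reduction rule yields $e \conc f \reduce \flatten{p} \conc \flatten{q}$, which equals $\flatten{\Seq pq}$ by definition of flattening.

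The only case that requires genuine thought is the fixed-point rule for $\Fold$. Here we have $\turns \Fold\ p : \mu\alpha.e$ derived from $\turns p : \apply{[\alpha \mapsto \mu\alpha.e]}{e}$. The induction hypothesis gives a reduction of the \emph{unfolded} expression, namely $\apply{[\alpha \mapsto \mu\alpha.e]}{e} \reduce \flatten{p}$. The corresponding reduction rule in Definition~\ref{def:red-sem} for $\mu$ has exactly this unfolded reduction as its premise and concludes $\mu\alpha.e \reduce \flatten{p}$; since $\flatten{\Fold\ p} = \flatten{p}$, this closes the case. I expect this to be the main (and only mild) obstacle, since it is the one place where the induction is not on a strict subterm of the type $e$ but rather passes through the substituted unfolding. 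The crucial point is that the induction is on the structure of the \emph{typing derivation}, not on the structure of the type, so the premise derivation $\turns p : \apply{[\alpha \mapsto \mu\alpha.e]}{e}$ is a strictly smaller derivation and the induction hypothesis applies to it regardless of the fact that the unfolded type may be larger than $\mu\alpha.e$. With this observation in place, all cases are covered and the lemma follows.
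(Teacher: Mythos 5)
Your proposal is correct and takes essentially the same route as the paper: induction on the typing derivation $\turns p : e$, with each typing rule matched against the corresponding reduction rule of Definition~\ref{def:red-sem}, and the $\Fold$/$\mu$ case discharged by applying the induction hypothesis to the unfolded expression. Your explicit remark that the induction is on the derivation rather than on the type (so the unfolded premise is still a smaller derivation) is the same implicit justification the paper relies on, just spelled out more carefully.
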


\begin{example}
  Let $p = \Fold\ (\Left\ (\Seq (\Sym\ x) \ (\Seq (\Right\  \Eps) \
  (\Sym\ x))))$ be a parse tree and consider
  the expression $e = \mu \alpha. x \conc \alpha + \varepsilon$.
  We find that $\turns p : e$
  and $\flatten{p} = x \conc x$.
\end{example}

Instead of tackling the parsing problem, we solve the more general
problem of coercing parse trees of context-free expressions 
into parse trees of regular expressions and vice versa.

\section{Containment via Reachability}
\label{sec:containment-reachability}

In this section, we consider the problem of determining containment $(e \le r)?$ between a context-free
language represented by some expression $e$ and a regular language represented by regular expression
$r$. This problem is decidable. The standard algorithm constructs a context-free grammar for the intersection $L
(e) \cap \overline{L (r)}$ and tests it for emptiness.

We proceed differently to obtain some computational content from the
proof of containment. We first rephrase the 
containment problem $(e \le r)?$ as a reachability problem. Then, in Section~\ref{sec:coercions}, we
extract computational content by deriving suitable coercions as mappings between
the respective parse trees of $e$ and $r$.

There are coercions in both directions:\vspace{-0.5\baselineskip}
\begin{enumerate}
\item\label{item:1} a total coercion from $L(e)$ to $L(r)$ as a mapping of type $e \to r$ and
\item\label{item:2} a partial coercion from $L(r)$ to $L (e)$ as a mapping of type $r \to \Maybe\ e$,
\end{enumerate}

The partial coercion under~\ref{item:2} can be considered as a
parser specialized to words from $L(r)$. Thus, the partial coercion from $\Sigma^*
\to \Maybe\ e$ is a general parser for $L(e)$.

We say that a regular expression $r'$ is reachable from $e \in \CFExp$ and $r$ if there is some word
$w\in L (e)$ such that $L (r') = w/L(r) = \{ v \in\Sigma^* \mid w\conc v \in L (r)\}$. To obtain a
finite representation, we
define reachability in terms of canonical representatives of derivatives.

\begin{definition}[Reachability]
  Let $e$ be a context-free expression and $r$ a regular expression.
  We define the set of reachable expressions as $\reach{e}{r} = \{
  \simp{\deriv{r}{w}} \mid w\in\Sigma^*,  e \reduce w \}$. 
\end{definition}

\begin{theorem}
\label{th:containment-reachability}  
  Let $e$ be a context-free expression and $r$ be a regular expression.
  Then $e \leq r$ iff each expression in $\reach{e}{r}$
  is nullable.
\end{theorem}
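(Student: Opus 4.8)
The plan is to prove both directions of the biconditional by connecting the derivative-based notion of reachability to the language-theoretic containment $L(e) \subseteq L(r)$. The unifying observation is the standard derivative fact that for any word $w$ and regular expression $r$, we have $w \in L(r)$ iff $\deriv{r}{w}$ is nullable, where $\deriv{r}{w}$ denotes the iterated derivative $\deriv{\ldots\deriv{r}{x_1}\ldots}{x_n}$ for $w = x_1 \conc \ldots \conc x_n$. This follows by induction on the length of $w$ from the defining property $L(\deriv{r}{x}) = \{ v \mid x \conc v \in L(r)\}$, together with the observation that nullability is invariant under the canonicalization $\simp{\cdot}$, since by the lemma on similarity $\simp{s} \semeq s$ and hence $L(\simp{s}) = L(s)$.

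First I would establish the forward direction ($e \le r$ implies every element of $\reach{e}{r}$ nullable). Take any $r' \in \reach{e}{r}$; by definition $r' = \simp{\deriv{r}{w}}$ for some $w$ with $e \reduce w$, i.e.\ $w \in L(e)$. Since $e \le r$ means $L(e) \subseteq L(r)$, we get $w \in L(r)$, so $\varepsilon \in w/L(r) = L(\deriv{r}{w})$, which says exactly that $\deriv{r}{w}$ is nullable; canonicalization preserves this, so $r' = \simp{\deriv{r}{w}}$ is nullable.

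For the converse, I would argue contrapositively. Suppose $e \not\le r$, so there is some $w \in L(e) \setminus L(r)$. Since $e \reduce w$, the expression $\simp{\deriv{r}{w}}$ lies in $\reach{e}{r}$. Because $w \notin L(r)$, the empty word is not in $w/L(r) = L(\deriv{r}{w})$, so $\deriv{r}{w}$ is non-nullable, and canonicalization preserves non-nullability; hence $\reach{e}{r}$ contains a non-nullable expression, contradicting the assumption. The key technical facts I would invoke are Lemma~\ref{lemma:unfold-cfe} and the reduction relation's characterization $L(e) = \{ w \mid e \reduce w\}$ from Definition~\ref{def:red-sem} to move freely between $e \reduce w$ and $w \in L(e)$.

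The main obstacle I expect is purely bookkeeping rather than conceptual: carefully stating and proving the iterated-derivative correctness property $L(\deriv{r}{w}) = w / L(r)$ and verifying that $\simp{\cdot}$ preserves both nullability and the language. Neither is deep, but the statement of reachability quietly interleaves the single-symbol derivative of the \emph{definition} with the multi-symbol derivative needed in the proof, so I would take care to make this extension explicit (it is an easy induction on $|w|$ using the single-step defining equation). Everything else is a direct translation between the set-theoretic membership $w \in L(r)$ and nullability of the residual, so the two directions follow symmetrically once this lemma is in hand.
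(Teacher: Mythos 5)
Your proposal is correct and follows essentially the same route as the paper's proof, which compresses your two directions into a single chain of equivalences: $e \leq r$ iff ($e \reduce w$ implies $r \reduce w$ for all $w$) iff ($e \reduce w$ implies $\deriv{r}{w}$ nullable) iff every element of $\reach{e}{r}$ is nullable. Your version merely makes explicit the bookkeeping the paper leaves implicit, namely the iterated-derivative fact $L(\deriv{r}{w}) = w/L(r)$ and the preservation of nullability under $\simp{\cdot}$.
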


By finiteness of dissimilar descendants the set $\reach{e}{r}$ is finite
and can be computed effectively via a least fixed point construction.
Thus, we obtain a new algorithm for containment by reduction to decidable reachability and nullability.

Instead of showing the least fixed point construction, we give a characterization of the set of
reachable expressions in terms of a natural-deduction style proof system.
The least fixed point construction follows from the proof rules.
\begin{figure}[tp]
  \bda{c}
  \ruleform{\Gamma \turns \reachRel{r}{e}{S}}
  \\
  \\
\rlabel{Eps} \ \Gamma \turns \reachRel{r}{\varepsilon}{\{\simp{r}\}}
\qquad
\rlabel{Phi} \ \Gamma \turns \reachRel{r}{\phi}{\{ \}}
\qquad
\rlabel{Sym} \ \Gamma \turns \reachRel{r}{x}{\{\simp{\deriv{r}{x}}\}}
\\ \\
\rlabel{Alt} \ \myirule{\Gamma \turns \reachRel{r}{e}{S_1}
         \qquad \Gamma \turns \reachRel{r}{f}{S_2}}
        {\Gamma \turns \reachRel{r}{e+f}{S_1 \cup S_2}}
\\ \\
\rlabel{Seq} \ \myirule{\Gamma \turns \reachRel{r}{e}{\{r_1,\ldots,r_n\}}
         \qquad \Gamma \turns \reachRel{r_i}{f}{S_i} \ \mbox{for $i=1,\ldots,n$}
       }
        {\Gamma \turns \reachRel{r}{e\conc f}{S_1 \cup \ldots \cup S_n}}
\\ \\
\rlabel{Rec} \ \myirule{\Gamma \cup \{ \reachRel{r}{\mu\alpha.f}{S} \} \turns \reachRel{r}{\apply{[\alpha\mapsto\mu\alpha.f]}{f}}{S}}
        {\Gamma \turns \reachRel{r}{\mu\alpha.f}{S}}
\ \ \ \
\rlabel{Hyp} \ \myirule{\reachRel{r}{\mu\alpha.f}{S} \in \Gamma}
        {\Gamma \turns \reachRel{r}{\mu\alpha.f}{S}}
\eda
\caption{Reachability proof system}
\label{fig:reachability-proof-system}
\end{figure}

The system in Figure~\ref{fig:reachability-proof-system} defines the judgment  $
\reachRel{r}{e}{S}$ where $e\in\CFExp$, $r$ a regular expression, 
and $S$ is a set of regular expressions in canonical form. It makes use of a set $\Gamma$ of
hypothetical proof judgments of the same form. The meaning of a
judgment is that $S$ (over)approximates $\reach{e}{r}$ (see
upcoming Lemmas~\ref{le:sound-reach-characterization}
and~\ref{le:complete-reach-characterization}). 

The interesting rules are \rlabel{Rec} and \rlabel{Hyp}.
In rule \rlabel{Hyp}, we look up a proof judgment for a context-free expression with topmost operator $\mu$
from the assumption set $\Gamma$.
Such proof judgments are added to $\Gamma$ in rule \rlabel{Rec}.
Hence, we can make use of to be verified proof judgments in subsequent proof steps.
Hence, the above proof system is defined coinductively.
Soundness of the proof system is guaranteed by the fact that we unfold the fixpoint operator $\mu$ in rule \rlabel{Rec}.
We can indeed show soundness and completeness: the set $\reach{e}{r}$ is derivable and any derivable set $S$ is a superset
of $\reach{e}{r}$.

\begin{lemma}
\label{le:sound-reach-characterization}  
  Let $e$ be a context-free expression and $r$ be a regular expression.
  Then, $\turns \reachRel{r}{e}{\reach{e}{r}}$ is derivable.
\end{lemma}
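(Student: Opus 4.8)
The plan is to show that the \emph{exact} reach set $\reach{e}{r}$ is derivable by matching the recursive structure of $\reach{\cdot}{\cdot}$ against the proof rules, and then to make the resulting (potentially infinite) derivation legitimate by a finiteness argument that forces every branch to close via \rlabel{Hyp}.

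First I would record the defining equations of $\reach{\cdot}{\cdot}$ read off directly from the big-step semantics: $\reach{\varepsilon}{r} = \{\simp{r}\}$, $\reach{\phi}{r}=\{\}$, $\reach{x}{r}=\{\simp{\deriv{r}{x}}\}$, $\reach{e+f}{r}=\reach{e}{r}\cup\reach{f}{r}$, and $\reach{\mu\alpha.f}{r}=\reach{[\alpha\mapsto\mu\alpha.f]f}{r}$, the last being an immediate consequence of Lemma~\ref{lemma:unfold-cfe}. These equations align exactly with the conclusions of rules \rlabel{Eps}, \rlabel{Phi}, \rlabel{Sym}, \rlabel{Alt}, and with the premise of \rlabel{Rec}. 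The only non-immediate equation is the one for concatenation, $\reach{e\conc f}{r} = \bigcup_{r'\in\reach{e}{r}}\reach{f}{r'}$, matching \rlabel{Seq}: it rests on the iterated-derivative identity $\deriv{r}{v\conc w}=\deriv{\deriv{r}{v}}{w}$ together with the fact that canonicalization commutes with derivation up to equality of canonical forms, i.e.\ $\simp{\deriv{\simp{s}}{w}}=\simp{\deriv{s}{w}}$. This last point holds because similarity is preserved under taking derivatives and $\simp{s}\sim s$; it is what guarantees that re-deriving from the canonical representatives $r'\in\reach{e}{r}$ recovers precisely the set $\reach{e\conc f}{r}$.

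With these equations in hand, I would construct the derivation by recursion on the shape of $e$, applying in each case the uniquely matching rule with the set annotation chosen to be the exact reach set, so that every premise is again of the form $\reachRel{r'}{e'}{\reach{e'}{r'}}$ for some descendant $r'$ of $r$ and some (substituted) subexpression $e'$. For the $\mu$ case, at a goal $\reachRel{r'}{\mu\beta.g}{\reach{\mu\beta.g}{r'}}$ I would apply \rlabel{Hyp} if this exact judgment already lies in $\Gamma$, and otherwise apply \rlabel{Rec}, adding the judgment to $\Gamma$ and continuing with its one-step unfolding $[\beta\mapsto\mu\beta.g]g$, whose reach set is unchanged by the $\mu$-equation above. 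Since we always annotate a $\mu$-goal with $\reach{\mu\beta.g}{r'}$, a function of the pair $(\mu\beta.g,\,r')$, revisiting the same pair reproduces the same annotation $S$ and \rlabel{Hyp} applies verbatim.

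The main obstacle, and the place where the coinductive reading is essential, is arguing that this process yields a legitimate finite cyclic derivation rather than an ill-founded infinite tree. The key is a double finiteness observation. By the Finiteness theorem every annotation $r'$ ranges over the finite set $\Desc{r}$ of canonical descendants; and every $\mu$-expression encountered during repeated unfolding is a $\mu$-subterm of the original $e$, because substituting $\mu\beta.g$ for $\beta$ never introduces a $\mu$-subterm that was not already a subterm of $e$. Hence only finitely many distinct $\mu$-judgments $\reachRel{r'}{\mu\delta.k}{\reach{\mu\delta.k}{r'}}$ can ever arise. Since each \rlabel{Rec} step adds a fresh such judgment to $\Gamma$, and the non-$\mu$ structure between two $\mu$-steps is processed by ordinary structural recursion on strictly smaller subexpressions (with \rlabel{Seq} branching over the finite set $\reach{e}{r}\subseteq\Desc{r}$), every branch must, after finitely many \rlabel{Rec} applications, revisit a judgment already in $\Gamma$ and be closed by \rlabel{Hyp}. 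This bounds the derivation and discharges the coinductive definition, establishing that $\turns\reachRel{r}{e}{\reach{e}{r}}$ is derivable.
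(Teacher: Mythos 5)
Your overall route is the same as the paper's: the paper also derives the judgment with the exact annotation $\reach{e}{r}$, relying on the same reach-set equations (in particular the concatenation identity $\reach{e\conc f}{r}=\bigcup_{r'\in\reach{e}{r}}\reach{f}{r'}$, which rests on $\deriv{r}{v\conc w}=\deriv{\deriv{r}{v}}{w}$ and on $\simp{\cdot}$ being compatible with derivatives), applying \rlabel{Rec} at fresh $\mu$-goals and \rlabel{Hyp} at re-encounters, with $\Desc{r}$ finite by Brzozowski's finiteness theorem. Your bookkeeping by \emph{pairs} ($\mu$-expression, state) is also the right one, since \rlabel{Hyp} only fires when the state component matches as well. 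However, your termination argument contains a concretely false step. You claim that ``substituting $\mu\beta.g$ for $\beta$ never introduces a $\mu$-subterm that was not already a subterm of $e$.'' This fails as soon as $\mu$-binders are nested and the inner body mentions the outer placeholder. Take $e=\mu\alpha.\,x\conc(\mu\beta.\,\alpha\conc\beta+\varepsilon)+\varepsilon$: one unfolding step yields $x\conc(\mu\beta.\,e\conc\beta+\varepsilon)+\varepsilon$, and the expression $\mu\beta.\,e\conc\beta+\varepsilon$ is \emph{not} a subterm of $e$ (the corresponding subterm $\mu\beta.\,\alpha\conc\beta+\varepsilon$ contains $\alpha$ free, so it is a different expression). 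As written, your count of ``finitely many distinct $\mu$-judgments'' is unjustified, and with it the claim that every branch eventually closes via \rlabel{Hyp}.

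The repair is exactly the device the paper introduces for this purpose. The correct invariant is that every $\mu$-expression encountered during repeated unfolding is the image $\apply{\psi}{\mu\delta.k}$ of a $\mu$-subterm $\mu\delta.k\in\SubTerm{e}$ under the fixed closing substitution $\psi=\Subst{e}$ (the ``binding of $\mu$-expressions''), with Lemma~\ref{le:psi} guaranteeing that one-step unfolding of $\apply{\psi}{\mu\alpha.f}$ yields precisely $\apply{\psi}{f}$. This bounds the judgment space by $\SubTerm{e}\times\Desc{r}$ and restores your finiteness count. It also lets the paper dispense with your separate cyclicity analysis: it proves the generalized statement $\Gamma\turns\reachRel{s}{\apply{\psi}{f}}{\reach{\apply{\psi}{f}}{s}}$ by plain structural induction on $f\in\SubTerm{e}$, where $\Gamma$ carries the judgments for the enclosing binders and placeholder occurrences are discharged by \rlabel{Hyp}; the lemma then follows at the pair $(e,r)$ because $\apply{\psi}{e}=e$ and $\Gamma=\{\}$ there. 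Your argument becomes sound once you replace ``$\mu$-subterm of $e$'' by ``$\psi$-instance of a $\mu$-subterm of $e$'' throughout the finiteness step; without that correction the unfolding process could, a priori, keep producing syntactically new closed $\mu$-expressions and the derivation would never be forced to close.
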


\begin{lemma}
\label{le:complete-reach-characterization}    
  Let $e$ be a context-free expression, $r$ be a regular expression
  and $S$ be a set of expressions such that $\turns \reachRel{r}{e}{S}$.
  Then, we find that $S \supseteq \reach{e}{r}$.
\end{lemma}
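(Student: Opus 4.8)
The goal is to show that every element of $\reach{e}{r}$ lies in $S$, that is, for each word $w$ with $e \reduce w$ we must have $\simp{\deriv{r}{w}} \in S$. Since the reachability proof system is coinductive, a derivation of $\turns \reachRel{r}{e}{S}$ may be infinite, so induction on it is impossible. Instead I would induct on the big-step derivation $D$ of $e \reduce w$, which is always finite, and at each step inspect the matching rule of the reachability derivation. The shape of $e$ almost fixes the last reachability rule applied: $\varepsilon$ forces \rlabel{Eps}, a symbol $x$ forces \rlabel{Sym}, $\phi$ forces \rlabel{Phi}, $e_1+e_2$ forces \rlabel{Alt}, $e_1\conc e_2$ forces \rlabel{Seq}, and $\mu\alpha.f$ admits exactly the two rules \rlabel{Rec} and \rlabel{Hyp}.

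To thread the assumption set $\Gamma$ through the induction I would prove the following generalized statement: if $\Gamma \turns \reachRel{r}{e}{S}$ and $D :: e \reduce w$, and if every hypothesis $(\reachRel{r'}{\mu\beta.g}{S'}) \in \Gamma$ is \emph{valid up to $|D|$} --- meaning $\simp{\deriv{r'}{w''}} \in S'$ for every $\mu\beta.g \reduce w''$ whose derivation has size at most $|D|$ --- then $\simp{\deriv{r}{w}} \in S$. The lemma follows by taking $\Gamma = \emptyset$, where the validity condition is vacuous. The argument is by strong induction on the size $|D|$, using monotonicity of the validity condition (validity up to $|D|$ entails validity up to any smaller bound) when feeding the induction hypothesis to strictly smaller subderivations.

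The routine cases rely on two standard facts about Brzozowski derivatives: $\deriv{r}{\varepsilon} = r$ and the composition law $\deriv{r}{v \conc u} = \deriv{\deriv{r}{v}}{u}$, together with the fact that similarity is a congruence for derivatives, so that $\simp{\cdot}$ commutes with taking derivatives. For \rlabel{Eps} and \rlabel{Sym} the claim is immediate, \rlabel{Phi} is vacuous, and \rlabel{Alt} applies the induction hypothesis to the strictly smaller subderivation and uses $S_1, S_2 \subseteq S$. For \rlabel{Seq} I would first apply the induction hypothesis to $e_1 \reduce v$ to locate $\simp{\deriv{r}{v}} = r_j$ among $\{r_1,\dots,r_n\}$, then apply it to $e_2 \reduce u$ with start expression $r_j$ to obtain $\simp{\deriv{r_j}{u}} \in S_j \subseteq S$; the composition law and the congruence then identify $\simp{\deriv{r_j}{u}}$ with $\simp{\deriv{r}{v\conc u}} = \simp{\deriv{r}{w}}$.

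The crux is the $\mu$ case, where the interplay between the coinductive reachability proof and the inductive big-step derivation is resolved. In the \rlabel{Hyp} subcase the hypothesis $\reachRel{r}{\mu\alpha.f}{S}$ lies in $\Gamma$, and validity \emph{up to} $|D|$ delivers $\simp{\deriv{r}{w}} \in S$ for the current $w$ directly; this is precisely why the bound in the invariant must read $\le |D|$ rather than $< |D|$. In the \rlabel{Rec} subcase the big-step derivation unfolds $\mu\alpha.f \reduce w$ to the strictly smaller $\apply{[\alpha\mapsto\mu\alpha.f]}{f} \reduce w$, while the reachability proof extends $\Gamma$ with $\reachRel{r}{\mu\alpha.f}{S}$. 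Before descending into the unfolding, I must certify that this newly added hypothesis is valid up to the smaller size: for any $\mu\alpha.f \reduce w''$ of size strictly below $|D|$, the outer strong induction applied to the original \rlabel{Rec}-derivation (with context $\Gamma$) yields $\simp{\deriv{r}{w''}} \in S$. The main obstacle is making these two size bounds line up --- certifying the freshly introduced hypothesis for exactly the range of derivation sizes that the recursive call and any later \rlabel{Hyp} use will demand --- and the $\le |D|$ formulation of the invariant is the device that closes both the \rlabel{Hyp} and \rlabel{Rec} subcases simultaneously.
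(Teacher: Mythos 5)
Your proof is correct, but it takes a genuinely different route from the paper's. The paper argues indirectly: it characterizes $\reach{e}{r}$ via least-fixed-point iterates $\FixStep{n}{e}{r}$ (Lemmas~\ref{le:reach-soundness-left} and~\ref{le:reach-soundness-right}), proves by derivation surgery that derivability is preserved under $\mu$-unfolding (Lemma~\ref{le:unfold-equiv}, replacing \rlabel{Hyp} leaves by copies of the whole derivation, which in turn needs the strengthening/weakening Lemmas~\ref{le:weaker-env} and~\ref{le:stronger-env}), and then shows via an auxiliary $\mu$-free reachability function $\ReachFUnfold{e}{r}$ that for every $n$ some $k$-fold unfolding $\unfold{k}{e}$ forces $S \supseteq \FixStep{n}{e}{r}(e,r)$ (Lemma~\ref{le:unfold-approx}); completeness follows by contradiction. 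You instead work directly against the big-step semantics: strong induction on the size of $D :: e \reduce w$, with hypotheses in $\Gamma$ interpreted semantically as ``valid up to $|D|$''. That step-indexed invariant is exactly the right device, and your analysis of the crux is sound: the $\le |D|$ (rather than $< |D|$) bound discharges \rlabel{Hyp} at the current word, while re-applying the outer induction hypothesis to the same \rlabel{Rec} judgment at strictly smaller derivations (monotonicity of validity makes $\Gamma$'s hypotheses available there) certifies the freshly added hypothesis at bound $|D|-1$, which is precisely what the premise's derivation demands. Your routine cases correctly isolate the derivative facts needed ($\deriv{r}{\varepsilon} = r$, the composition law for word derivatives, and similarity being a congruence for derivatives so that $\simp{\deriv{\simp{\deriv{r}{v}}}{u}} = \simp{\deriv{r}{v \conc u}}$); the paper relies on the same facts implicitly. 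One small correction: your stated motivation is slightly off, since with \rlabel{Hyp} closing branches the derivations of $\turns \reachRel{r}{e}{S}$ are finite trees; the genuine obstacle to inducting on them is that \rlabel{Hyp} erases how many unfoldings a given word requires, which your size index on the big-step derivation restores --- and since you only use inversion, never induction, on the reachability derivation, nothing breaks. As for what each approach buys: the paper's detour through $\FixStep{n}{e}{r}$ reuses machinery it needs anyway to compute $\reach{e}{r}$ and to prove Lemma~\ref{le:sound-reach-characterization}, whereas your argument is self-contained, avoids the substitution and weakening lemmas and the unfolding bound entirely, and is arguably more complete as it stands: the appendix proof leaves sub-statement S2 of Lemma~\ref{le:unfold-approx} essentially as a one-line sketch, while your induction carries through in full.
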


\begin{example}
  \label{ex:reach}
  Consider
   $e = \mu\alpha. x \conc (\alpha \conc y) + \varepsilon$  
  and $r = x^* \conc y^*$. It is easy to see that $\reach{e}{r} = \{ r, y^* \}$.
  Indeed, we can verify that $\{\} \turns \reachRel{r}{e}{\{ r, y^* \}}$ is derivable.
   \bda{c}
   \inferrule*[left = \rlabel{Rec}]
   {
     \inferrule*[left = \rlabel{Alt}]
     {
       \inferrule*[left = \rlabel{Seq}]
      {
         \inferrule*[left = \rlabel{Seq}]
         {
           \rlabel{Hyp} \ \  \{ \reachRel{r}{e}{\{r,y^*\}} \} \turns \reachRel{r}{e}{\{r, y^*\}} \checkmark
           \\
           \rlabel{Sym} \ \   \{ \reachRel{r}{e}{\{r, y^*\}} \} \turns \reachRel{r}{y}{\{y^*\}} \checkmark
           \\
           \rlabel{Sym} \ \   \{ \reachRel{r}{e}{\{r, y^*\}} \} \turns \reachRel{y^*}{y}{\{y^*\}} \checkmark           
         }
         {
           \{ \reachRel{r}{e}{\{r,y^*\}} \} \turns \reachRel{r}{e \conc y}{\{ y^*\}}
         }
         \\
         \rlabel{Sym} \ \ \{ \reachRel{r}{e}{\{r, y^*\}} \} \turns \reachRel{r}{x}{\{r\}} \checkmark             
       }
       {
         \{ \reachRel{r}{e}{\{r, y^*\}} \} \turns \reachRel{r}{x\conc (e \conc y)}{\{ y^*\}}
       }
       \\
       \rlabel{Eps} \ \ \{ \reachRel{r}{e}{\{r, y^*\}} \} \turns \reachRel{r}{\varepsilon}{\{ r \}} \checkmark
     }
     {
       \{ \reachRel{r}{e}{\{r, y^*\}} \} \turns \reachRel{r}{x\conc (e \conc y) + \varepsilon}{\{r, y^*\}}
     }         
   }
   {
     \{ \} \turns \reachRel{r}{e}{\{r, y^*\}}                 
   }             
   \eda
   We first apply rule \rlabel{Rec} followed by \rlabel{Alt}.
   One of the premises of \rlabel{Alt} can be verified immediately
   via \rlabel{Eps} as indicated by $\checkmark$.
   For space reasons, we write premises on top of each other.
   Next, we apply \rlabel{Seq} where one of the premises can be verified
   immediately again.
   Finally, we find another application of \rlabel{Seq}.
   $\{ \reachRel{r}{e}{\{r,y^*\}} \} \turns \reachRel{r}{e}{\{r, y^*\}}$
   holds due to \rlabel{Hyp}.
   Because the reachable set contains two elements, $r$ and $y^*$,
   we find two applications of \rlabel{Sym} and we are done.
\end{example}

\begin{example}
  As a special case, consider $e = \mu\alpha.\alpha$
  where $\reach{e}{r} = \{ \}$ for any regular expression $r$.
  The reachability proof system over-approximates and indeed we find that
  $\turns \reachRel{r}{\mu\alpha.\alpha}{S}$ for any $S$
  as shown by the following derivation
  \bda{c}
  \inferrule*[left = \rlabel{Rec}]
             {
                 \rlabel{Hyp} \ \ \{ \reachRel{r}{\mu\alpha.\alpha}{S} \} \turns \reachRel{r}{\mu\alpha.\alpha}{S}
             }
             { \turns \reachRel{r}{\mu\alpha.\alpha}{S}
             }
             
  \eda
\end{example}

\section{Coercions}
\label{sec:coercions}

Our proof system  for the reachability judgment $\reachRel{r}{e}{S}$
in Figure~\ref{fig:reachability-proof-system} provides a coinductive
characterization of the set of reachable expressions. 
Now we apply the proofs-are-programs principle to derive coercions
from derivation trees for reachability. As the proof system is
coinductive, we obtain recursive coercions from applications of the
rules \rlabel{Rec} and \rlabel{Hyp}.

Our first step is to define a term language for coercions, which are
functions on parse trees. This
language turns out to be a lambda calculus (lambda abstraction, function application, variables) with recursion
and pattern matching on parse trees. 

\begin{definition}[Coercion Terms]
  Coercion terms $c$ and
  patterns $pat$ are inductively defined by 
  \bda{lrl}
  c & ::= & v \mid k \mid \lambda v.c \mid c \ c \mid \REC\ x.c
  \mid \CASE\ c \ \OF\ [pat_1 \Rightarrow c_1, \ldots, pat_n \Rightarrow c_n]
  \\
  pat & ::= & v \mid k \ pat_1 \ ... pat_{arity(k)}
  \eda
  where $v$ range overs a denumerable set of variables disjoint from $\Sigma$
  and constructors $k$ are taken from the set
  $\Kons = \{ \Eps, \SeqOp, \Left, \Right, \Fold, \Just, \Nothing, (\_,\_) \}$.
  Constructors $\Eps, \dots, \Fold$ are employed in the formation
  of parse trees. Constructors $\Just$ and $\Nothing$ belong
  to the $\Maybe$ type that arises in the construction of partial coercions. The binary constructor
  $(\_,\_)$  builds a pair.
  The function~$arity(k)$ defines the arity of constructor $k$.
  Patterns are linear (i.e., all pattern variables are distinct) and
  we write $\lambda pat.c$ as a shorthand for
  $\lambda v. \CASE\ v \ \OF\ [pat \Rightarrow c]$.
\end{definition}
We give meaning to coercions in terms of a standard denotational semantics
where values are elements of a complete partial order formed
over the set of parse trees and function space.
We write $\eta$ to denote the mapping from variables to values
and $\sem{c}{\eta}$ to denote the meaning of coercions
where $\eta$ defines the meaning of free variables in $c$.
In case $c$ is closed, we simply write $\sem{c}{}$.

Earlier work shows how to construct coercions that demonstrate containment among regular
expressions~\cite{DBLP:conf/aplas/LuS04,Sulzmann:2006:TEX:1706640.1706940}. These works use a
specialized representation for Kleene star which would require to extend
Definitions~\ref{def:parse-trees} and~\ref{def:typing-relation}. We avoid any special treatment of
the Kleene star by considering $r^*$ an abbreviation for $\mu\alpha.r \conc \alpha + \varepsilon$.
The representations suggested here is isomorphic to the one used in previous 
work~\cite{DBLP:conf/aplas/LuS04,Sulzmann:2006:TEX:1706640.1706940}. We summarize
their main results.  We adopt the convention that $t$ refers to parse trees of regular expressions,
$b$ refers to coercions manipulating regular parse trees.  We write $b : r \rightarrow s$ to denote
a coercion of type $r \rightarrow s$, and we use $\turnsreg t:r$ for the regular typing judgment.

\begin{definition}[Parse Trees for Regular Expressions]
\label{def:regular-parse-trees}  
  \bda{lcl}
  t & ::= & \Eps \mid \Sym\ x \mid \Left\ t \mid \Right\ t \mid
  \Seq tt \mid \Fold\ t
  \eda
\end{definition}

\begin{definition}[Valid Regular Parse Trees, $\turnsreg t : r$]
 \label{def:typing-relation-regex}  
 \begin{mathpar}
  \turnsreg \Eps : \varepsilon

  \turnsreg \Sym\ x : x

  \myirule{\turnsreg t_1 : r \qquad \turnsreg t_2 : s}
  {\turnsreg \Seq {t_1}{t_2} : r \conc s}

  \myirule{\turnsreg t : r}
  {\turnsreg \Left\ t: r + s }
  
  \myirule{\turnsreg t : s}
  {\turnsreg  \Right\ t : r + s}

  \turnsreg \Fold\ (\Right\ \Eps) : r^*

  \myirule{\turnsreg t_1 : r \quad \turnsreg t_2 : r^* }
  {\turnsreg \Fold\ (\Left\ (\Seq {t_1}{t_2})) : r^*}
 \end{mathpar}
\end{definition}

\begin{lemma}[Regular Coercions~\cite{DBLP:conf/aplas/LuS04,Sulzmann:2006:TEX:1706640.1706940}]
  \label{le:regular-coercions}
  Let $r$ and $s$ be regular expressions such that $r \leq s$.
  There is an algorithm to obtain coercions $b_1 : r \rightarrow s$
  and $b_2 : s \rightarrow \Maybe\ r$ such that
  (1) for any $\turnsreg t : r$ we have that $\turnsreg \fapp{b_1}{t} : s$,
  $\sem{\fapp{b_1}{t}}{} = t'$ for some $t'$
  and $\flatten{t} = \flatten{t'}$, and
  (2) for any $\turnsreg t : s$ where $\flatten{t} \in L(r)$ we have
  that $\sem{\fapp{b_2}{t}}{} = \Just\ t'$ for some $t'$ where $\turnsreg t' : r$
  and $\flatten{t} = \flatten{t'}$, and
  (3) for any $\turnsreg t : s$ where $\flatten{t} \not\in L(r)$,
  $\fapp{b_2}{t} =\Nothing$.
\end{lemma}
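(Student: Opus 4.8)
Since this lemma is quoted from prior work, I would reconstruct its proof via the derivative-based, proofs-as-programs route that the rest of the paper adopts. The plan is first to characterize $r \leq s$ coinductively through Brzozowski derivatives: define a relation that holds whenever ($r$ nullable implies $s$ nullable) and, for every $x \in \Sigma$, $\deriv{r}{x} \leq \deriv{s}{x}$. By the Finiteness theorem only finitely many dissimilar descendants arise, so --- working with canonical representatives $\simp{r}$ (Definition~\ref{def:cnf}) and memoizing visited pairs --- any true containment has a finite, cyclic derivation. Soundness of this relation with respect to $L(r) \subseteq L(s)$ follows from the Expansion theorem (Theorem~\ref{th:representation}), which also drives the coercion construction.

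Next I would isolate the combinatorial core as an \emph{expansion isomorphism} on parse trees. For a fixed $r$, Theorem~\ref{th:representation} gives $r \semeq x_1 \conc \deriv{r}{x_1} + \dots + x_n \conc \deriv{r}{x_n}$ ($+\varepsilon$ if nullable), and this equivalence is witnessed by a pair of mutually inverse, flatten-preserving coercions between valid parse trees $\turnsreg t : r$ and valid parse trees of the expanded form. Concretely, a tree for $r$ either flattens to $\varepsilon$ (the nullable summand, handled directly) or to $x_i \conc w$, in which case it is rewritten to an injection carrying $\Seq{(\Sym\ x_i)}{t'}$ with $\turnsreg t' : \deriv{r}{x_i}$; the inverse re-folds this shape into a tree for $r$. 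Establishing these two small coercions, by induction over the structure of $r$ mirroring the definition of $\deriv{r}{x}$, is the reusable lemma on which everything else rests.

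With the isomorphism in hand, I would extract the two coercions from a containment derivation by the proofs-as-programs principle, using $\REC$ to tie the recursive knot at each memoized (cyclic) node of the derivation. For the total coercion $b_1 : r \rightarrow s$, given $\turnsreg t : r$ I expand $t$ to its leading-symbol form, recurse on the residual using the sub-derivation $\deriv{r}{x_i} \leq \deriv{s}{x_i}$ to obtain a tree for $\deriv{s}{x_i}$, and apply the inverse expansion isomorphism for $s$ to reassemble a tree for $s$; the nullable summand is mapped using the side condition that $s$ is nullable whenever $r$ is. The partial coercion $b_2 : s \rightarrow \Maybe\ r$ has the same shape but runs the isomorphisms in the opposite direction and threads $\Maybe$ monadically through the recursion, emitting $\Nothing$ exactly at the points where the input cannot be matched by $r$: a leading symbol $x$ with $\deriv{r}{x} \semeq \phi$, or an exhausted input ($\flatten{t} = \varepsilon$) at a non-nullable residual of $r$.

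Finally I would verify the three stated properties. Property~(1) is an induction on $t$ (coinduction on the derivation): the expansion isomorphisms preserve flattening and the recursive calls preserve typing, so $\turnsreg \fapp{b_1}{t} : s$ with $\flatten{t} = \flatten{\fapp{b_1}{t}}$. Properties~(2) and~(3) must be proven together, by induction on $\flatten{t}$: $\fapp{b_2}{t}$ returns $\Just\ t'$ with $\turnsreg t' : r$ and the same flattening precisely when every derivative step succeeds, which by the left-quotient characterization of $\deriv{r}{x}$ happens iff $\flatten{t} \in L(r)$. The main obstacle is exactly this joint correctness of $b_2$: the partiality must align exactly with semantic membership while the recursion is cyclic, so one has to argue that the $\Maybe$-threading neither fails on a word in $L(r)$ nor succeeds on a word outside it, and that the recursive coercions are well-defined --- that is, that the finiteness of dissimilar descendants makes the $\REC$-bound definitions productive and total on all valid parse trees.
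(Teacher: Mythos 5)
The paper itself contains no proof of this lemma: it is imported wholesale from the cited prior work of Lu and Sulzmann, so there is no in-paper argument to match yours against. Your reconstruction is nonetheless sound, and it is instructive to note that it is essentially the regular-expression specialization of the method the paper then develops for the context-free case: your coinductive, derivative-based containment derivation over memoized pairs of canonical descendants is the regular analogue of the reachability proof system of Figure~\ref{fig:reachability-proof-system}, and your extraction of $b_1$ and $b_2$ with $\REC$ tying the knot at cyclic nodes mirrors the rules \ulabel{Rec}/\ulabel{Hyp} and \dlabel{Rec}/\dlabel{Hyp} of Figures~\ref{fig:upcast-coercions} and~\ref{fig:downcast-coercions}. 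What your route buys is self-containedness: the paper's rules \ulabel{Sym} and \dlabel{Sym} simply \emph{assume} regular coercions such as $x \conc \simp{\deriv{r}{x}} \leq^{b} r$, and your expansion isomorphism, justified by Theorem~\ref{th:representation}, is exactly the constructive content behind that assumption. There is no circularity, because at the purely regular level these expansion coercions are built by direct induction on $r$ rather than by appeal to the lemma itself.

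Two details deserve more care than your sketch gives them. First, since the derivation works with canonical representatives, every similarity step (idempotency, commutativity, associativity, and the elimination rules of Definition~\ref{def:desc-sim}) must itself be witnessed by flatten-preserving coercions in both directions; this is routine but is genuinely part of the algorithm, and the upcast direction of idempotency involves a choice that is one source of the ambiguity the paper remarks on. Second, because Kleene star is here the $\mu$-encoding of Definition~\ref{def:typing-relation-regex}, the forward expansion map applied to a tree $\Fold\ (\Left\ (\Seq{t_1}{t_2}))$ must skip leading iterations with $\flatten{t_1} = \varepsilon$, which makes that map structurally recursive on the \emph{tree} rather than mirroring only the structure of the expression; for the same reason your claim of ``mutually inverse'' coercions should be weakened to type- and flatten-preserving, since ambiguous expressions admit no bijection on parse trees. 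Relatedly, totality of $b_1$ and $b_2$ is cleanest not as productivity but as induction on the length of $\flatten{t}$: each unfolding of a $\REC$-bound coercion consumes exactly one input symbol, while finiteness of dissimilar descendants guarantees only that the finite system of mutually recursive coercions is well-defined, not by itself that it terminates. With these repairs your argument goes through and properties (1)--(3) follow as you indicate.
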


We refer to $b_1$ as the \emph{upcast} coercion and to $b_2$
as the \emph{downcast} coercion, indicated by  $r \leq^{b_1} s$ and  $r \leq_{b_2} s$, respectively.
Upcasting means that any parse tree for the smaller language can be coerced into a parse tree
for the larger language. On the other hand, a parse tree can only be downcast
if the underlying word belongs to the smaller language.

\begin{figure}[tp]
  \bda{c}
  \ruleform{\upCoerce{\Delta}{c}{e}{r}}
  \\
  \\
\ulabel{Eps} \   \myirule{\simp{r} \leq^{b} r
    \qquad
    c = \lambda (\Eps, t). \fapp{b}{t}}
          {\upCoerce{\Delta}{c}{\varepsilon}{r}}
  \\
  \\
\ulabel{Sym} \  \myirule{ x \conc \simp{\deriv{r}{x}} \leq^{b} r
  \qquad c = \lambda (v, t). \fapp{b}{\Seq {v}{t}}}
          {\upCoerce{\Delta}{c}{x}{r}}
  \\
  \\
\ulabel{Alt} \  \myirule{ \upCoerce{\Delta}{c_1}{e}{r}
    \qquad \upCoerce{\Delta}{c_2}{f}{r}
    \\
    \plus{\reach{e}{r}} \leq_{b_1} \plus{\reach{e+f}{r}}
    \qquad \plus{\reach{f}{r}} \leq_{b_2} \plus{\reach{e+f}{r}}
    \\ c = \ba[t]{l}
          \lambda (p,t). \ba[t]{l}
          \CASE\ p \ \OF\ [
          \\ \ \ \Left\ p_1 \Rightarrow \CASE\ (\fapp{b_1}{t}) \ \OF\ [\Just\ t_1 \Rightarrow c_1 \ (p_1, t_1)],
          \\ \ \ \Right\ p_2 \Rightarrow \CASE\ (\fapp{b_2}{t}) \ \OF\ [\Just\ t_2 \Rightarrow c_2 \ (p_2, t_2)]]
            \ea\ea
          }
          {\upCoerce{\Delta}{c}{e + f}{r}}
  \\
  \\
\ulabel{Seq} \  \myirule{ \upCoerce{\Delta}{c_1}{e}{r}
            \qquad \upCoerce{\Delta}{c_2}{f}{\plus{\reach{e}{r}}}
           \\ c = \lambda (\Seq{p_1}{p_2},t). c_1 \ (p_1, c_2 \ (p_2,t))
          }
          {\upCoerce{\Delta}{c}{e \conc f}{r}}          
  \\
  \\
\ulabel{Rec} \  \myirule{v_{\alpha.e,r} \not\in \Delta
    \qquad
    \upCoerce{\Delta \cup \{ v_{\alpha.e,r} : \upTy{\mu\alpha.e}{ r} \}}{c'}{\apply{[\alpha\mapsto \mu\alpha.e]}{e}}{r}
    \\ c = \REC\ v_{\alpha.e,r}. \lambda (\Fold\ p, t). \fapp{c'}{p,t}
          }
          {\upCoerce{\Delta}{c}{\mu\alpha.e}{r}}
  \\
  \\
\ulabel{Hyp} \  \myirule{(v_{\alpha.e,r} : \upTy{\mu\alpha.e}{r}) \in \Delta}
          {\upCoerce{\Delta}{v_{\alpha.e,r}}{\mu\alpha.e}{r}}
  \eda

  \caption{Reachability upcast coercions}
  \label{fig:upcast-coercions}  
\end{figure}

We wish to extend these results to the containment
$e \leq r$ where $e$ is a context-free expression and $r$ is a regular expression.
In the first step, we build a (reachability upcast) coercion $c$
which takes as inputs a parse tree of $e$ and a proof that
$e$ is contained in $r$. The latter comes in the form of the reachability set  $\reach{e}{r}$, which
we canonicalize to $\plus{\reach{e}{r}}$ as follows:
For a set $R = \{ r_1,\ldots,r_n\}$ of canonical regular expressions,
we define $\plus{R} = \simp{r_1 + \ldots + r_n}$
where we set $\plus{\{\}} = \phi$.

Reachability coercions are derived via the judgment $\upCoerce{\Delta}{c}{e}{r}$, which states that
under environment $\Delta$ an upcast coercion $c$ of type $\upTy{e}{r}$ can be constructed.
Environments $\Delta$ are defined by
$\Delta \ ::= \ \{ \} \mid \{ v : \upTy{e}{r} \} \mid \Delta \cup \Delta$
and record  coercion assumptions, which are needed to construct recursive coercions.  We interpret
$\upTy{e}{r}$ as the type $({e} \times {\plus{\reach{e}{r}}}) \rightarrow r$.
Figure~\ref{fig:upcast-coercions} contains the proof rules which are derived from
Figure~\ref{fig:reachability-proof-system} 
by decorating each rule with an appropriate coercion term.
If $\Delta$ is empty, we  write $\upCoerce{}{c}{e}{r}$ for short.

The proof rules in Figure~\ref{fig:upcast-coercions} are decidable
in the sense that it is decidable if $\upCoerce{\Delta}{c}{e}{r}$ can be derived.
This property holds because proof rules are syntax-directed and $\reach{e}{r}$ is decidable.
We can also attempt to infer $c$ where we either fail or succeed in a finite number of derivation steps.

\begin{lemma}[Upcast Soundness]
  \label{le:le-soundness}
  Let $e$ be a context-free expression and $r$ be a regular expression
  such that $\upCoerce{}{c}{e}{r}$ for some coercion $c$.
  Let $p$ and $t$ be parse trees such that $\turns p : e$
  and $\turnsreg t : \plus{\reach{e}{r}}$ where
  $\flatten{t} \in L(\deriv{r}{\flatten{p}})$.
  Then, we find that $\sem{\fapp{c}{(p,t)}}{} = t'$ for some $t'$
  where $\turns t' : r$ and $\flatten{p} = \flatten{t'}$.
\end{lemma}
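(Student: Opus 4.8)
The statement relates a coercion $c$ built from a reachability-upcast derivation to a semantic correctness property: applied to a valid context-free parse tree $p : e$ paired with a valid regular parse tree $t : \plus{\reach{e}{r}}$ whose word is a suffix-witness (i.e. $\flatten{t} \in L(\deriv{r}{\flatten{p}})$), the coercion yields a regular parse tree $t'$ for $r$ with the same flattened word as $p$. Since the coercion judgment is coinductively defined (through \rlabel{Rec}/\rlabel{Hyp}), the ``induction'' must be carried out carefully; the cleanest route is induction on the structure of the parse tree $p$ together with the derivation of $\turns p : e$, because each rule of the coercion system pattern-matches on the top constructor of $p$, and recursion through \rlabel{Hyp} corresponds exactly to a smaller parse tree under a $\Fold$.

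First I would set up the right induction measure. Because the coercion derivation is coinductive but a concrete input $p$ is a finite parse tree, I would induct on the typing derivation $\turns p : e$ (equivalently, on the structure of $p$). The base cases \ulabel{Eps} and \ulabel{Sym} are direct computations: for \ulabel{Eps}, $p = \Eps$, so $\flatten{p} = \varepsilon$, $\deriv{r}{\varepsilon} = r$, hence $\flatten{t} \in L(r)$ and $t : \plus{\reach{\varepsilon}{r}} = \simp{r}$; I invoke Lemma~\ref{le:regular-coercions} on the upcast $\simp{r} \leq^b r$ to get $\fapp{b}{t} = t'$ with $\turnsreg t' : r$ and $\flatten{t'} = \flatten{t} = \flatten{p}$. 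The \ulabel{Sym} case is analogous, using that $\Seq{(\Sym x)}{t}$ flattens to $x \conc \flatten{t}$ and that $x\conc\simp{\deriv{r}{x}} \leq^b r$ holds because $\flatten{t} \in L(\deriv{r}{x})$.

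The two central steps are \ulabel{Seq} and \ulabel{Rec}. For \ulabel{Seq}, $p = \Seq{p_1}{p_2}$ with $\turns p_1 : e$ and $\turns p_2 : f$, and the coercion is $c_1\,(p_1, c_2\,(p_2, t))$. I would apply the induction hypothesis to $c_2$ first: I must check that $t$ witnesses $\flatten{t} \in L(\deriv{\plus{\reach{e}{r}}}{\flatten{p_2}})$, which follows from the derivative identity $\deriv{r}{\flatten{p_1}\conc\flatten{p_2}} \semeq \deriv{\deriv{r}{\flatten{p_1}}}{\flatten{p_2}}$ together with the fact that $\deriv{r}{\flatten{p_1}}$ is (similar to) a member of $\reach{e}{r}$. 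This yields $c_2\,(p_2,t) = t_2$ with $\turnsreg t_2 : \plus{\reach{e}{r}}$ and $\flatten{t_2} = \flatten{p_2}$; then the induction hypothesis for $c_1$ applied to $(p_1, t_2)$, checking $\flatten{t_2} \in L(\deriv{r}{\flatten{p_1}})$, gives the result $t'$ with $\flatten{t'} = \flatten{p_1}\conc\flatten{p_2} = \flatten{p}$.

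\textbf{The hard part will be \ulabel{Rec}}, where the coercion is genuinely recursive: $c = \REC\ v_{\alpha.e,r}.\,\lambda(\Fold\ p, t).\,\fapp{c'}{p,t}$. Here $p = \Fold\ p_0$ with $\turns p_0 : \apply{[\alpha\mapsto\mu\alpha.e]}{e}$, and $c'$ is derived in the extended environment $\Delta \cup \{v_{\alpha.e,r} : \upTy{\mu\alpha.e}{r}\}$, so the induction hypothesis is no longer stated for the empty environment. I would therefore strengthen the statement to an open version: assuming that every coercion variable $v_{\alpha.e,r}$ in $\Delta$ already satisfies the soundness property (as an inductive invariant on $\eta$), prove soundness of $\upCoerce{\Delta}{c}{e}{r}$. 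Under this strengthening, the \ulabel{Hyp} case is immediate by the assumption on $\eta$, and in \ulabel{Rec} I use that $\flatten{\Fold\ p_0} = \flatten{p_0}$ together with the fixpoint unfolding $\reach{\mu\alpha.e}{r} = \reach{\apply{[\alpha\mapsto\mu\alpha.e]}{e}}{r}$ (which follows from Lemma~\ref{lemma:unfold-cfe}, since unfolding preserves the language and hence the reachable set). The delicate point is justifying that the recursive (least-fixed-point) semantics $\sem{\REC\ v.c'}{}$ actually satisfies the property it is defined to satisfy: since each concrete input $p_0$ is structurally smaller than $\Fold\ p_0$, a well-founded induction on parse-tree size (threaded through the fixpoint approximants in the cpo semantics) closes the circle — the recursive call is only ever unfolded on a strictly smaller tree, so termination on every finite parse tree is guaranteed and the coinductively-built coercion computes a total value on all valid inputs. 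The \ulabel{Alt} case is routine given the above machinery, requiring only that the downcasts $\plus{\reach{e}{r}} \leq_{b_1} \plus{\reach{e+f}{r}}$ succeed on $t$ because $\flatten{t}$ lies in the appropriate smaller language, which is supplied by the hypothesis $\flatten{t}\in L(\deriv{r}{\flatten{p}})$ after the $\Left$/$\Right$ case split.
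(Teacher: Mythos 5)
Your overall architecture is sound and differs from the paper's in an interesting way. The paper proves the generalized statement (its appendix Lemma~\ref{le:upCoerce}) by induction \emph{on the coercion derivation} $\upCoerce{\Delta}{c}{e}{r}$, using exactly the environment strengthening you propose (a ``well-behaved upcast environment'' $\Delta$ in which every hypothesis variable is assumed sound, making \ulabel{Hyp} immediate); in the \ulabel{Rec} case it unfolds $\REC$ once via subject reduction and a substitution lemma, and it relegates the semantic definedness of the least fixed point (why $\sem{\fapp{c}{(p,t)}}{}$ is a proper value and not $\bot$) to an informal finiteness/progress argument in the main text. You instead induct on the size of the parse tree $p$, observing that every recursive occurrence of $v_{\alpha.e,r}$ is applied under a destructed $\Fold$, hence to a strictly smaller tree, and you thread this through the fixpoint approximants of the cpo semantics. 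This is a genuinely different induction measure, and it buys something the paper's proof leaves implicit: a direct, rigorous totality argument for the recursive coercions, which is precisely the part of the lemma ($\sem{\fapp{c}{(p,t)}}{} = t'$ for some $t'$) that the paper's appendix proof --- which establishes only the typing judgment $\Delta \turns \fapp{c}{(p,t)} : r$ --- does not formally cover. Your treatment of \ulabel{Seq} via $\deriv{r}{\flatten{p_1}\conc\flatten{p_2}} \semeq \deriv{\deriv{r}{\flatten{p_1}}}{\flatten{p_2}}$ and the inclusion $L(\deriv{r}{\flatten{p_1}}) \subseteq L(\plus{\reach{e}{r}})$ matches the paper's argument exactly.

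There is, however, one concrete gap in your bookkeeping of flattened words, and you should fix it by strengthening the induction invariant. You claim in the \ulabel{Eps} case that $\flatten{t'} = \flatten{t} = \flatten{p}$, but $\flatten{t} \in L(\deriv{r}{\varepsilon}) = L(r)$ does not force $\flatten{t} = \varepsilon$; the same problem appears in \ulabel{Sym}, where $t'$ flattens to $x \conc \flatten{t}$, not to $x$. Worse, the weak invariant $\flatten{t_2} = \flatten{p_2}$ you carry out of the inner call in \ulabel{Seq} is insufficient to discharge the side condition $\flatten{t_2} \in L(\deriv{r}{\flatten{p_1}})$ needed for the outer call: that condition requires $\flatten{p_1} \conc \flatten{p_2} \conc \flatten{t} \in L(r)$, which only follows if you know the residue $\flatten{t}$ is still carried inside $t_2$. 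The correct invariant is $\flatten{t'} = \flatten{p} \conc \flatten{t}$, i.e., the output tree flattens to the input word \emph{followed by the residue}; the lemma as stated then follows in Theorem~\ref{th:upcast-coercions} because there $t = \mkEmpty{\plus{\reach{e}{r}}}$ flattens to $\varepsilon$. (This looseness is arguably present in the paper's own statement of the lemma, whose appendix proof sidesteps it by dropping the flattening claim altogether, but since your induction genuinely needs the stronger invariant to close the \ulabel{Seq} case, you must state it explicitly.) With that repair, your proof goes through.
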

The assumption $\flatten{t} \in L(\deriv{r}{\flatten{p}})$ guarantees
that $e$'s parse tree $p$ in combination with $\plus\reach{e}{r}$'s parse tree $t$
allows us to build a parse tree for $r$.

For example, consider rule \ulabel{Alt}.  Suppose $e+f$ parses some input word $w$ because $e$ parses
the word $w$. That is, $w$'s parse tree has the form $p = \Left\ p_1$. As we have proofs that $e\le
r$ and $f \le r$, the downcast $b_1 \ (t)$ cannot fail and yields $\Just \ t_1$. Formally,  we have
$\turns p_1 : e$ and conclude that $\flatten{p}=\flatten{p_1} \in L(e)$.
  By Lemma~\ref{le:parse-trees-flattening}, $e \reduce \flatten{p_1}$ and
  therefore we find that $\deriv{r}{\flatten{p_1}}$ is similar to an element of $\reach{e}{r}$.  
  Because $\flatten{t} \in L(\deriv{r}{\flatten{p}})$ we conclude that $\flatten{t} \in L(\plus{\reach{e}{r}})$.
  By Lemma~\ref{le:regular-coercions}, it must be that
  $\fapp{b_1}{t}  = \Just\ t_1$ for some $t_1$ where $\turns t_1 : \plus{\reach{e}{r}}$. By induction the result holds for $c_1$ and hence we can establish the result
  for $c$.

  In rule \ulabel{Seq}, we exploit the fact
  that $\plus{\reach{e\conc f}{r}} = \plus{\reach{f}{\plus{\reach{e}{r}}}}$.
  So, we use coercion $c_2$ to build a parse tree of $\plus{\reach{e}{r}}$
  given parse trees of $f$ and $\plus{\reach{e\conc f}{r}}$.
  Then, we build a parse tree of $r$ by applying $c_1$
  to parse trees of $e$ and $\plus{\reach{e}{r}}$.
  
  Due to the coinductive nature of the coercion proof system,
  coercion terms may be recursive as evidenced by rule \ulabel{Rec}.
  Soundness is guaranteed by the assumption that the set of reachable
  states is non-empty. As we find a parse tree of that type, progress is made when building the
  coercion for the unfolded $\mu$-expression. Unfolding must terminate because there are only
  finitely many combinations of unfolded subterms of the form $\mu\alpha.e$ and regular expressions
  $r$. The latter are drawn from the finitely many dissimilar descendant of some $r$.
  Hence, resulting coercions must be well-defined as stated
  in the above result. 
  
  \begin{example}\label{ex:upcast}
    We show how to derive $\upCoerce{}{c_0}{e}{r}$ where
   $e = \mu\alpha. x \conc (\alpha \conc y) + \varepsilon$,  
    $r = x^* \conc y^*$ and
     $\reach{e}{r} = \{ r, y^* \}$.
    The shape of the derivation tree corresponds to the derivation
    we have seen in Example~\ref{ex:reach}.
   \bda{c}
   \inferrule*[left = \ulabel{Rec}]
   {
     \inferrule*[left = \ulabel{Alt}]
     {
       \inferrule*[left = \ulabel{Seq}]
      {
         \inferrule*[left = \ulabel{Seq}]
        {
           \ulabel{Hyp} \ \  \upCoerce{\Delta}{c_7}{e}{r} \checkmark
           \\
           \ulabel{Sym} \ \   \upCoerce{\Delta}{c_6}{y}{r+y^*} \checkmark
           \\
           \mbox{}
         }
         {
           \upCoerce{\Delta}{c_5}{e \conc y}{r}
         }
         \\
         \ulabel{Sym} \ \ \upCoerce{\Delta}{c_4}{x}{r} \checkmark             
       }
       {
         \upCoerce{\Delta}{c_3}{x\conc (e \conc y)}{r}
       }
       \ulabel{Eps} \ \ \upCoerce{\Delta}{c_2}{\varepsilon}{r} \checkmark
     }
     {
       \upCoerce{\Delta}{c_1}{x\conc (e \conc y) + \varepsilon}{r}
     }         
   }
   {
     \upCoerce{}{c_0}{e}{r}
   }             
   \eda

   We fill in the details by following the derivation tree from bottom to top.
   We set $\Delta = \{ v_{\alpha.e,r} : \upTy{e}{r} \}$.
   From the first \ulabel{Rec} step we conclude
   $c_0 = \REC\ v_{\alpha.e,r}. \lambda (\Fold\ p, t). \fapp{c_1}{p,t}$.
   Next, we find  \ulabel{Alt} which yields
\bda{lcl}
c_1 & = & \ba[t]{l}
          \lambda (p,t). \ba[t]{l}
          \CASE\ p \ \OF\ [
          \\ \ \ \Left\ p_1 \Rightarrow \CASE\ (\fapp{b_1}{t}) \ \OF\ [\Just\ t_1 \Rightarrow c_3 \ (p_1, t_1)],
          \\ \ \ \Right\ p_2 \Rightarrow \CASE\ (\fapp{b_2}{t}) \ \OF\ [\Just\ t_2 \Rightarrow c_2 \ (p_2, t_2)]]
          \ea\ea
\eda          

We consider the definition of the auxiliary regular (downcast)
coercions $b_1$ and $b_2$. We have that
$\plus{\reach{x\conc (e\conc y) + \varepsilon}{r}} = r + y^*$,
$\plus{\reach{\varepsilon}{r}} = r$ and
$\plus{\reach{x\conc (e\conc y)}{r}} = y^*$.
Hence, we need to derive $y^* \leq_{b_1} r + y^*$ and $r \leq_{b_2} r + y^*$.
   
Recall the requirement (2) for downcast coercions.
See Lemma~\ref{le:regular-coercions}.
We first consider $y^* \leq_{b_1} r + y^*$. The right component of the sum
can be straightforwardly coerced into a parse tree of $y^*$.
For the left component we need to check that the leading part
is effectively empty. Recall that Kleene star is represented
in terms of $\mu$-expressions. Following
Definition~\ref{def:typing-relation-regex}, an empty parse tree
for Kleene star equals $\Fold\ (\Right\ \Eps)$.
Thus, we arrive at the following definition for $b_1$.
\bda{c}

\myirule{b_1 = \ba[t]{l}
                \lambda t. \ba[t]{l}
                \CASE\ t \ \OF\ [
                  \\ \ \ \Left\ (\Seq{ (\Fold\ \Right\ \Eps)} v) \Rightarrow \Just\ v,
                  \\ \ \ \Left\ v \Rightarrow \Nothing,
                  \\ \ \ \Right\ v \Rightarrow \Just\ v]
                            \ea
               \ea
        }
        {y^* \leq_{b_1} r + y^*}
\eda

The derivation of $r \leq_{b_2} r + y^*$ follows a similar pattern.
As both expressions $r$ and $r + y^*$ are equal, the downcast never fails here.
\bda{c}
\myirule{b_2 = \ba[t]{l}
                \lambda t. \ba[t]{l}
                \CASE\ t \ \OF\ [
                  \\ \ \ \Left\ v \Rightarrow \Just\ v,
                  \\ \ \ \Right\ v \Rightarrow \Just\ (\Seq {(\Fold\ \Right\ \Eps)} v)]
                            \ea
               \ea}
        {r \leq_{b_2} r + y^*}       
\eda

Next, consider the premises of the \ulabel{Alt}
rule. For  $\upCoerce{\Delta}{c_2}{\varepsilon}{r}$
by definition $c_2 = \lambda (\Eps,t). \fapp{b_3}{t}$
where $r \leq^{b_3} r$ which can be satisfied by $b_3 = \lambda v.v$.
For $\upCoerce{\Delta}{c_3}{x\conc (e \conc y)}{r}$
we find by definition $c_3 = \lambda (\Seq{p_1}{p_2},t). c_4 \ (p_1, c_5 \ (p_2,t))$.

It follows some \ulabel{Seq} step where we first
consider $\upCoerce{\Delta}{c_4}{x}{r}$.
By definition of \ulabel{Sym} and $\simp{\deriv{x}{r}} = r$ we have that
$c_4 = \lambda (v, t). \fapp{b_4}{\Seq {v}{t}}$
where $x \conc r \leq^{b_4} r$. Recall $r = x^* \conc y^*$.
So, upcast $b_4$ injects $x$ into $x^*$'s parse tree.
 Recall the representation
of parse trees for Kleene star in Definition~\ref{def:typing-relation-regex}.
\bda{c}
b_4 = \lambda (\Seq v {(\Seq {t_1}{ t_2})}.\Seq{ (\Fold\ (\Left\ (\Seq v { t_1})))} { t_2}
\eda

Next, we consider $\upCoerce{\Delta}{c_5}{e \conc y}{r}$
where we find another \ulabel{Seq} step.
Hence, $c_5 = \lambda (\Seq{p_1}{p_2},t). c_7 \ (p_1, c_6 \ (p_2,t))$.
By \ulabel{Hyp}, we have that $c_7 = v_{\alpha.e,r}$.
To obtain $\upCoerce{\Delta}{c_6}{y}{r+y^*}$ we apply
another \ulabel{Sym} step and therefore
$c_6 = \lambda (v, t). \fapp{b_5}{\Seq {v}{t}}$.
The regular (upcast) coercion $b_5$ is derived from $y \conc y^* \leq^{b_5} r + y^*$
because $\simp{\deriv{r+y^*}{y}} = y^*$. Its definition is as follows.
\bda{c}
 b_5 = \lambda (\Seq v  t). \Right\ (\Fold\ (\Left\ (\Seq v t)))
\eda
This completes the example.
\end{example}
  
  \begin{remark}[Ambiguities]
    Example~\ref{ex:upcast} shows that coercions may be ambiguous
    in the sense that there are several choices for the resulting parse trees.
    For example, in the construction of
    the regular (upcast) coercion $y \conc y^* \leq^{b_5} x^*\conc y^* + y^*$
    we choose to inject $y$ into the right component of the sum.
    The alternative is to inject $y$ into the left component
    by making the $x^*$ part empty.    
    \bda{c}
     b_5' = \lambda (\Seq v  t). \Left\ (\Seq {(\Fold\ (\Right\ \Eps))} { (\Fold\ (\Left\ (\Seq v t)))})
     \eda
     Both are valid choices. To obtain deterministic behavior of coercions
     we can apply a disambiguation strategy (e.g., favoring
     left-most alternatives). 
     A detailed investigation of this topic is beyond the scope
     of the present work.
  \end{remark}

  Based on Lemma \ref{le:le-soundness} we easily obtain
  an upcast coercion to transform $e$'s parse tree into
  a parse tree of $r$. As $e \le r$ if
all elements in $\reach{e}{r}$ are nullable,
we simply need to provide
an empty parse tree
for $\plus{\reach{e}{r}}$.
The upcoming definition of $\mkEmpty{}$ supplies such parse trees. It requires to check
for nullability of context-free expression.
This check is decidable as shown by the following definition.

\begin{definition}[CFE Nullability]
  \begin{align*}
    \nullable{\phi} =\nullable{x} & = \false  & \nullable{e+f} & =  \nullable{e} \vee \nullable{f} \\
    \nullable{\varepsilon} & = \true             & \nullable{e\conc f} & =  \nullable{e} \wedge \nullable{f} \\
    \nullable{\alpha} &= \false                    &    \nullable{\mu\alpha.e} & =  \nullable{e}
  \end{align*}
\end{definition}

\begin{lemma}
  \label{le:cfe-nullability}
  Let $e$ be a context-free expression.
  Then, we have that $\nullable{e}$ holds iff $\varepsilon \in L(e)$.
\end{lemma}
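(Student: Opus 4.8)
The plan is to prove the two implications separately. The soundness direction ($\nullable{e}$ implies $\varepsilon\in L(e)$, i.e.\ $e\reduce\varepsilon$) and the completeness direction ($e\reduce\varepsilon$ implies $\nullable{e}$) both reduce to reasoning about the rule for $\mu$, which is the only place where the definitions of $\nullable{}$ and of $L$ diverge non-trivially: $\nullable{}$ treats a free placeholder $\alpha$ as $\false$ and leaves the expression syntactically fixed under $\mu$, whereas $\reduce$ unfolds $\mu\alpha.e$ to $\apply{[\alpha\mapsto\mu\alpha.e]}{e}$, which is structurally larger. Because of this size increase, a naive structural induction on closed expressions cannot handle $\mu$, so the two directions need different inductions.

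For the forward implication I would prove the slightly stronger statement: for every (possibly open) expression $e$ and every substitution $\sigma$ mapping the free placeholders of $e$ to closed expressions, $\nullable{e}$ implies $\sigma(e)\reduce\varepsilon$. This is proved by structural induction on $e$. The base cases $\varepsilon,x,\phi,\alpha$ are immediate, and the cases $e+f$ and $e\conc f$ follow from the inductive hypothesis and the corresponding big-step rules. The key point is the $\mu$ case: since $\nullable{\mu\alpha.e}=\nullable{e}$ is assumed true, I apply the inductive hypothesis to the strictly smaller subterm $e$ with the enlarged closing substitution $\sigma[\alpha\mapsto\mu\alpha.\sigma(e)]$, obtaining $\apply{[\alpha\mapsto\mu\alpha.\sigma(e)]}{\sigma(e)}\reduce\varepsilon$, which is exactly the premise of the unfolding rule needed to derive $\sigma(\mu\alpha.e)\reduce\varepsilon$ (closedness of the range of $\sigma$ rules out capture). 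Intuitively, a nullability witness found while treating free placeholders as $\false$ never passes through a placeholder, so substitution does not disturb it.

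For the backward implication I would induct on the finite derivation of $e\reduce\varepsilon$ for closed $e$. All rules except the one for $\mu$ are handled directly: for $e\conc f\reduce\varepsilon$ the result word $\varepsilon$ forces both premises to derive $\varepsilon$, and the alternative and $\varepsilon$ rules are immediate. For the $\mu$-rule the inductive hypothesis yields $\nullable{\apply{[\alpha\mapsto\mu\alpha.e]}{e}}$, and I must transport this back to $\nullable{\mu\alpha.e}=\nullable{e}$. To do so I introduce an environment-indexed nullability $\nullable{e}_\Gamma$ interpreting each free placeholder through a boolean environment $\Gamma$, so that the lemma's nullability is $\nullable{e}=\nullable{e}_{\Gamma_0}$ with $\Gamma_0$ mapping every placeholder to $\false$ and $\nullable{\mu\alpha.e}_\Gamma=\nullable{e}_{\Gamma[\alpha\mapsto\false]}$. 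I then establish two routine facts by structural induction: monotonicity ($\Gamma\le\Gamma'$ pointwise, with $\false\le\true$, implies $\nullable{e}_\Gamma\le\nullable{e}_{\Gamma'}$) and a substitution lemma ($\nullable{\apply{[\alpha\mapsto h]}{e}}_\Gamma=\nullable{e}_{\Gamma[\alpha\mapsto\nullable{h}]}$ for closed $h$, using requirement (B) to avoid capture). Writing $N=\nullable{\mu\alpha.e}=\nullable{e}_{\Gamma_0}$ and instantiating the substitution lemma with $h=\mu\alpha.e$ gives $\nullable{\apply{[\alpha\mapsto\mu\alpha.e]}{e}}=\nullable{e}_{\Gamma_0[\alpha\mapsto N]}$; a two-line case split on $N$ (using monotonicity when $N=\true$, and $\Gamma_0[\alpha\mapsto\false]=\Gamma_0$ when $N=\false$) shows this equals $N$, so the inductive hypothesis forces $N=\true$, as required.

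I expect the main obstacle to be exactly this $\mu$ case of the backward direction, precisely because $\nullable{}$ pins the recursion variable to $\false$ while the semantics unfolds it. The environment-indexed reformulation together with the substitution and monotonicity lemmas is designed to localise this self-referential reasoning: the case split on $N$ shows that unfolding does not change the truth value of nullability, which is the semantic counterpart of the purely syntactic convention $\nullable{\mu\alpha.e}=\nullable{e}$. All remaining steps are standard structural inductions.
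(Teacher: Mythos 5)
Your proof is correct, but it takes a genuinely different route from the paper's. For the backward direction the paper argues proof-theoretically on the big-step semantics: it shows, by induction on the number of unfolding steps, that any derivation of $e \reduce \varepsilon$ can be normalized into an unfolding-free derivation $e \reduceNoFold{} \varepsilon$ (no application of the $\mu$-rule survives, so the derivation never passes through a placeholder), and then reads off $\nullable{e}$ by a direct induction on that normalized derivation. You instead keep the derivation of $e \reduce \varepsilon$ intact and, in the $\mu$ case, transport nullability across the single unfolding via the algebraic invariant $\nullable{\apply{[\alpha\mapsto\mu\alpha.e]}{e}} = \nullable{\mu\alpha.e}$, established through your environment-indexed nullability together with the monotonicity and substitution lemmas; this is precisely the denotational counterpart of what the paper achieves by massaging derivations, and your case split on $N$ is the right way to close the self-referential loop. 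Your forward direction is also more explicit than the paper's, which simply declares it immediate: your strengthening with closing substitutions is in fact needed, because naive structural induction fails at $\mu$ in that direction too, so you are filling a gap the paper glosses over. The trade-offs: your route is more self-contained and yields a reusable, compositional substitution lemma for $\nullable{}$ (a standard environment-semantics tool that would also survive mechanization well), whereas the paper's normalization yields the unfolding-free relation $\reduceNoFold{}$ as a by-product, which it immediately reuses in the proof of Lemma~\ref{le:mkEmpty} to justify the construction of the empty parse tree $\mkEmpty{e}$; with your approach that lemma would instead be proved separately, e.g.\ by replaying your forward-direction induction at the level of parse trees, showing $\turns \mkEmpty{e} : e$ whenever $\nullable{e}$.
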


Based on the nullability check, we can derive empty parse trees (if they exist).

\begin{definition}[Empty Parse Tree]
  \bda{llcll}
  \mkEmpty{\varepsilon}& = \Eps
   &&
      \mkEmpty{e + f} & = \left \{ \ba{ll}   \Left\ \mkEmpty{e} & \mbox{if $\nullable{e}$}
                                    \\       \Right\ \mkEmpty{f} & \mbox{otherwise}
                             \ea
                             \right.
  \\
  \\
  \mkEmpty{\mu\alpha.e} & = \Fold\ \mkEmpty{e}
   &&
    \mkEmpty{e \conc f} & = \Seq{\mkEmpty{e}}{ \mkEmpty{f}}
  \eda   
\end{definition}

\begin{lemma}
 \label{le:mkEmpty} 
  Let $e$ be a context-free expression such that $\nullable{e}$.
  Then, we find that $\turns \mkEmpty{e} : e$ and $\flatten{\mkEmpty{e}} = \varepsilon$.
\end{lemma}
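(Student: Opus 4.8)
The plan is to prove Lemma~\ref{le:mkEmpty} by structural induction on the context-free expression $e$, where the induction hypothesis is precisely the statement of the lemma applied to strict subexpressions. I would prove both conjuncts ($\turns \mkEmpty{e} : e$ and $\flatten{\mkEmpty{e}} = \varepsilon$) simultaneously, since the flattening claim rides along naturally with the typing claim at each case. The key observation that makes the induction go through is that $\nullable{\cdot}$ is defined so as to guarantee that whenever the hypothesis $\nullable{e}$ holds, the recursive calls to $\mkEmpty{}$ in the definition are always applied to subexpressions that are themselves nullable; this is exactly what licenses the appeal to the induction hypothesis in each case.

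First I would dispatch the base cases. For $e = \varepsilon$ we have $\mkEmpty{\varepsilon} = \Eps$, and $\turns \Eps : \varepsilon$ holds by the axiom of Definition~\ref{def:typing-relation}, with $\flatten{\Eps} = \varepsilon$ by definition of flattening. The cases $e = \phi$, $e = x$, and $e = \alpha$ cannot arise, because $\nullable{\phi} = \nullable{x} = \nullable{\alpha} = \false$ contradicts the assumption $\nullable{e}$; so $\mkEmpty{}$ need not be defined on them and there is nothing to check. This vacuous handling of the non-nullable atoms is the crucial point that keeps $\mkEmpty{}$ well-defined under the hypothesis.

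Next I would treat the inductive cases. For $e = e_1 \conc e_2$, the hypothesis $\nullable{e_1 \conc e_2} = \nullable{e_1} \wedge \nullable{e_2}$ gives both $\nullable{e_1}$ and $\nullable{e_2}$, so the induction hypothesis applies to each, yielding $\turns \mkEmpty{e_1} : e_1$ and $\turns \mkEmpty{e_2} : e_2$; the $\conc$-rule then gives $\turns \Seq{\mkEmpty{e_1}}{\mkEmpty{e_2}} : e_1 \conc e_2$, and $\flatten{\Seq{\mkEmpty{e_1}}{\mkEmpty{e_2}}} = \varepsilon \conc \varepsilon = \varepsilon$. For $e = e_1 + e_2$, nullability gives $\nullable{e_1} \vee \nullable{e_2}$, and the definition of $\mkEmpty{}$ branches on which disjunct holds: if $\nullable{e_1}$ we use $\Left\ \mkEmpty{e_1}$ and apply the $+$-left rule (using the IH on $e_1$), otherwise $\nullable{e_2}$ must hold and we use $\Right\ \mkEmpty{e_2}$ with the $+$-right rule. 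The $\mu$-case $e = \mu\alpha.e_0$ is the one point requiring a little more care: here $\nullable{\mu\alpha.e_0} = \nullable{e_0}$, so I would want the induction hypothesis on $e_0$, giving $\turns \mkEmpty{e_0} : e_0$, and then conclude $\turns \Fold\ \mkEmpty{e_0} : \mu\alpha.e_0$ via the $\Fold$-rule.

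The main obstacle is precisely this $\mu$-case, because the $\Fold$-typing rule demands $\turns \Fold\ p : \mu\alpha.e_0$ from a premise $\turns p : \apply{[\alpha \mapsto \mu\alpha.e_0]}{e_0}$ about the \emph{unfolded} body, whereas the induction hypothesis as stated above only provides a derivation about the \emph{un-substituted} body $e_0$. To bridge this gap I would need an auxiliary fact that $\nullable{\cdot}$ is invariant under substitution of a nullable expression for a nullable placeholder — or more directly, that $\nullable{e_0}$ implies $\nullable{\apply{[\alpha\mapsto\mu\alpha.e_0]}{e_0}}$, together with a strengthening of the induction to range over the (finite) set of subterms obtained by such unfoldings rather than plain structural subterms. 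Alternatively, and more cleanly, I would reformulate the induction so that $\mkEmpty{}$ on $\mu\alpha.e_0$ is understood as producing $\Fold\ \mkEmpty{\apply{[\alpha\mapsto\mu\alpha.e_0]}{e_0}}$ and justify termination by the finiteness of distinct unfolded subterms (as invoked for the coercion soundness arguments earlier in the paper); I expect this reconciliation of the structural recursion in the definition of $\mkEmpty{}$ with the iso-recursive $\Fold$-rule to be where the real work lies, while every other case is a routine application of the induction hypothesis and the corresponding typing rule.
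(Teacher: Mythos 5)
Your induction skeleton is right, and you have correctly isolated the one non-routine point: the $\Fold$-rule wants $\turns \mkEmpty{e_0} : \apply{[\alpha\mapsto\mu\alpha.e_0]}{e_0}$ while a naive structural IH only yields $\turns \mkEmpty{e_0} : e_0$. However, neither of your two proposed bridges closes this gap. Your ``cleaner'' alternative --- redefining $\mkEmpty{\mu\alpha.e_0}$ as $\Fold\ \mkEmpty{\apply{[\alpha\mapsto\mu\alpha.e_0]}{e_0}}$ and appealing to finiteness of unfolded subterms --- is actually non-terminating. Take $e = \mu\alpha.(\alpha+\varepsilon)$, which is nullable since $\nullable{\alpha+\varepsilon} = \true$. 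Unfolding gives $\mu\alpha.(\alpha+\varepsilon) + \varepsilon$, whose \emph{left} summand is now nullable (whereas the bare placeholder $\alpha$ was not), so the sum clause of $\mkEmpty{}$ selects $\Left\ \mkEmpty{\mu\alpha.(\alpha+\varepsilon)}$ and you are back at the original call: the recursion revisits the same term forever. Finiteness of the set of unfolded subterms does not help, because termination fails precisely by revisiting an element of that finite set; the finiteness argument works for the coercion systems only because those judgments carry an environment $\Delta$ and a \ulabel{Hyp}/\dlabel{Hyp} rule that ties the knot into a $\REC$ term, and $\mkEmpty{}$ has no such mechanism. Your first patch (nullability is preserved by the substitution, plus ``induction over unfolded subterms'') is also not well-founded as stated, since unfoldings are larger, not smaller, and preservation of $\nullable{}$ alone does not yield the typing judgment you need.

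The correct bridge keeps the paper's structural definition $\mkEmpty{\mu\alpha.e_0} = \Fold\ \mkEmpty{e_0}$ and instead generalizes the statement over substitutions: for every $e$ with $\nullable{e}$ and every substitution $\psi$ mapping placeholders to closed context-free expressions, $\turns \mkEmpty{e} : \apply{\psi}{e}$ and $\flatten{\mkEmpty{e}} = \varepsilon$. This goes through by plain structural induction because $\nullable{\alpha} = \false$ ensures the recursion in $\mkEmpty{}$ only ever visits nullable subterms and hence never reaches a placeholder, so the tree $\mkEmpty{e}$ is independent of $\psi$ and substitution changes only the type side; in the $\mu$-case one instantiates the IH with $\psi$ extended by $[\alpha \mapsto \apply{\psi}{\mu\alpha.e_0}]$, which produces exactly the unfolding demanded by the $\Fold$-rule. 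The paper itself takes yet another route: it leans on the proof of Lemma~\ref{le:cfe-nullability}, where $e \reduce \varepsilon$ is normalized to an unfold-free derivation $e \reduceNoFold{} \varepsilon$, and then inducts on that derivation; since the unfold-free derivation descends from $\mu\alpha.e_0$ directly into the body $e_0$ without substituting, it mirrors the structural recursion of $\mkEmpty{}$ and absorbs the $\mu$-case mismatch into the earlier lemma. Either repair is fine; what you wrote, as it stands, is not.
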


We summarize the construction of upcast coercions
for context-free and regular expressions in containment relation.

\begin{theorem}[Upcast Coercions]
\label{th:upcast-coercions}  
  Let $e$ be a context-free expression and $r$ be a regular expression such that
  $e \leq r$ and $\upCoerce{}{c'}{e}{r}$ for some coercion $c'$.
  Let $c = \lambda x. c' \ (x, \mkEmpty{\plus{\reach{e}{r}}})$.
  Then, we find that $c$ is well-typed with type $e \rightarrow r$ where
  for any $\turns p : e$ we have that $\sem{\fapp{c}{p}}{} = t'$
  for some $t'$ where
  and $\turns t' : r$ and $\flatten{p} = \flatten{t'}$.
\end{theorem}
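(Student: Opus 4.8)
The plan is to instantiate the general Upcast Soundness lemma (Lemma~\ref{le:le-soundness}) by supplying the canonical empty parse tree $\mkEmpty{\plus{\reach{e}{r}}}$ for its second argument. Since $c = \lambda x.\ c'\ (x, \mkEmpty{\plus{\reach{e}{r}}})$, for any input tree $p$ we have $\sem{\fapp{c}{p}}{} = \sem{\fapp{c'}{(p, \mkEmpty{\plus{\reach{e}{r}}})}}{}$, so it suffices to check, for each $\turns p : e$, that the hypotheses of Lemma~\ref{le:le-soundness} hold with $t = \mkEmpty{\plus{\reach{e}{r}}}$ and then read off its conclusion. Well-typedness of $c$ at $e \rightarrow r$ is then immediate, since $c'$ has type $\upTy{e}{r} = (e \times \plus{\reach{e}{r}}) \rightarrow r$ and the fixed second argument inhabits $\plus{\reach{e}{r}}$.

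First I would check that $\plus{\reach{e}{r}}$ is nullable, so that $\mkEmpty{}$ is defined on it and yields a valid tree. Fix $p$ with $\turns p : e$; by Lemma~\ref{le:parse-trees-flattening} we have $e \reduce \flatten{p}$, hence $\flatten{p} \in L(e)$, so in particular $L(e) \neq \{\}$ and $\reach{e}{r}$ is non-empty. Since $e \leq r$, Theorem~\ref{th:containment-reachability} shows every element of $\reach{e}{r}$ is nullable; as $\plus{\reach{e}{r}} = \simp{r_1 + \ldots + r_n}$ is language-equivalent to $r_1 + \ldots + r_n$ (canonicalization respects $\semeq$), it is nullable too. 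In the degenerate case $L(e) = \{\}$, equivalently $\reach{e}{r} = \{\}$, there is no $p$ with $\turns p : e$ and the conclusion holds vacuously, with $c$ inhabiting $e \rightarrow r$ trivially. By Lemma~\ref{le:mkEmpty}, $\mkEmpty{\plus{\reach{e}{r}}}$ is a valid parse tree of $\plus{\reach{e}{r}}$ with $\flatten{\mkEmpty{\plus{\reach{e}{r}}}} = \varepsilon$, so it can serve as the second argument $t$ demanded by Lemma~\ref{le:le-soundness}.

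The main obstacle is discharging the side condition $\flatten{t} \in L(\deriv{r}{\flatten{p}})$ of Lemma~\ref{le:le-soundness}. Here $\flatten{t} = \varepsilon$, so by the defining property $L(\deriv{r}{\flatten{p}}) = \{ v \mid \flatten{p}\conc v \in L(r)\}$ this is equivalent to $\flatten{p} \in L(r)$. That in turn follows from $\flatten{p} \in L(e)$ (established above) together with $e \leq r$, i.e. $L(e) \subseteq L(r)$. With both hypotheses in place, Lemma~\ref{le:le-soundness} yields $\sem{\fapp{c'}{(p, \mkEmpty{\plus{\reach{e}{r}}})}}{} = t'$ for some $t'$ with $\turns t' : r$ and $\flatten{p} = \flatten{t'}$. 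By the computation in the first paragraph this value is exactly $\sem{\fapp{c}{p}}{}$, which completes the proof. The only subtlety beyond the side condition is the bookkeeping for the empty case $\reach{e}{r} = \{\}$, which must be separated out because $\mkEmpty{}$ is undefined on $\phi = \plus{\{\}}$, but it coincides with $L(e) = \{\}$ and is handled vacuously.
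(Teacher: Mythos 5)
Your proposal is correct and matches the paper's own argument, which likewise just combines Lemma~\ref{le:le-soundness} (via its appendix form) with Lemma~\ref{le:mkEmpty} and discharges the side condition $\flatten{t}=\varepsilon \in L(\deriv{r}{\flatten{p}})$ using nullability of the reachable expressions guaranteed by $e \leq r$ and Theorem~\ref{th:containment-reachability}. Your version is in fact slightly more careful than the paper's three-line proof: you make explicit the reduction of the side condition to $\flatten{p} \in L(r)$ via the left-quotient property, and you separate out the degenerate case $\reach{e}{r} = \{\}$ (where $\mkEmpty{\phi}$ is undefined), which the paper silently ignores.
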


\begin{figure}[tp]
  \bda{c}
  \ruleform{\downCoerce{\Delta}{c}{e}{r}}
  \\
  \\
  \dlabel{Eps}
  \
  \myirule{r \leq^{b} \simp{r}
    \qquad \ c = \lambda t. \Just\ (\Eps, \fapp{b}{t})}
          {\downCoerce{\Delta}{c}{\varepsilon}{r}}
  \\
  \\
  \dlabel{Sym} \
  \myirule{x \conc \simp{\deriv{r}{x}} \leq_b r
    \\
    c = \lambda t. 
    \CASE\ (\fapp{b}{t})\ \OF\ 
    [\Nothing \Rightarrow \Nothing,\ 
    \Just\ (\Seq {x'} {t'}) \Rightarrow \Just\ (x', t')]
  }
          {\downCoerce{\Delta}{c}{x}{r}}
  \\
  \\
  \dlabel{Alt} \
  \myirule{\downCoerce{\Delta}{c_1}{e}{r} \qquad \downCoerce{\Delta}{c_2}{f}{r}
             \\ \plus{\reach{e}{r}} \leq^{b_1} \plus{\reach{e+f}{r}}
             \qquad  \plus{\reach{f}{r}} \leq^{b_2} \plus{\reach{e+f}{r}}
             \\ c = \lambda t. \ba[t]{l}
                           \CASE\ (\fapp{c_1}{t}) \ \OF
                      \\   {}[\Nothing \Rightarrow \CASE\ (\fapp{c_2}{t}) \ \OF
                      \\   \ \ \ \ \ \ \ \ \ \ \ \ [\Nothing \Rightarrow \Nothing,
                      \\   \ \ \ \ \ \ \ \ \ \ \ \ \Just\ (p_2, t_2) \Rightarrow \Just\ (\Right\ p_2, \fapp{b_2}{t_2})],
                      \\   \Just\ (p_1, t_1) \Rightarrow \Just\ (\Left\ p_1, \fapp{b_1}{t_1})]
                    \ea
          }
          {\downCoerce{\Delta}{c}{e+f}{r}}
  \\
  \\
  \dlabel{Seq} \
  \myirule{\downCoerce{\Delta}{c_1}{e}{r} \qquad \downCoerce{\Delta}{c_2}{f}{\plus{\reach{e}{r}}}
    \\ c = \lambda t. \ba[t]{l}
                  \CASE\ (\fapp{c_1}{t}) \ \OF
             \\   {}[\Nothing \Rightarrow \Nothing,
             \\   \Just\ (p_1, t_1) \Rightarrow \CASE\ (\fapp{c_2}{t_1}) \ \OF
             \\   \ \ \ \ \ \ \ \ \ \ \ \ \ [\Nothing \Rightarrow \Nothing,
             \\   \ \ \ \ \ \ \ \ \ \ \ \ \ \Just\ (p_2, t_2) \Rightarrow \Just\ (\Seq{p_1}{p_2},t_2)]]
           \ea
          }
          {\downCoerce{\Delta}{c}{e \conc f}{r}}
  \\
  \\
  \dlabel{Rec} \
  \myirule{ v_{\alpha.e,r}  \not\in \Delta
    \ \ \
    \downCoerce{\Delta\cup \{  (v_{\alpha.e,r} : \dnTy{\mu\alpha.e}{r}) \}}{c'}{\apply{[\alpha\mapsto \mu\alpha.e]}{e}}{r}
    \\ c =  \REC\ v_{\alpha.e,r}. \lambda t.    \ba[t]{l}
    \CASE\ (\fapp{c'}{t}) \ \OF
    \\  {}[\Nothing \Rightarrow \Nothing,
    \\  \Just\ (p', t') \Rightarrow \Just\ (\Fold\ p', t')]      
    \ea
            }
          {\downCoerce{\Delta}{c}{\mu\alpha.e}{r}}
  \\
  \\
   \dlabel{Hyp} \
    \myirule{(v_{\alpha.e,r} : \dnTy{\mu\alpha.e}{r}) \in \Delta}
          {\downCoerce{\Delta}{v_{\alpha.e,r}}{\mu\alpha.e}{r}}
  \eda
  \caption{Reachability downcast coercions}
  \label{fig:downcast-coercions}  
\end{figure}  

In analogy to the construction of upcast coercions,
we can build a proof system for the construction of downcast coercions.
Each such downcast coercion $c$ has type $\dnTy{e}{r}$
where $\dnTy{e}{r}$ corresponds to
$r \rightarrow \Maybe\ (e \times \plus{\reach{e}{r}})$.
That is, a parse tree of $r$ can possibly be coerced
into a parse tree of $e$ and some residue
which is a parse tree of $\plus{\reach{e}{r}}$.
See Figure~\ref{fig:downcast-coercions}.

Rule \dlabel{Eps} performs a change in representation.
The downcast will always succeed.
Rule \dlabel{Sym} applies the regular downcast $b$ to split
$r$'s parse tree into $x$ and the parse tree of the (canonical) derivative.
The resulting downcast will not succeed if there is no leading $x$.

In case of a sum, rule \dlabel{Alt} first tests if we can downcast
$r$'s parse tree into a parse tree of the left component $e$
and $\plus{\reach{e}{r}}$. If yes, we upcast
$\plus{\reach{e}{r}}$'s parse tree into
a parse tree of $\plus{\reach{e+f}{r}}$.
Otherwise, we check if a downcast into $f$
and $\plus{\reach{f}{r}}$ is possible.

In rule \dlabel{Seq}, we first check if we can obtain
parse trees for $e$ and residue $\plus{\reach{e}{r}}$.
Otherwise, we immediately reach failure.
From $\plus{\reach{e}{r}}$'s parse tree we then attempt
to extract $f$'s parse tree and residue $\plus{\reach{f}{\plus{\reach{e}{r}}}}$
which we know is equivalent to $\plus{\reach{e\conc f}{r}}$.
Hence, we combine the parse trees of $e$ and $f$ via $\Seq$
and only need to pass through the residue.

As in case of upcast coercions, downcast coercions may be recursive.
See rules \dlabel{Rec} and \dlabel{Hyp}.
In case the downcast yields the parse tree $p'$ of the unfolding, we apply $\Fold$.
The residue $t'$ can be passed through as we find that
$\plus{\reach{\mu \alpha.e}{r}} = \plus{\reach{\apply{[\alpha\mapsto \mu\alpha.e]}{e}}{r}}$.

\begin{lemma}[Downcast Soundness]
  Let $e$ be a context-free expression and $r$ be a regular expression
  such that 
  $\downCoerce{\Delta}{c}{e}{r}$ for some coercion $c$.
  Let $t$ be such that
  $\turns t : r$ and $\sem{\fapp{c}{t}}{} = \Just\ (p,t')$
  for some $p$ and $t'$.
  Then, we have that $\turns p : e$, $\turns t' : \plus{\reach{e}{r}}$
  and $\flatten{t} = \flatten{p}$.
\end{lemma}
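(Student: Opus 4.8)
The plan is to establish all three conclusions simultaneously by induction on the derivation of $\downCoerce{\Delta}{c}{e}{r}$, with the two rules \dlabel{Rec} and \dlabel{Hyp} read coinductively. I would phrase a generalized statement quantified over $\Delta$ and over the downcast assumptions it records, so that the hypothesis $v_{\alpha.e,r} : \dnTy{\mu\alpha.e}{r}$ introduced at a \dlabel{Rec} node is available at the matching \dlabel{Hyp} leaf. The corecursive coercion built this way is well defined because each occurrence of $v_{\alpha.e,r}$ is guarded by a $\CASE$ on the recursive call and a $\Fold$; termination of the derivation tree itself is guaranteed by finiteness, as there are only finitely many pairs consisting of an unfolded subterm $\mu\alpha.e$ and a canonical descendant of $r$ (the same finiteness argument used after Lemma~\ref{le:le-soundness}).

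For the two typing conclusions $\turns p : e$ and $\turnsreg t' : \plus{\reach{e}{r}}$ I would check each rule against Definitions~\ref{def:typing-relation} and~\ref{def:typing-relation-regex}. Rule \dlabel{Eps} produces $(\Eps, \fapp{b}{t})$, typed by the $\varepsilon$-clause together with the upcast $r \leq^{b} \simp{r}$ and the identity $\plus{\reach{\varepsilon}{r}} = \simp{r}$; rule \dlabel{Sym} uses the regular downcast $b$ to split $t$ into a leading symbol typed at $x$ and a tail typed at $\simp{\deriv{r}{x}} = \plus{\reach{x}{r}}$. The remaining cases rely on the reachability identities already noted for Figure~\ref{fig:downcast-coercions}: in \dlabel{Alt} the upcasts $b_1, b_2$ retype the residue at $\plus{\reach{e+f}{r}}$; in \dlabel{Seq} the factorization $\plus{\reach{e\conc f}{r}} = \plus{\reach{f}{\plus{\reach{e}{r}}}}$ makes the residue returned by $c_2$ already have the required type, and $\Seq{p_1}{p_2}$ is typed by the concatenation rule; in \dlabel{Rec} the identity $\plus{\reach{\mu\alpha.e}{r}} = \plus{\reach{\apply{[\alpha\mapsto\mu\alpha.e]}{e}}{r}}$ lets us pass the residue through unchanged while wrapping the matched tree with $\Fold$.

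For the flattening identity $\flatten{t} = \flatten{p}$ I would induct on the same derivation, using throughout that the regular coercions of Lemma~\ref{le:regular-coercions} leave $\flatten{\cdot}$ invariant. The rules that reuse the \emph{whole} input $t$ are immediate: in \dlabel{Alt} the recursive call feeds the same $t$ to $c_1$ (or $c_2$), so from the induction hypothesis $\flatten{t} = \flatten{p_1}$ we obtain $\flatten{\Left\ p_1} = \flatten{p_1} = \flatten{t}$, and symmetrically on the right; in \dlabel{Rec} and \dlabel{Hyp} the same $t$ is reused and $\flatten{\Fold\ p'} = \flatten{p'}$, so the identity is inherited directly. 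The delicate cases are \dlabel{Eps}, \dlabel{Sym}, and \dlabel{Seq}, where $t$ is split into the fragment recorded in $p$ and the residue $t'$: here obtaining $\flatten{t} = \flatten{p}$ demands a precise account of exactly how much of $\flatten{t}$ is attributed to $p$ as opposed to $t'$, and threading this through the composition $c_2 \circ c_1$ in \dlabel{Seq} is where the argument concentrates. I expect this residue bookkeeping in the splitting rules to be the main obstacle, since it is precisely there that one must pin down the relationship between the word carried by $t$ and the word recorded by the matched parse tree $p$.
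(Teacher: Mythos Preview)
Your overall plan---induction on the downcast derivation, with \dlabel{Rec}/\dlabel{Hyp} handled via a hypothesis on the environment---is exactly the paper's approach, and your treatment of the two typing conclusions is fine. The gap is in the flattening part.

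The invariant $\flatten{t} = \flatten{p}$ as stated cannot be carried through the induction, and the ``delicate cases'' you flag are not merely delicate: for \dlabel{Eps} you return $p=\Eps$ with $\flatten{p}=\varepsilon$ while $t$ is an arbitrary parse tree of $r$, so $\flatten{t}=\flatten{p}$ is simply false in general. The same failure hits \dlabel{Sym} and propagates into \dlabel{Seq}. What the paper actually proves in the appendix (Lemma~\ref{le:downCoerce}) is the stronger, residue-aware invariant
\[
  \flatten{t} \;=\; \flatten{p}\conc\flatten{t'} ,
\]
together with $\flatten{t'}\in L(\deriv{r}{\flatten{p}})$. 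With this invariant every case is routine: \dlabel{Eps} gives $\varepsilon\conc\flatten{\fapp{b}{t}}=\flatten{t}$; \dlabel{Sym} uses the regular downcast to split off the leading symbol; \dlabel{Seq} chains the two inductive equalities $\flatten{t}=\flatten{p_1}\conc\flatten{t_1}$ and $\flatten{t_1}=\flatten{p_2}\conc\flatten{t_2}$; \dlabel{Alt} and \dlabel{Rec} go through as you described, now with the residue term carried along. The conclusion $\flatten{t}=\flatten{p}$ in the displayed lemma is a misprint for this invariant; your instinct that the bookkeeping of ``how much of $\flatten{t}$ is attributed to $p$ as opposed to $t'$'' is the crux was right, but the fix is to change the invariant, not to work harder at the individual cases.
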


\begin{example}
\label{ex:downcast}  
 We consider the derivation of $\downCoerce{}{c_0}{e}{r}$ where
   $e = \mu\alpha. x \conc (\alpha \conc y) + \varepsilon$,  
    $r = x^* \conc y^*$ and
 $\reach{e}{r} = \{ r, y^* \}$.
 The downcast coercion attempts to turn a parse of $r$
 into a parse tree of $e$ and some residual parse tree of $\plus\reach{e}{r}$.
 The construction is similar to Example~\ref{ex:upcast}.
 We consider the downcast coercions resulting from \dlabel{Rec} and \dlabel{Alt}.
  \bda{l}
  c_0 : \dnTy{e}{r}
  \ = \  \REC\ v_{\alpha.e,r}. \lambda t.    \ba[t]{l}
    \CASE\ (\fapp{c_1}{t}) \ \OF
    \\  {}[\Nothing \Rightarrow \Nothing,
    \\  \Just\ (p', t') \Rightarrow \Just\ (\Fold\ p', t')]      
    \ea
  \\
  \\    
  c_1 : \dnTy{x \conc (e \conc y) + \varepsilon}{r}
   \ = \ \lambda t. \ba[t]{l}
                           \CASE\ (\fapp{c_2}{t}) \ \OF
                      \\   {}[\Nothing \Rightarrow \CASE\ (\fapp{c_3}{t}) \ \OF
                      \\   \ \ \ \ \ \ \ \ \ \ \ \ [\Nothing \Rightarrow \Nothing,
                      \\   \ \ \ \ \ \ \ \ \ \ \ \ \Just\ (p_2, t_2) \Rightarrow \Just\ (\Right\ p_2, \fapp{b_2}{t_2})],
                      \\   \Just\ (p_1, t_1) \Rightarrow \Just\ (\Left\ p_1, \fapp{b_1}{t_1})]
                      \ea
\\
\mbox{where} \ r \leq^{b_1} r+y^* \ \ r \leq^{b_2} r+y^*
  \ \  b_1 = \Right \ \ b_2 = \Left
\eda

The auxiliary coercion $c_2$ greedily checks for a leading symbol $x$. Otherwise, we pick the base case \dlabel{Eps}
where the entire input becomes the residue.
This is dealt with by coercion $c_3$.
\bda{l}                        
  c_3 : \dnTy{\varepsilon}{r}
    \ = \  \lambda t. \Just\ (\Eps, \fapp{b_3}{t})
    \\ \mbox{where} \ \simp{r} = r \ \ b_3 =\lambda x.x \ \ r \leq^{b_3} \simp{r})

\eda

Coercion $c_2$ first checks for $x$, then recursively calls (in essence) $c_0$,
followed by a check for $y$.
Here are the details.
\bda{l}                        
  c_2 : \dnTy{x \conc (e \conc y)}{r}
   \ = \ \lambda t. \ba[t]{l}
                  \CASE\ (\fapp{c_4}{t}) \ \OF
             \\   {}[\Nothing \Rightarrow \Nothing,
             \\   \Just\ (p_1, t_1) \Rightarrow \CASE\ (\fapp{c_5}{t_1}) \ \OF
             \\   \ \ \ \ \ \ \ \ \ \ \ \ \ [\Nothing \Rightarrow \Nothing,
             \\   \ \ \ \ \ \ \ \ \ \ \ \ \ \Just\ (p_2, t_2) \Rightarrow \Just\ (\Seq{p_1}{p_2},t_2)]]
             \ea
\eda             
Auxiliary coercion $c_4$ checks for $x$ and any residue
is passed on to coercion $c_5$.
\bda{l}
  c_4 : \dnTy{x}{r}
  \ = \ \lambda t. 
    \CASE\ (\fapp{b_4}{t})\ \OF\ 
    [\Nothing \Rightarrow \Nothing,\ 
    \Just\ (\Seq {x'} {t'}) \Rightarrow \Just\ (x', t')]
  \\
  \mbox{where} \ \simp{\deriv{r}{x}} = r  \ \ \ x \conc r \leq_{b_4} r
  \\ \\ \ \ \ \ \ \ \ \ b_4 = \ba[t]{l}
            \lambda\Seq{t_1}{t_2}.
            \ba[t]{l}
            \CASE\ t_1 \ \OF\ [
              \\ \ \ \Fold\ \Right\ \Eps \Rightarrow \Nothing,
              \\ \ \ \Fold\ \Left\ \ (\Seq{t_3}{t_4}) \Rightarrow
                   \Just\ (\Seq{t_3}{(\Seq{t_4}{t_3})})]
            \ea
          \ea
\eda

In coercion $c_5$, we check for $e$ which then leads to the recursive call.
\bda{l}
  c_5 : \dnTy{e \conc y}{r}
  \ = \ \lambda t. \ba[t]{l}
                  \CASE\ (\fapp{c_7}{t}) \ \OF
             \\   {}[\Nothing \Rightarrow \Nothing,
             \\   \Just\ (p_1, t_1) \Rightarrow \CASE\ (\fapp{c_6}{t_1}) \ \OF
             \\   \ \ \ \ \ \ \ \ \ \ \ \ \ [\Nothing \Rightarrow \Nothing,
             \\   \ \ \ \ \ \ \ \ \ \ \ \ \ \Just\ (p_2, t_2) \Rightarrow \Just\ (\Seq{p_1}{p_2},t_2)]]
             \ea
   \\
  c_7 : \dnTy{e}{r} \ = \ v_{\alpha.e,r}
\eda

Finally, coercion $c_6$ checks for $y$

\bda{l}
  c_6 : \dnTy{y}{r + y^*}
  \ = \ \lambda t. \ba[t]{l}
    \CASE\ (\fapp{b_6}{t})\ \OF\ [ 
    \\ \ \ \Nothing \Rightarrow \Nothing,\ 
    \\ \ \ \Just\ (\Seq {x'} {t'}) \Rightarrow \Just\ (x', t')]
    \ea
    \\ \mbox{where} \ \simp{\deriv{r+y^*}{y}} = y^* \ y \conc y^* \leq_{b_6} r + y^*
    \\ \ \ \ \ \ \ \ \ \ b_6 = \lambda t.
             \ba[t]{l}
             \CASE\ t \ \OF\ [
               \\ \ \ \Left \ \Seq{t_1}{t_2} \Rightarrow \fapp{b_6'}{t_2},
               \\ \ \ \Right\ t \Rightarrow \fapp{b_6'}{t} ]            
             \ea
  \\ \ \ \ \ \ \ \ \ \ y \conc y^* \leq_{b_6'} y^*  
  \\ \ \ \ \ \ \ \ \ \ b_6' = \lambda t.
            \ba[t]{l}
            \CASE\ t \ \OF\ [
              \\ \ \ \Fold\ \Right\ \Eps \Rightarrow \Nothing,
              \\ \ \ \Fold\ \Left\ \ (\Seq{t_1}{t_2}) \Rightarrow
                   \Just\ (\Seq{t_1}{t_2}) ]
            \ea
  \eda

Consider input $t = \Seq{t_1}{t_2}$ where
$t_1 = \Fold\ (\Left\ \Seq{x}{(\Fold\ (\Right\ \Eps))})$,
$t_2 = \Fold\ (\Left\ \Seq{y}{(\Fold\ (\Right\ \Eps))})$,
$\turns t : r$ and $\flatten{t} = x \conc y$.
Then $\sem{\fapp{c_0}{t}} = \Just\ (p, t')$ where
$p= \Fold\ (\Left\ \Seq{x}{(\Seq{(\Fold\ (\Right\ \Eps))}{y})})$
and $\turns p : e$ and $\flatten{p} = x \conc y$.
For residue $t'$ we find $\flatten{t'} = \varepsilon$.
This completes the example.
\end{example}

Any context-free expression $e$ is contained in the regular language $\Sigma^*$.
We wish to derive a downcast coercion for this containment
which effectively represents a parser for $L(e)$.
That is, the parser maps a parse tree for $w \in \Sigma^*$
(which is isomorphic to $w$) to a parse tree $\turns p: e$ with $w = \flatten{p}$ if $w\in L(e)$.
However, our parser, like any predictive parser, is sensitive to the shape of context-free expressions.
So, we need to syntactically restrict the class of context-free expressions on which our parser can be applied.

\begin{definition}[Guarded Context-Free Expressions]
\label{def:guarded-cfe}  
  A context-free expression is \emph{guarded} if the expression
  is of the following shape:
  \bda{lcl}
  e, f & ::= & \phi \mid \varepsilon \mid x \in \Sigma \mid \alpha \in A
  \mid e + f \mid e \conc f \mid \mu \alpha. g
  \\
  g & ::= & x \conc e \mid \varepsilon \mid x \conc e + g
  \eda
  where for each symbol $x$ there exists at most one guard $x \conc e$ in $g$.
\end{definition}

For any context-free expression we find an equivalent guarded variant.
This follows from the fact that
guarded expressions effectively correspond to context-free grammars
in Greibach Normal Form.
We additionally impose the conditions that guards $x$ are unique
and $\varepsilon$ appears last.
This ensures that the parser leaves no residue behind.

\begin{theorem}[Predictive Guarded Parser]
  Let $e$ be a guarded context-free expression and $r$ be a regular expression
  such that $e \leq r$ and
  $\downCoerce{\Delta}{c'}{e}{r}$ for some coercion $c'$.
  Let $c = \lambda x. \CASE\ (c' \ (x)) \ \OF \
  [\Nothing \Rightarrow \Nothing, \Just\ (p,t') \Rightarrow \Just\ p]$
  Then, we find that $c$ is well-typed with type
  $r \rightarrow \Maybe\ e$ and terminates for any input $t$ 
   $\turns t : r$
  If $\sem{\fapp{c}{t}}{} = \Just\ p$
  for some $p$, then we have that $\turns p : e$
  and $\flatten{t} = \flatten{p}$.
\end{theorem}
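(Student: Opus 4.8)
The plan is to prove the three assertions of the statement separately: (i) $c$ is well-typed at type $r \rightarrow \Maybe\ e$; (ii) $\sem{\fapp{c}{t}}{}$ is defined (terminates) for every $\turnsreg t : r$; and (iii) whenever $\sem{\fapp{c}{t}}{} = \Just\ p$ we have $\turns p : e$ and $\flatten{t} = \flatten{p}$. Assertions (i) and (iii) are short and follow from material already in hand. For (i), a routine rule induction on the derivation of $\downCoerce{\Delta}{c'}{e}{r}$ in Figure~\ref{fig:downcast-coercions} shows that $c'$ inhabits $\dnTy{e}{r}$, that is $r \rightarrow \Maybe\ (e \times \plus{\reach{e}{r}})$; composing with the outer abstraction that matches on the result and drops the residue component then yields $c : r \rightarrow \Maybe\ e$. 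For (iii), observe that the wrapper only discards the residue, so $\sem{\fapp{c}{t}}{} = \Just\ p$ forces $\sem{\fapp{c'}{t}}{} = \Just\ (p,t')$ for some $t'$, and the Downcast Soundness lemma applied to this outcome directly delivers $\turns p : e$ and $\flatten{t} = \flatten{p}$.

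The substance of the theorem is (ii), termination, and this is where guardedness (Definition~\ref{def:guarded-cfe}) is indispensable. Since the downcast rules are coinductive, $c'$ is a genuinely recursive term: rule \dlabel{Rec} introduces the binder $v_{\alpha.e,r}$ and rule \dlabel{Hyp} appeals to it. For an unrestricted context-free expression this recursion may diverge, e.g. $\mu\alpha.\alpha$ unfolds forever without inspecting the input. My plan is to exclude this by well-founded induction on the length $|\flatten{t}|$ of the remaining input word. The key auxiliary claim, extracted from the guarded shape, is that every evaluation of $v_{\alpha.e,r}$ is reached only after a successful \dlabel{Sym} step has consumed a leading terminal, so that the recursive call always runs on a strictly shorter input.

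To justify the claim, recall that each $\mu$-body is a $g ::= x \conc e \mid \varepsilon \mid x \conc e + g$, so after \dlabel{Rec} unfolds $\mu\alpha.e$ the exposed expression is a finite alternation of guarded summands $x_i \conc e_i$ together with at most a trailing $\varepsilon$. Descending through this finite structure with \dlabel{Alt} and \dlabel{Seq} reaches a reoccurrence of $\mu\alpha.e$ (the only place \dlabel{Hyp} can fire) only inside some $e_i$, hence strictly after the leading $x_i$ has been processed by \dlabel{Sym}: if that \dlabel{Sym} fails the branch returns $\Nothing$ without recursing, and if it succeeds the residue has strictly smaller flattening. A lexicographic measure $(|\flatten{t}|,\mathit{size})$, where $\mathit{size}$ counts the constructors of the finite expression currently being traversed and is regrown at an unfolding only after the first component has provably decreased, then makes the recursion well-founded. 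Carrying this induction through the six downcast rules establishes that $\sem{\fapp{c'}{t}}{}$, and therefore $\sem{\fapp{c}{t}}{}$, is always defined.

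The main obstacle is precisely this termination argument: making rigorous that no \dlabel{Hyp} appeal is reachable without an intervening symbol-consuming \dlabel{Sym} step, and packaging this into a well-founded measure that simultaneously accommodates the structural descent through $+$ and $\conc$ (which shrinks the expression but not the input) and the $\mu$-unfolding (which regrows the expression, but only once the input has shrunk). The side conditions that the guards $x$ be unique and that $\varepsilon$ occur last are what render the descent deterministic: at most one summand $x_i \conc e_i$ can fire and $\varepsilon$ acts purely as a fall-through. This is the predictive, residue-free behaviour anticipated in the discussion preceding the theorem, and it keeps the traversal of each unfolded body finite and the coercion unambiguous, which is what allows the induction above to go through.
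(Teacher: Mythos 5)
The paper never actually proves this theorem: the appendix proves Downcast Soundness in its precise form (Lemma~\ref{le:downCoerce}) and works the unguarded counterexamples, but offers no termination argument for the guarded case, so there is no official proof to measure you against. Your decomposition is the natural one, and your part~(ii) is the right idea and the substantive contribution: the lexicographic measure $(|\flatten{t}|,\mathit{size})$, with guardedness guaranteeing that every \dlabel{Hyp} call site sits inside some $e_i$ behind a guard $x_i$, so that the \dlabel{Seq}/\dlabel{Sym} evaluation order forces a symbol to be consumed before any recursive binder is invoked, is exactly the classical recursive-descent-on-Greibach-normal-form argument that the paper's informal discussion and its two failure examples ($\mu\alpha.\varepsilon + x\conc(\alpha\conc y)$ and $\mu\alpha.\alpha\conc x+\varepsilon$) gesture at. To make it fully rigorous you would still need to note that there are only finitely many binders $v_{\alpha.e,r}$ (pairs of $\mu$-subterms and dissimilar descendants) and that the regular coercions $b$ from Lemma~\ref{le:regular-coercions} themselves terminate, but these are routine.

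The genuine gap is in your part~(iii). You take the main-text Downcast Soundness statement at face value and conclude $\flatten{t}=\flatten{p}$ directly, but the version the paper actually proves (Lemma~\ref{le:downCoerce}) yields only $\flatten{t}=\flatten{p}\conc\flatten{t'}$ together with $\flatten{t'}\in L(\deriv{r}{\flatten{p}})$ — and the main-text phrasing cannot be read literally, since rule \dlabel{Eps} returns the \emph{entire} input as residue: its coercion maps $t$ to $\Just\ (\Eps,\fapp{b}{t})$, so $\flatten{p}=\varepsilon$ while $\flatten{t}$ is arbitrary. Because the wrapper $c$ silently discards $t'$, the theorem's equality requires a separate lemma that $\flatten{t'}=\varepsilon$ whenever $\sem{\fapp{c'}{t}}{}=\Just\ (p,t')$; this is precisely the burden the paper assigns to the side conditions that guards be unique and $\varepsilon$ appear last (``the parser leaves no residue behind''), and your proposal never discharges it — the remark that $\varepsilon$ ``acts purely as a fall-through'' is an observation, not a proof. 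Worse, the bridging claim is not provable as stated: take the paper's own appendix example $e=\mu\alpha.x\conc\alpha+\varepsilon$ and $r=(x+y)^*$, which is guarded with $e\leq r$, and let $\turnsreg t:r$ with $\flatten{t}=y$. The derived parser's \dlabel{Sym} branch fails on $y$, the $\varepsilon$ fall-through fires, and $\sem{\fapp{c}{t}}{}=\Just\ (\Fold\ (\Right\ \Eps))$ with the whole input left as residue, so $\flatten{p}=\varepsilon\neq y=\flatten{t}$. Any correct proof must therefore either repair the statement (e.g., have $c$ succeed only when the residue $t'$ flattens to $\varepsilon$, which is checkable against an empty parse tree) or add hypotheses tying $t$ to $L(e)$, and then actually carry out the residue analysis; as written, your step~(iii) fails at exactly this point.
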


Guardedness is essential as shown by the following examples.
Consider $e' = \mu\alpha. \varepsilon + x \conc (\alpha \conc y)$,  
$r = x^* \conc y^*$.
The difference to $e$ from Example~\ref{ex:downcast} is that
subexpression $\varepsilon$ appears in leading position.
Hence, the guardedness condition is violated.
The downcast coercion for this example (after some simplifications) has the form
$ c_0 = \lambda t. (\Fold\ (\Left\ \Eps), t).$
%
As we can see no input is consumed at all. We return the trivial parse term 
and the residue $t$ contains the unconsumed input.
As an example for a non-terminating parser consider $e' = \mu \alpha. \alpha \conc x + \varepsilon$ and
$r =(x+y)^*$. Again the guardedness condition is violated
because subexpression $\alpha$ is not guarded.
The coercion resulting from $\downCoerce{}{c_0'}{e'}{r}$
has (after some simplifications) the following form
$ c_0' = \REC\ v. \lambda t. \CASE\ v \ (t) \ \OF\ \dots.$
%
Clearly, this parser is non-terminating which is no surprise as the context-free expression is left-recursive.

\section{Related Work and Conclusion}
\label{sec:related-work}

Our work builds upon prior work
in the setting of regular expressions
by Frisch and Cardelli~\cite{cduce-icalp04},
Henglein and Nielsen~\cite{Henglein:2011:REC:1926385.1926429} and
Lu and Sulzmann~\cite{DBLP:conf/aplas/LuS04,Sulzmann:2006:TEX:1706640.1706940}, as well as Brandt
and Henglein's coinductive characterization of recursive type equality and subtyping
\cite{Brandt:1998:CAR:291677.291678}. 
We extend these ideas to the case of context-free expressions and their parse trees.

There are simple standard methods to construct predictive parsers (e.g., recursive descent etc)
contained in any textbook on compiler construction~\cite{Aho:1986:CPT:6448}.
But the standard methods are tied to parse from a single regular input language, $\Sigma^*$,
whereas our approach provides specialized parsers from an arbitrary regular language.
These parsers will generally be more deterministic, fail earlier, etc. because they are exploiting knowledge about the input. 

Based on our results we obtain a predictive parser for guarded context-free expressions.
Earlier works in this area extend Brzozowski-style derivatives~\cite{321249}
to the context-free setting while we use plain regular expression
derivatives in combination with reachability.
See the works by Krishnaswami~\cite{cfe-typed},
Might, Darais and Spiewak~\cite{DBLP:conf/icfp/MightDS11}
and Winter, Bonsangue, and Rutten \cite{DBLP:journals/corr/WinterBR13}.
Krishnaswami~\cite{cfe-typed} shows how to elaborate general context-free expressions
into some equivalent guarded form and how to transform guarded parse trees into their
original representation.
We could integrate this elaboration/transformation step into our approach
to obtain a geneneral, predictive parser for context-free expressions.

Marriott, Stuckey, and Sulzmann~\cite{DBLP:conf/aplas/MarriottSS03}
show how containment among context-free languages and regular languages
can be reduced to a reachability problem~\cite{Reps:1997:PAV:271338.271343}.
While they represent languages as context-free grammars and DFAs, we rely on context-free
expressions, regular expressions, and specify
reachable states in terms of Brzozowski-style derivatives~\cite{321249}.
This step is essential to obtain a characterization of the
reachability problem in terms of a natural-deduction style
proof system.
By applying the proofs-are-programs principle we derive
upcast and downcast coercions to transform parse trees
of context-free and regular expressions.
These connections are not explored in any prior work.

\section*{Acknowledgments}

We thank the APLAS'17 reviewers for their constructive feedback.

\bibliography{main}

\clearpage
\appendix
\section*{Appendix}
\section{Notation}

\begin{definition}[Substitution]
  We write $[\alpha_1 \mapsto e_1, \ldots, \alpha_n \mapsto e_n]$
  to denote a substitution mapping variables $\alpha_i$ to expressions $e_i$.
  We maintain the invariant that the
  free variables (if any) in $e_i$ are disjoint from $\alpha_i$.
  That is, the substitutions we consider here are \emph{idempotent}.

  Let $\psi = [\alpha_1 \mapsto e_1, \ldots, \alpha_n \mapsto e_n]$.
  Then, we define $\psi(\alpha) = e$ if there exists $i$ such that
  $\alpha_i = \alpha$ and $e_i = e$.
  This extends to expressions in the natural way.

Let $\psi  = [\alpha_1 \mapsto e_1, \ldots, \alpha_n \mapsto e_n]$.
Then, we define $\restrict{\psi}{\alpha_1} = [\alpha_2 \mapsto e_2, \ldots, \alpha_n \mapsto e_n]$.
  
  We write $\id$ to denote the empty substitution $[]$.
  Let $\psi_1 = [\alpha_1 \mapsto e_1, \ldots, \alpha_n \mapsto e_n]$
  and $\psi_2 = [\beta_1 \mapsto f_1, \ldots, \beta_m \mapsto f_m]$
  be two substitutions where $\alpha_i$ and $\beta_j$ are distinct
  and the free variables in $e_i$ and $f_j$ are disjoint from $\alpha_i$
  and $\beta_j$.
  Then, we define $\psi_1 \sqcup \psi_2 =
  [\alpha_1 \mapsto e_1, \ldots, \alpha_n \mapsto e_n,
    \beta_1 \mapsto f_1, \ldots, \beta_m \mapsto f_m]$.
\end{definition}  

\section{Coercion Semantics}
\label{sec:coercion-semantics}

Values are elements of a complete partial order $\Val$ which is defined
as the least solution of the following domain equation, where ``$+$'' and ``$\Sigma$'' stand for the
lifted sum of of domains. The distinguished element $\Wrong$ (wrong) 
will be used to indicate errors. The resulting domain yields a non-strict interpretation.

\begin{definition}[Values]
\bda{ccc}
    \Val &  = & \{\Wrong\} + (\Val \rightarrow \Val) + \sum_{k \in \Kons} (\{k\} \times \Val_1 \times\dots\times \Val_{arity(k)})
\eda
\end{definition}
We write $a$ to denote values, i.e.~elements in $\Val$.
We abuse notation by writing $\__\bot $ for the injections into the sum on the left-hand side (the
summand is clear from the argument) and $\downarrow $ (drop) 
for their right-inverses.

To define the meaning of coercions, we first establish
a semantic match relation among values and patterns to obtain the binding
of pattern variables. We write $\sqcup$ for the disjoint union of environments $\eta$ which map
variables to values.

\begin{definition}[Pattern Matching, $\matchRel{a}{pat}{\eta}$]
  \bda{c}
    \matchRel{a}{v}{[v \mapsto a]}
    \ \ \
    \myirule{
      \matchRel{\downarrow a_1}{pat_1}{\eta_1} \ \dots\ 
      \matchRel{\downarrow a_n}{pat_n}{\eta_n}
    \quad n = arity(k)}
            {\matchRel{(k, a_1,  \dots, a_n)}{k \ pat_1 \ ... pat_n}{\eta_1 \sqcup ... \sqcup \eta_n}}
  \eda
\end{definition}
Pattern matching may fail, for example, in case of differences
in constructors and number of arguments.
We write $\Just\ \eta = \match{v}{pat}$ to indicate that $\matchRel{v}{pat}{\eta}$
is derivable. Otherwise $\match{v}{pat} = \Nothing$.

\begin{definition}[Coercion Semantics, $\sem{c}{\eta}$]
  \bda{l}
  \sem{v}{\eta} \  = \ \eta(v)
  \\
  \\
  \sem{\lambda v.c}{\eta} \ = \ (\lambda y.\sem{c}{(\eta \sqcup [v \mapsto y])})_\bot
  \\
  \\
  \sem{k \ c_1 ... c_{arity(k)}}{\eta} \ = \ (k, \sem{c_1}{\eta}, \dots,\sem{c_{arity(k)}}{\eta})_\bot
  \\
  \\
  \sem{c_1 \ c_2}{\eta} \ = \ \mbox{if $\downarrow\sem{c_1}{\eta} \in \Val \rightarrow \Val$ then $\downarrow(\sem{c_1}{\eta}) \ (\sem{c_2}{\eta})$ else $\Wrong_\bot$}
  \\
  \\
  \sem{\CASE\ c \ \OF\ [pat_1 \Rightarrow c_1, \ldots, pat_n \Rightarrow c_n]}{\eta} \ = \
  \\ \ \ \ \mbox{let $y = {\downarrow{\sem{c}{\eta}}}$ in }
  \\ \ \ \ \mbox{if $\Just\ \eta_1 = \match{y}{pat_1}$ then $\sem{c_1}{(\eta \sqcup \eta_1)}$}
  \\ \ \ \ ...
  \\ \ \ \ \mbox{else if $\Just\ \eta_n = \match{y}{pat_n}$ then $\sem{c_1}{(\eta \sqcup \eta_n)}$}
  \\ \ \ \ \mbox{else $\Wrong_\bot$}
  \eda

\end{definition}

\section{Downcast Example}
\label{sec:downcast-example}

\subsection{$\downCoerce{}{c_0}{\mu\alpha. x \conc \alpha + \varepsilon}{(x+y)^*}$}

  Consider $e = \mu\alpha. x \conc \alpha + \varepsilon$ and $r = (x+y)^*$.
  We have that $\reach{e}{r} = \{ r \}$. Recall that $\simp{\deriv{(x+y)^*}{x}} = (x+y)^*$.

  The following derivation tree verifies that $\{\} \turns \reachRel{r}{e}{\{r\}}$.
   \bda{c}
   \inferrule*[left = \rlabel{Rec}]
   {
     \inferrule*[left = \rlabel{Alt}]
     {
       \inferrule*[left = \rlabel{Seq}]
       {
         \rlabel{Sym} \ \ \{ \reachRel{r}{e}{\{r\}} \} \turns \reachRel{r}{x}{\{r\}} \checkmark                                 
        \\              
        \rlabel{Hyp} \ \  \{ \reachRel{r}{e}{\{r\}} \} \turns \reachRel{r}{e}{\{r\}} \checkmark
       }
       {
         \{ \reachRel{r}{e}{\{r\}} \} \turns \reachRel{r}{x \conc e}{\{r\}}
       }
       \\
       \rlabel{Eps} \ \ \{ \reachRel{r}{e}{\{r\}} \} \turns \reachRel{r}{\varepsilon}{\{ r \}} \checkmark
     }
     {
       \{ \reachRel{r}{e}{\{r\}} \} \turns \reachRel{r}{x \conc e+ \varepsilon}{\{r\}}
     }         
   }
   {
     \{ \} \turns \reachRel{r}{e}{\{r\}}                 
   }             
   \eda

   The derivation of $\downCoerce{}{c_0}{e}{r}$ follows the shape of the above derivation tree.
   We find the following coercions where for clarity we label them with the corresponding
   rule names. Definitions of auxiliary regular coercions are found at the end.

   \bda{llcl}
   \dlabel{Rec} & c_0 & = &
   \ba[t]{l}
   \REC\ v_{\alpha}. \lambda t. \ba[t]{l}
   \CASE\ c_1 \ (t) \ \OF
   \\ \ \ [\Nothing \Rightarrow \Nothing,
   \\ \ \ [\Just\ (p',t') \Rightarrow \Just\ (\Fold\ p',t')]
   \ea
   \ea
   \\
   \\
   \dlabel{Alt} & c_1 & = &
   \ba[t]{l}
   \lambda t. \ba[t]{l}
   \CASE\ c_2 \ (t) \ \OF
   \\ \ \ [\Nothing \Rightarrow \CASE\ (c_3 \ (t)) \ \OF
   \\ \ \ \ \ \ \ \ \ \ \ \ \ [\Nothing \Rightarrow \Nothing,
   \\ \ \ \ \ \ \ \ \ \ \ \ \ \Just\ (p_2,t_2) \Rightarrow \Just\ (\Right\ p_2, b_2 \ (t_2))],
   \\ \ \ \Just\ (p_1,t_1) \Rightarrow \Just \ (\Left\ p_1, b_1 \ (t_1))]
   \ea
   \ea
   \\
   \\
   \dlabel{Seq} & c_2 & = &
   \ba[t]{l}
   \lambda t. \ba[t]{l}
   \CASE\ (c_3 \ (t)) \ \OF
   \\ \ \ [\Nothing \Rightarrow \Nothing,
   \\ \ \ \Just\ (p_1,t_1) \Rightarrow \CASE\ (c_4 \ (t_1)) \ \OF
   \\ \ \ \ \ \ \ \ \ \ \ \ \ [\Nothing \Rightarrow \Nothing,
   \\ \ \ \ \ \ \ \ \ \ \ \ \ \Just\ (p_2,t_2) \Rightarrow \Just\ (\Seq{p_1}{p_2}, t_2)]]     
   \ea
   \ea
   \\
   \\
   \dlabel{Sym} & c_3 & = &
   \lambda t. 
    \CASE\ (\fapp{b_4}{t})\ \OF\ 
    [\Nothing \Rightarrow \Nothing,\ 
      \Just\ (\Seq {x'} {t'}) \Rightarrow \Just\ (x', t')]
   \\
   \\
   \dlabel{Hyp} & c_4 & = & v_{\alpha}    
   \\
   \\
   \dlabel{Eps} & c_5 & = & \lambda t. \Just\ (\Eps, b_3 \ (t))    
   \eda

      \bda{c}
   \myirule{
     \reach{e \conc x}{r} = r \ \ \reach{e\conc x + \varepsilon}{r} = r
     \ \ b_1 = \lambda t.t
     }{\plus{\reach{e \conc x}{r}} \leq^{b_1} \plus{\reach{e\conc x + \varepsilon}{r}}}
   \eda

   \bda{c}
   \myirule{
     \reach{\varepsilon}{r} = r \ \ \reach{e\conc x + \varepsilon}{r} = r
     \ \ b_2 = \lambda t.t
     }{\plus{\reach{\varepsilon}{r}} \leq^{b_2} \plus{\reach{e\conc x + \varepsilon}{r}}}
   \eda

   \bda{c}
   \myirule{b_3 = \lambda t.t \ \ \simp{r}=r}{r \leq^{b_3} \simp{r}}    
   \eda

   \bda{c}
   cons = \REC\ v. \ba[t]{l}
   \lambda x. \lambda xs. \ba[t]{l}
   \CASE\ xs \ \OF\ [
     \\ \ \ \Fold\ (\Right\ \Eps) \Rightarrow \Fold\ (\Left\ (\Seq{x}{(\Fold\ (\Right\ \Eps))})),
     \\ \ \ \Fold\ (\Left\ (\Seq{y}{ys})) \Rightarrow \Fold\ (\Left\ (\Seq{x}{(\Fold\ (\Left\ (\Seq{y}{ys})))}))]
    \ea
   \ea
   \eda

   \bda{c}
   \myirule{
     b_4 = \REC\ v. \ba[t]{l}
     \lambda t. \ba[t]{l}
                \CASE\ t \ \OF\ [
                  \\ \ \ \Fold\ (\Right\ \Eps) \Rightarrow \Nothing,
                  \\ \ \ \Fold\ (\Left\ (\Seq{(\Right \ (\Sym \ y))}{t_2})) \Rightarrow \Nothing,
                  \\ \ \ \Fold\ (\Left\ (\Seq{(\Left \ (\Sym \ x))}{(\Fold\ (\Right\ \Eps))})) \Rightarrow
                  \\ \ \ \ \ \ \ \ \ \ \ \ \Just\ (\Seq{(\Sym\ x)}{(\Fold\ (\Right\ \Eps))}),
                  \\ \ \ \Fold\ (\Left\ (\Seq{(\Left \ (\Sym \ x))}{t_2})) \Rightarrow
                  \\ \ \ \ \ \ \CASE\ (v \ (t_2)) \ \OF\ [
                  \\ \ \ \ \ \ \ \ \Nothing \Rightarrow \Nothing,
                  \\ \ \ \ \ \ \ \ \Just\ (\Seq{(\Sym\ x_2)}{t_3}) \Rightarrow \Just\ (\Seq{(\Sym\ x)}{(cons \ (\Sym\ x_2) \ t_3)})]]
             \ea
           \ea
     }
     {x \conc r \leq_{b_4} r}

   \eda
   
\subsection{$\downCoerce{}{c_0'}{\mu\alpha. \varepsilon + x \conc \alpha}{(x+y)^*}$}

By reusing the above calculations we obtain

\bda{llcl}
   \dlabel{Rec} & c_0 & = &
   \ba[t]{l}
   \REC\ v_{\alpha}. \lambda t. \ba[t]{l}
   \CASE\ c_1' \ (t) \ \OF
   \\ \ \ [\Nothing \Rightarrow \Nothing,
   \\ \ \ [\Just\ (p',t') \Rightarrow \Just\ (\Fold\ p',t')]
   \ea
   \ea
   \\
   \\
   \dlabel{Alt} & c_1' & = &
   \ba[t]{l}
   \lambda t. \ba[t]{l}
   \CASE\ c_2' \ (t) \ \OF
   \\ \ \ [\Nothing \Rightarrow \CASE\ (c_3' \ (t)) \ \OF
   \\ \ \ \ \ \ \ \ \ \ \ \ \ [\Nothing \Rightarrow \Nothing,
   \\ \ \ \ \ \ \ \ \ \ \ \ \ \Just\ (p_2,t_2) \Rightarrow \Just\ (\Right\ p_2, b_1 \ (t_2))],
   \\ \ \ \Just\ (p_1,t_1) \Rightarrow \Just \ (\Left\ p_1, b_2 \ (t_1))]
   \ea
   \ea
   \\
   \\
   \dlabel{Eps} &   c_2' & = & \lambda t. \Just\ (\Eps, b_3 \ (t))
   \\
   \\
   \dots
\eda

So, by unfolding and removing dead code we find that

\bda{c}
c_0' = \lambda t. (\Eps, t)
\eda

\subsection{$\downCoerce{}{c_0''}{\mu\alpha. \alpha \conc x + \varepsilon}{(x+y)^*}$}

Via similar reasoning we find that

\bda{llcl}
   \dlabel{Rec} & c_0'' & = &
   \ba[t]{l}
   \REC\ v_{\alpha}. \lambda t. \ba[t]{l}
   \CASE\ c_1'' \ (t) \ \OF
   \\ \ \ [\Nothing \Rightarrow \Nothing,
   \\ \ \ [\Just\ (p',t') \Rightarrow \Just\ (\Fold\ p',t')]
   \ea
   \ea
   \\
   \\
   \dlabel{Alt} & c_1'' & = &
   \ba[t]{l}
   \lambda t. \ba[t]{l}
   \CASE\ c_2'' \ (t) \ \OF
   \\ \ \ [\Nothing \Rightarrow \CASE\ (c_3'' \ (t)) \ \OF
   \\ \ \ \ \ \ \ \ \ \ \ \ \ [\Nothing \Rightarrow \Nothing,
   \\ \ \ \ \ \ \ \ \ \ \ \ \ \Just\ (p_2,t_2) \Rightarrow \Just\ (\Right\ p_2, b_1 \ (t_2))],
   \\ \ \ \Just\ (p_1,t_1) \Rightarrow \Just \ (\Left\ p_1, b_2 \ (t_1))]
   \ea
   \ea
   \\
   \\
   \dlabel{Hyp} &   c_2'' & = & v_{\alpha}
   \\
   \\
   \dots
\eda

\section{Least Fixed Point Construction for $\reach{e}{r}$}
\label{sec:compute-fix-point}

To compute $\reach e r$, we need to compute $\reach{e'}{r'}$ for all
subterms of $e$ and for certain $r$. To capture this notion exactly,
we define a function to compute the set of subterms of a context-free
expressions. 

\begin{definition}[Subterms]
  \bda{c}
    \SubTerm{\varepsilon} = \{ \varepsilon \}
    \ \ \ 
    \SubTerm{\phi} = \{ \phi \}
    \ \ \
    \SubTerm{x} = \{ x \}
    \\
    \\
    \SubTerm{e + f} = \{ e + f \} \cup \SubTerm{e} \cup \SubTerm{f}
    \ \ \
    \SubTerm{e \conc f} = \{ e \conc f \} \cup \SubTerm{e} \cup \SubTerm{f}
    \\
    \\
    \SubTerm{\alpha} = \{ \alpha \}
    \ \ \
    \SubTerm{\mu \alpha.e} = \{ \mu\alpha.e \} \cup \SubTerm{e}
 \eda   
  \end{definition}

\begin{lemma}
For any context-free expression $e$, the set $\SubTerm{e}$ is finite.
\end{lemma}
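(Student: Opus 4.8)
The plan is to prove the statement by straightforward structural induction on the context-free expression $e$, following exactly the clauses of the preceding definition of $\SubTerm{\cdot}$. Since every element of $\CFExp$ is built inductively from the grammar for \cfe s, and since $\SubTerm{\cdot}$ is defined by one equation per syntactic form, the induction mirrors the recursion in the definition clause by clause.

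First I would dispatch the base cases. For each of $e \in \{\varepsilon, \phi, x, \alpha\}$ the defining equation assigns a singleton set (for instance $\SubTerm{\varepsilon} = \{\varepsilon\}$ and $\SubTerm{x} = \{x\}$), which is trivially finite. Then come the inductive cases. For the binary forms, the definition yields $\SubTerm{e+f} = \{e+f\} \cup \SubTerm{e} \cup \SubTerm{f}$ and likewise $\SubTerm{e \conc f} = \{e \conc f\} \cup \SubTerm{e} \cup \SubTerm{f}$; by the two induction hypotheses $\SubTerm{e}$ and $\SubTerm{f}$ are finite, and the union of two finite sets together with one extra element is again finite, so each case closes. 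For the recursion form, $\SubTerm{\mu\alpha.e} = \{\mu\alpha.e\} \cup \SubTerm{e}$ is finite by the induction hypothesis on $e$ together with one additional element.

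There is no genuine obstacle here: the only external fact used beyond the induction is that a finite union of finite sets is finite and that adjoining a single element preserves finiteness. I would remark in passing that the same induction yields, for free, the explicit bound $|\SubTerm{e}| \le |e|$, where $|e|$ denotes the number of nodes in the abstract syntax tree of $e$, since each clause contributes at most its own root node plus the subterms of its immediate children; this sharper statement is not needed for the lemma but is obtained at no extra cost and is convenient when later bounding the cost of computing $\reach{e}{r}$ over all subterms of $e$.
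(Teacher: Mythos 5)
Your proof is correct: the paper states this lemma without any proof, precisely because it follows by the routine structural induction on $e$ that you carry out, with each clause of the definition of $\SubTerm{e}$ adding at most one element to a union of finitely many (inductively finite) sets. Your extra observation that the same induction gives the bound $|\SubTerm{e}| \le |e|$ is also correct and harmless, though not needed.
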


We write $R$ and $S$ to denote sets of regular expressions.
We write $E$ to denote an equation of the form $(e,r) = R$.
We can view a set $\E$ of such equations as a mapping from pairs $(e,r)$ to $R$.
If $(e,r) = R \in \E$ then we write $\E(e,r)$ to denote $R$.
If no such equation exists in $\E$, then we set $\E(e,r) = \emptyset$.

We define $\EqDom{e}{r}$ as the set of equations where the
pairs range over subterms of $e$ and derivatives of $r$ and map to
sets of descendants of  $r$.
\bda{c}
\EqDom{e}{r} = \{ (f,s) = S \mid f \in \SubTerm{e}, s \in \Desc{r}, S \subseteq \Desc{r} \}
\eda
  
For two sets of equations $\E_1,\E_2 \subseteq \EqDom{e}{r}$, we define $\E_1 \leq \E_2$
if for each $(f,s) \in \SubTerm{e} \times \Desc{r}$ we have that $\E_1(f,s) \subseteq \E_2(f,s)$.
This definition makes $(\EqDom{e}{r}, \leq)$ a complete, finite lattice with
least element $\{ \}$ and greatest element
$\{ (f,s) = \Desc{r} \mid (f,s) \in \SubTerm{e} \times \Desc{r} \}$.

\pt{Essentially, we are talking about the powerset lattice of
  $\SubTerm{e} \times \Desc{r} \times \Desc{r}$ considered as a
  partial function. }

Next, we define
the reachability step function $\ReachF{}{}{}$
which operates on $\SubTerm{e} \times \Desc{r} \times \Power\EqDom{e}{r}$
and yields a subset of $\Desc{r}$.
\begin{definition}[Reachability Step]
  \bda{lcl}
  \ReachF{\phi}{r}{\E} & = & \{ \}
  \\
  \ReachF{\varepsilon}{r}{\E} & = & \{ \simp{r} \}
  \\
  \ReachF{x}{r}{\E} & = & \{ \simp{\deriv{r}{x}} \}
  \\
  \ReachF{e + f}{r}{\E} & = & \ReachF{e}{r}{\E} \cup \ReachF{f}{r}{\E}
  \\
  \ReachF{e \conc f}{r}{\E} & = & \bigcup_{s \in \ReachF{e}{r}{\E}} \ReachF{f}{s}{\E}
  \\
  \ReachF{\alpha}{r}{\E} & = & \E(\mu\alpha.e,r)
  \\
  \ReachF{\mu\alpha.e}{r}{\E} & = & \ReachF{e}{r}{\E}
  \eda
  For the second last case, we assume that each variable $\alpha$
  can be linked to its surrounding scope $\mu\alpha.e$.
  This is guaranteed by the fact that we consider a fixed set of subterms.
  
  We sometimes write $\ReachF{f}{\ReachF{e}{r}{\E}}{\E}$
  as a shorthand for $\bigcup_{s \in \ReachF{e}{r}{\E}} \ReachF{f}{s}{\E}$.
\end{definition}

\begin{definition}[Reachability Function]
  Let $e$ be a context-free expression and $r$ be a regular expression.
  We define $\ReachFunc{e}{r} : \EqDom{e}{r} \rightarrow \EqDom{e}{r}$ as follows:
   \begin{align*}
     \ReachFunc{e}{r}(\E) =
      \{ (f,s) = \E (f,s) \cup \ReachF{f}{s}{\E} \mid (f,s) \in
     \SubTerm{e} \times \Desc{r} \}
   \end{align*}
\end{definition}

\begin{lemma}
  Function $\ReachFunc{e}{r}$ is well-defined and monotonic with respect to the ordering $\leq$.
\end{lemma}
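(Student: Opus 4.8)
The plan is to reduce both claims to a single structural induction on the subterm argument $f$ of the reachability step function $\ReachF{f}{s}{\E}$. Concretely, I would first establish two auxiliary facts, each by induction on the structure of $f \in \SubTerm{e}$: (i) \emph{closure}, that for every $s \in \Desc{r}$ and every $\E \in \EqDom{e}{r}$ one has $\ReachF{f}{s}{\E} \subseteq \Desc{r}$; and (ii) \emph{step monotonicity}, that $\E_1 \leq \E_2$ implies $\ReachF{f}{s}{\E_1} \subseteq \ReachF{f}{s}{\E_2}$. Both facts must be stated with $s$ universally quantified over $\Desc{r}$ so that the induction hypothesis is available at every descendant. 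The two desired properties of $\ReachFunc{e}{r}$ then follow immediately: since $\ReachFunc{e}{r}(\E)(f,s) = \E(f,s) \cup \ReachF{f}{s}{\E}$ and $\E(f,s) \subseteq \Desc{r}$ (because $\E \in \EqDom{e}{r}$), fact (i) gives that the value lies in the powerset of $\Desc{r}$, hence $\ReachFunc{e}{r}(\E) \in \EqDom{e}{r}$ (well-definedness); and from $\E_1 \leq \E_2$ we get $\E_1(f,s) \subseteq \E_2(f,s)$ by definition and $\ReachF{f}{s}{\E_1} \subseteq \ReachF{f}{s}{\E_2}$ by (ii), so the unions are ordered componentwise (monotonicity).

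For the closure fact (i) the base cases are routine: $\ReachF{\phi}{s}{\E} = \{\}$; $\ReachF{\varepsilon}{s}{\E} = \{\simp{s}\} = \{s\}$ because $s \in \Desc{r}$ is already canonical; and $\ReachF{x}{s}{\E} = \{\simp{\deriv{s}{x}}\}$, where $\deriv{s}{x}$ is a descendant of $r$ because $s$ is, so its canonical form lies in $\Desc{r}$. The placeholder case $\ReachF{\alpha}{s}{\E} = \E(\mu\alpha.e',s)$ uses that the surrounding binder $\mu\alpha.e'$ is itself a subterm of $e$ (the scope-linking assumption) and that $\E$ maps into $\Desc{r}$. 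The sum case is a union of two sets each in $\Desc{r}$ by the induction hypothesis, and the $\mu$-case reduces directly to its body. For step monotonicity (ii), the atomic cases $\phi,\varepsilon,x$ are independent of $\E$ so equality holds; the placeholder case is exactly the hypothesis $\E_1 \leq \E_2$ instantiated at $(\mu\alpha.e',s)$; and the sum and $\mu$-cases are immediate from the induction hypothesis.

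The technical heart, and the step I expect to be the main obstacle, is the concatenation case $\ReachF{e_1 \conc f_1}{s}{\E} = \bigcup_{s' \in \ReachF{e_1}{s}{\E}} \ReachF{f_1}{s'}{\E}$, because here $\E$ influences the result in two places simultaneously: the index set and each summand. For closure I would first apply the hypothesis to $e_1$ to get $\ReachF{e_1}{s}{\E} \subseteq \Desc{r}$, so that every index $s'$ is a legitimate descendant, and then apply the hypothesis to $f_1$ at each such $s'$ to conclude $\ReachF{f_1}{s'}{\E} \subseteq \Desc{r}$, whence the union stays in $\Desc{r}$. For monotonicity I would combine both directions: the hypothesis on $e_1$ enlarges the index set, $\ReachF{e_1}{s}{\E_1} \subseteq \ReachF{e_1}{s}{\E_2}$, while the hypothesis on $f_1$ enlarges each summand, $\ReachF{f_1}{s'}{\E_1} \subseteq \ReachF{f_1}{s'}{\E_2}$; consequently every set $\ReachF{f_1}{s'}{\E_1}$ with $s' \in \ReachF{e_1}{s}{\E_1}$ is contained in the $\E_2$-indexed union, and taking the union over all such $s'$ yields the inclusion $\ReachF{e_1 \conc f_1}{s}{\E_1} \subseteq \ReachF{e_1 \conc f_1}{s}{\E_2}$. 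The only subtlety to watch is precisely this double dependence, which is cleanly handled because $s$ was quantified over all of $\Desc{r}$ when formulating (i) and (ii).
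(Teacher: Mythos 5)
Your proposal is correct and follows essentially the same route as the paper's own proof: well-definedness from the fact that the step function $\ReachF{}{}{}$ maps into $\Desc{r}$, and monotonicity of $\ReachFunc{e}{r}$ reduced to monotonicity of $\ReachF{}{}{}$ in its last argument, proved by induction over the first argument. You merely carry out in full the induction the paper dismisses as ``easy,'' including the one genuinely delicate case (concatenation, where $\E$ occurs in both the index set and the summands), which you resolve correctly by combining the closure fact with both induction hypotheses.
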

\begin{proof}
  All calls to $\ReachFunc{e}{r}$ yield well-defined calls to $\ReachF{}{}{}$
  as elements in $\EqDom{e}{r}$ cover all cases on which $\ReachF{}{}{}$.
  Furthermore,
  the range of function $\ReachF{}{}{}$ is the set $\Desc{r}$.
  Hence, computation will never get stuck.

  For monotonicity, we need to show that
  if $\E_1 \leq \E_2$ then $\ReachFunc{e}{r}(\E_1) \leq
  \ReachFunc{e}{r}(\E_2)$, which holds if $\ReachF{}{}{}$ is monotonic
  in the last parameter. The proof for monotonicity of
  $\ReachF{}{}{}$ is by easy induction over the first parameter.
  \qed
\end{proof}  

The Knaster-Tarski Theorem guarantees that the least fixpoint of $\ReachFunc{e}{r}$
exists.
Let $\FixP{e}{r}$ denote the least fixpoint. That is,
$\FixP{e}{r} = \bigcup_{i=0}^\infty\ReachFunc{e}{r}^{i}(\bot)$
\ms{i still don't get this part, there's a sup missing!}
\pt{the ordering is set inclusion, so the sup $=$ union}
\ms{but 'set inclusion' only arises on the right-hand side, i'm still confused!}
where $\bot = \{ (f,s) = \{ \} \mid (f,s) \in \SubTerm{e} \times \Desc{r} \}$.
Hence, in the sequence of elements
$\FixStep{0}{e}{r} = \bot$ and
$\FixStep{n+1}{e}{r} = \ReachFunc{e}{r}(\FixStep{n}{e}{r})$,
we find
$\FixP{e}{r} = \FixStep{m}{e}{r}$ for some $m \geq 0$
where $\FixStep{m}{e}{r} = \FixStep{m+k}{e}{r}$ for all $k \geq 0$.

For sets $R$ and $S$ of regular expressions, we define $R \sim S$
if for each $r\in R$ we find $s\in S$ where $r \sim s$ and vice versa.

For the proof of $\FixP{e}{r}(e,r) \sim \reach{e}{r}$ to go through,
we need to include subterms in $\SubTerm{e}$.
As these subterms  contain free variables $\alpha$, we need
to map these variables to their corresponding definition.

\begin{definition}[Binding of $\mu$-Expressions]  
  Let $e$ be a context-free expression.
  We build a substitution which maps bound variables $\alpha$ in $e$
  to their corresponding definition.    
  \bda{c}
  \Subst{\varepsilon} = \id
  \ \ \
  \Subst{\phi} = \id
    \ \ \
  \Subst{x} = \id
    \ \ \
  \Subst{\alpha} = \id
    \\
    \\
  \Subst{e + f} = \Subst{e} \sqcup \Subst{f}
    \ \ \
  \Subst{e \conc f} = \Subst{e} \sqcup \Subst{f}
    \\
    \\
    \Subst{\mu \alpha.e} = [\alpha_1 \mapsto \apply{\psi}{e_1},\ldots,\alpha_n \mapsto \apply{\psi}{e_n},\alpha \mapsto \mu\alpha.e]
    \\ \mbox{where $\Subst{e} = [\alpha_1 \mapsto e_1, \ldots, \alpha_n \mapsto e_n] \ \ \ \psi = [\alpha \mapsto \mu\alpha.e]$}
  \eda
  \pt{Actually, the last line is the same as $\Subst{\mu\alpha.e} = [\alpha \mapsto \mu
    \alpha.e] \circ \Subst{e}$}
  \ms{nope, i'm doing something different here}
  \pt{I beg to differ, you are expanding the definition of composition for substitutions}  
\end{definition}

The substitution $\SubTerm{e}$ is well-defined as
variables $\alpha$ introduced by $\mu\alpha$ are distinct by assumption.
See the cases for concatenation and alternation.
In case of $\mu\alpha.e$, we first build $\Subst{e}=[\alpha_1 \mapsto e_1, \ldots, \alpha_n \mapsto e_n]$ where $e_i$ may only refer to $\alpha$ or other variables
but not to $\alpha_i$. Hence, the application of $\psi(e_i)$ to maintain
the invariant.
 \pt{That's true, but the actual argument is more subtle
  than this and requires a variable ordering. I worked out the details
  for another paper where I did the PDA construction from derivatives.}
 \ms{see above}
 \ms{PDA, i think you might have mentioned but i forgot}
 \pt{Yeah, I got tired}

Some helper statements which follow by definition.

\begin{lemma}
  \label{le:psi}
  Let $e$ be a context-free expression and $\mu\alpha.f \in \SubTerm{e}$.
  Let $\psi = \Subst{e}$.
  Then, we have that
  $\apply{[\alpha \mapsto \mu\alpha.\apply{\restrict{\psi}{\alpha}}{f}]}{\apply{\restrict{\psi}{\alpha}}{f}} = \apply{\psi}{f}$.
\end{lemma}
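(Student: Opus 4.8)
The plan is to isolate the content of the equation into two orthogonal facts: a structural description of $\apply{\psi}{\alpha}$, and a standard substitution-composition identity. Throughout, write $\psi = \Subst{e}$, $\sigma = \restrict{\psi}{\alpha}$, and $g = \apply{\sigma}{f}$, so that the goal is $\apply{[\alpha\mapsto\mu\alpha.g]}{g} = \apply{\psi}{f}$. I would first record the composition form of $\Subst{}$ from the remark following its definition: for a binder, $\Subst{\mu\beta.e'} = [\beta\mapsto\mu\beta.e'] \circ \Subst{e'}$, where $\circ$ is substitution composition applying its right operand first; this is immediate by unfolding the definition, using $\beta\notin\mathrm{dom}(\Subst{e'})$ by requirement (B). For $+$ and $\conc$ the substitution is the disjoint union $\Subst{e_1}\sqcup\Subst{e_2}$, and at the leaves it is $\id$.

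The key structural claim is that \emph{for every $\mu\alpha.f\in\SubTerm{e}$ one has $\apply{\psi}{\alpha} = \mu\alpha.\apply{\sigma}{f}$}, i.e.\ $\mu\alpha.g$ is precisely $\apply{\psi}{\alpha}$. I would prove this by structural induction on $e$. If $e$ is itself $\mu\alpha.f$, the definition assigns $\alpha\mapsto\mu\alpha.f$, and the remaining bindings of $\sigma$ are for variables bound inside $f$, which therefore have no free occurrence in $f$, so $\apply{\sigma}{f}=f$ and the claim holds. If $e=\mu\beta.e'$ with $\mu\alpha.f$ a proper subterm of $e'$ (hence $\beta\neq\alpha$ by (B)), then $\psi = [\beta\mapsto e]\circ\Subst{e'}$; I would apply $[\beta\mapsto e]$ to the inductive value $\apply{\Subst{e'}}{\alpha}=\mu\alpha.\apply{\restrict{\Subst{e'}}{\alpha}}{f}$ and combine it with the short domain check $\restrict{\psi}{\alpha} = [\beta\mapsto e]\circ\restrict{\Subst{e'}}{\alpha}$. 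In the $+$ and $\conc$ cases, $\psi=\Subst{e_1}\sqcup\Subst{e_2}$, and the bound variables of the sibling subterm do not occur free in $f$, so those extra bindings act as the identity on $f$, reducing the claim to the inductive hypothesis.

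Finally I would invoke a standard substitution-composition lemma: if $\alpha$ is not in the domain of $\sigma$ and no image $\apply{\sigma}{\beta}$ contains $\alpha$ free, then $\apply{[\alpha\mapsto h]}{\apply{\sigma}{t}} = \apply{\sigma\sqcup[\alpha\mapsto h]}{t}$ for all $t$ and $h$, by a routine induction on $t$ (the only interesting case is $t=\beta$ in the domain of $\sigma$, where $\alpha$-freeness of $\apply{\sigma}{\beta}$ stops the second pass from rewriting it). The side condition is exactly the idempotency invariant maintained for $\Subst{e}$ by the $\apply{\psi}{e_i}$ step in its definition: since $\alpha$ lies in the domain of $\psi$, every image $\apply{\psi}{\beta}=\apply{\sigma}{\beta}$ is $\alpha$-free (indeed closed, for closed $e$). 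Instantiating with $t=f$ and $h=\mu\alpha.g=\apply{\psi}{\alpha}$ (by the claim) gives $\apply{[\alpha\mapsto\mu\alpha.g]}{g}=\apply{\sigma\sqcup[\alpha\mapsto\apply{\psi}{\alpha}]}{f}=\apply{\psi}{f}$, as required.

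The main obstacle will be the bookkeeping around bound variables in the structural claim: making precise that $\apply{\sigma}{f}$ rewrites only the outer (free) variables of $\mu\alpha.f$ while leaving $\alpha$ and the inner-bound variables fixed, and that the disjoint-union bindings coming from sibling subterms act trivially on $f$. All of this rests on requirement (B) and capture-avoiding substitution; once those conventions are made explicit, each case of the induction is a direct calculation, which is presumably why the statement is flagged as following ``by definition''.
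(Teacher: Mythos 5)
Your proof is correct, but there is nothing in the paper to compare it against in detail: the authors file this lemma under ``helper statements which follow by definition'' and give no argument whatsoever. Your decomposition supplies exactly the content that assertion leaves implicit, and it does so along the lines the authors themselves hint at: the composition reading $\Subst{\mu\beta.e'} = [\beta\mapsto\mu\beta.e']\circ\Subst{e'}$ you start from appears verbatim as a remark attached to the paper's definition of $\Subst{\cdot}$. Your structural claim $\apply{\psi}{\alpha} = \mu\alpha.\apply{\restrict{\psi}{\alpha}}{f}$ is the right invariant to isolate --- it is what turns ``$\Subst{e}$ maps each binder to its closed meaning'' into a usable equation --- and combined with the standard identity $\apply{[\alpha\mapsto h]}{\apply{\sigma}{t}} = \apply{\sigma\sqcup[\alpha\mapsto h]}{t}$ (whose side conditions you correctly discharge: $\alpha\notin\mathrm{dom}(\restrict{\psi}{\alpha})$, and images of $\psi$ are closed for closed $e$, so in particular $\alpha$-free) it yields the lemma via $\restrict{\psi}{\alpha}\sqcup[\alpha\mapsto\apply{\psi}{\alpha}] = \psi$. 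Note also that the free-occurrence reading of substitution you adopt is forced: if $\apply{\restrict{\psi}{\alpha}}{f}$ rewrote bound occurrences, the lemma itself would be false, since $\restrict{\psi}{\alpha}$ binds exactly the placeholders whose $\mu$-binders sit inside $f$.

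Two small points to tighten, neither a gap. First, in the $+$ and $\conc$ cases you attribute the disjointness of sibling binders to requirement (B); as literally stated, (B) only forbids a binder from clashing with \emph{enclosing} bindings, so sibling-distinctness is an extra global convention --- but one the paper tacitly assumes anyway (without it $\Subst{e+f}=\Subst{e}\sqcup\Subst{f}$ is not even well-defined, and the paper's own well-definedness remark appeals to binders being ``distinct by assumption''), so you should cite that convention rather than (B). Second, in the inductive case $e=\mu\beta.e'$ your structural claim must hold for the \emph{open} expression $e'$ (which is fine, since its statement needs no closedness and you invoke the composition law only once, at top level), and when you push $[\beta\mapsto\mu\beta.e']$ under the binder $\mu\alpha$ you silently use $\alpha\notin\mathrm{fv}(\mu\beta.e')$ --- true because $\alpha$'s own binder lies inside $e'$, but worth stating, as it is part of the bound-variable bookkeeping you already flagged as the main obstacle.
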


\begin{lemma}
  \label{le:fix-p-mu}
   \label{le:fix-p-alpha}
  Let $e$ be a context-free expression and $r$ be a regular expression.
  Let $(\mu\alpha.f,s) \in \SubTerm{e} \times \Desc{r}$.
  Then, we have that $\FixP{e}{r}(\mu\alpha.f,s) = \FixP{e}{r}(f,s) = \FixP{e}{r}(\alpha,s)$.
\end{lemma}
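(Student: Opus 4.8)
The plan is to reason directly on the finite least-fixpoint iteration. Recall from the construction preceding the statement that $\FixP{e}{r} = \FixStep{m}{e}{r}$ for some $m$, where $\FixStep{0}{e}{r} = \bot$ is the all-empty equation set and $\FixStep{n+1}{e}{r} = \ReachFunc{e}{r}(\FixStep{n}{e}{r})$. Unfolding the definition of $\ReachFunc{e}{r}$ gives, for every pair $(g,s) \in \SubTerm{e} \times \Desc{r}$, the stagewise identity $\FixStep{n+1}{e}{r}(g,s) = \FixStep{n}{e}{r}(g,s) \cup \ReachF{g}{s}{\FixStep{n}{e}{r}}$. Suppressing the fixed subscripts $e,r$, I write $X^{n}$ for $\FixStep{n}{e}{r}$ and $X$ for $\FixP{e}{r}$, and I denote the (unique, by requirement (B)) binder of $\alpha$ by $\mu\alpha.f$. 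All three pairs $(\mu\alpha.f,s)$, $(f,s)$, $(\alpha,s)$ lie in the domain, since $f$ and $\alpha$ are subterms of $\mu\alpha.f$; moreover the sequence $(X^{n}(g,s))_{n}$ is increasing by monotonicity of $\ReachFunc{e}{r}$.

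First I would establish $X^{n}(\mu\alpha.f,s) = X^{n}(f,s)$ for all $n$ by induction on $n$. The base case is immediate because $X^{0}$ is empty on every pair. For the inductive step, the defining clause $\ReachF{\mu\alpha.f}{s}{\E} = \ReachF{f}{s}{\E}$ shows that the increments added to the two components at stage $n$ are literally identical, so the equality propagates. Evaluated at $n = m$ this already yields the second displayed equality $X(\mu\alpha.f,s) = X(f,s)$.

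The variable case is the delicate part, and it is where I expect the main obstacle to lie: the components $(\alpha,s)$ and $(f,s)$ are \emph{not} equal at intermediate stages. The clause $\ReachF{\alpha}{s}{\E} = \E(\mu\alpha.f,s)$ feeds the \emph{current} value of the binder into $\alpha$, so $\alpha$ trails one iteration behind. I would capture this by proving, again by induction on $n \geq 1$, the lag identity $X^{n}(\alpha,s) = X^{n-1}(\mu\alpha.f,s)$ (with $X^{0}(\alpha,s) = \{\}$). The step unfolds to $X^{n}(\alpha,s) = X^{n-1}(\alpha,s) \cup X^{n-1}(\mu\alpha.f,s)$; substituting the hypothesis $X^{n-1}(\alpha,s) = X^{n-2}(\mu\alpha.f,s)$ and using $X^{n-2}(\mu\alpha.f,s) \subseteq X^{n-1}(\mu\alpha.f,s)$ (the chain is increasing) collapses the union to $X^{n-1}(\mu\alpha.f,s)$, as desired.

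Finally I would read off the result at the stationary point. Since $X^{m} = X^{m+1} = X$, instantiating the lag identity at $n = m+1$ gives $X(\alpha,s) = X^{m+1}(\alpha,s) = X^{m}(\mu\alpha.f,s) = X(\mu\alpha.f,s)$. Together with the first equality this yields $X(\alpha,s) = X(\mu\alpha.f,s) = X(f,s)$, which is exactly the claim. The crux is that no single simultaneous induction makes all three components equal at every stage; one must combine the increasing-chain property with the stationarity of the least fixpoint so that the one-step lag is absorbed precisely at stage $m$.
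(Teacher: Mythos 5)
Your proof is correct, and it is in fact \emph{more} than the paper provides: the paper files this lemma under ``helper statements which follow by definition'' and gives no proof at all. Your two-phase decomposition matches exactly what the definitions support. The equality $\FixStep{n}{e}{r}(\mu\alpha.f,s)=\FixStep{n}{e}{r}(f,s)$ really is a stagewise invariant, because the clause $\ReachF{\mu\alpha.f}{s}{\E}=\ReachF{f}{s}{\E}$ makes the two increments syntactically identical at every iteration. For the variable, you correctly identify that the fixpoint equation alone only yields the inclusion $\FixP{e}{r}(\mu\alpha.f,s)\subseteq\FixP{e}{r}(\alpha,s)$ (since $X^{m+1}(\alpha,s)=X^{m}(\alpha,s)\cup X^{m}(\mu\alpha.f,s)$ and $X^{m}=X^{m+1}$); the reverse inclusion needs leastness, and your lag identity $\FixStep{n}{e}{r}(\alpha,s)=\FixStep{n-1}{e}{r}(\mu\alpha.f,s)$, absorbed at the stationary stage via the increasing-chain property, extracts it cleanly. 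A structural remark that explains why your induction is never circular: $\ReachF{g}{s}{\E}$ never \emph{reads} a component $\E(\alpha,\cdot)$ --- the clause for $\alpha$ reads $\E(\mu\alpha.f,s)$ --- so the $(\alpha,s)$ components are output-only accumulators trailing their binders by one step, which is precisely your lag identity.

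One small slip worth flagging: you assert that all three pairs lie in the domain ``since $f$ and $\alpha$ are subterms of $\mu\alpha.f$,'' but $\alpha\in\SubTerm{\mu\alpha.f}$ only when $\alpha$ actually occurs in $f$. If it does not (take $\mu\alpha.x$), the paper's convention sets $\E(\alpha,s)=\emptyset$ for the out-of-domain pair while $\FixP{e}{r}(\mu\alpha.f,s)$ may be nonempty, so the third equality must be read under the implicit assumption that $\alpha$ occurs in $f$. This is harmless for the paper's use of the lemma --- in the case $\apply{\psi}{\alpha}$ of Lemma~\ref{le:reach-soundness-left} one has $\alpha\in\SubTerm{e}$ by hypothesis --- but your proof should state the assumption rather than claim domain membership unconditionally.
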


\begin{lemma}
  Let $e$ be a context-free expression and $r$ be a regular expression.
  Let $(f,s) \in \SubTerm{e} \times \Desc{r}$.
  Then, we have that $\ReachF{f}{s}{\FixP{e}{r}} = \FixP{e}{r}(f,s)$.
\end{lemma}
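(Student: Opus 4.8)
The plan is to derive the claimed equality from two inclusions, both of which follow directly from the fixpoint characterization of $\FixP{e}{r}$ together with the monotonicity of $\ReachF{}{}{}$ in its third argument; notably, no induction on the structure of $f$ is needed.

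First I would establish $\ReachF{f}{s}{\FixP{e}{r}} \subseteq \FixP{e}{r}(f,s)$. Since $\FixP{e}{r}$ is a fixpoint of $\ReachFunc{e}{r}$, reading off the $(f,s)$-component of the equation $\FixP{e}{r} = \ReachFunc{e}{r}(\FixP{e}{r})$ gives $\FixP{e}{r}(f,s) = \FixP{e}{r}(f,s) \cup \ReachF{f}{s}{\FixP{e}{r}}$, from which the inclusion is immediate.

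Second, for the reverse inclusion $\FixP{e}{r}(f,s) \subseteq \ReachF{f}{s}{\FixP{e}{r}}$, I would use the iterative description of the least fixpoint already recorded above, namely $\FixStep{0}{e}{r} = \bot$ and $\FixStep{n+1}{e}{r} = \ReachFunc{e}{r}(\FixStep{n}{e}{r})$, with $\FixP{e}{r} = \FixStep{m}{e}{r}$ once the finite chain stabilizes. Unrolling the $(f,s)$-component of this recurrence and using $\bot(f,s) = \{\}$ yields $\FixP{e}{r}(f,s) = \bigcup_{n \geq 0} \ReachF{f}{s}{\FixStep{n}{e}{r}}$. Because every iterate satisfies $\FixStep{n}{e}{r} \leq \FixP{e}{r}$ and $\ReachF{f}{s}{\cdot}$ is monotonic in its third argument (the easy induction over the first argument invoked in the preceding monotonicity lemma), each term of this union is contained in $\ReachF{f}{s}{\FixP{e}{r}}$, and hence so is the union itself.

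Combining the two inclusions gives the equality. The argument is uniform in $f$: it never inspects the shape of $f$, so the delicate cases $f = \alpha$ and $f = \mu\alpha.g$, where $\ReachF{f}{s}{\FixP{e}{r}}$ refers back to the environment through $\FixP{e}{r}(\mu\alpha.g, s)$, require no extra work and are consistent with Lemma~\ref{le:fix-p-alpha}. The only step deserving care is the unrolling: one must check that the componentwise recurrence of $\ReachFunc{e}{r}$ telescopes to the stated union and that the base iterate contributes nothing, which is precisely what $\bot(f,s) = \{\}$ guarantees. I expect no genuine obstacle; the substance of the lemma is simply that, at the fixpoint, the accumulated value already agrees with a single pure reachability step.
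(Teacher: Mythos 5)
Your proof is correct: the first inclusion is exactly the $(f,s)$-component of the fixpoint equation $\FixP{e}{r} = \ReachFunc{e}{r}(\FixP{e}{r})$, and the telescoped union $\FixP{e}{r}(f,s) = \bigcup_{n\geq 0}\ReachF{f}{s}{\FixStep{n}{e}{r}}$ together with monotonicity of $\ReachF{}{}{}$ in its last argument (already established in the paper's monotonicity lemma) gives the reverse inclusion, with no case analysis on $f$ needed. The paper states this lemma without proof, listing it among the ``helper statements which follow by definition,'' and your argument is precisely the natural unpacking of that claim, so it matches the paper's intended reasoning.
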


The generalized statement. 

\begin{lemma}
 \label{le:reach-soundness-left} 
 Let $e$ be a context-free expression and $r$ be a regular expression.
 Let $\psi = \SubTerm{e}$, $w$ be a word,
 $(f,s) \in \SubTerm{e} \times \Desc{r}$ such that $\apply{\psi}{f} \reduce w$.
  Then, there exists $t\in \FixP{e}{r}(f,s)$ such that $t \sim \deriv{s}{w}$.  
\end{lemma}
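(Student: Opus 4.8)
The plan is to argue by induction on the height (number of inference steps) of the big-step derivation of $\apply{\psi}{f}\reduce w$, with a case analysis on the shape of $f$. Throughout I would use the fixpoint unfolding identity $\ReachF{f}{s}{\FixP{e}{r}} = \FixP{e}{r}(f,s)$ (the lemma immediately preceding this statement), which computes $\FixP{e}{r}(f,s)$ by a single application of the step function $\ReachF{\cdot}{\cdot}{\cdot}$. Since $\SubTerm{e}$ is closed under taking immediate subterms and $\FixP{e}{r}(f,s)\subseteq\Desc{r}$, every recursive invocation stays inside the intended domain $\SubTerm{e}\times\Desc{r}$, so the induction hypothesis always applies.

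The base cases are immediate from the definitions. For $f=\phi$ the premise is vacuous, as $\phi$ has no reduction. For $f=\varepsilon$ we have $w=\varepsilon$ and $\FixP{e}{r}(\varepsilon,s)=\{\simp{s}\}$, so $t=\simp{s}$ works because $\simp{s}\sim s=\deriv{s}{\varepsilon}$. For $f=x$ we have $w=x$ and $\FixP{e}{r}(x,s)=\{\simp{\deriv{s}{x}}\}$, so $t=\simp{\deriv{s}{x}}\sim\deriv{s}{x}$. For $f=f_1+f_2$ the derivation reduces one summand, say $\apply{\psi}{f_1}\reduce w$; the induction hypothesis yields $t\in\FixP{e}{r}(f_1,s)$ with $t\sim\deriv{s}{w}$, and $\FixP{e}{r}(f_1,s)\subseteq\FixP{e}{r}(f_1,s)\cup\FixP{e}{r}(f_2,s)=\FixP{e}{r}(f_1+f_2,s)$.

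The concatenation case $f=f_1\conc f_2$ is where the argument becomes delicate. Here $w=v\conc u$ with strict subderivations $\apply{\psi}{f_1}\reduce v$ and $\apply{\psi}{f_2}\reduce u$. The induction hypothesis on $(f_1,s)$ gives a canonical $t_1\in\FixP{e}{r}(f_1,s)\subseteq\Desc{r}$ with $t_1\sim\deriv{s}{v}$. Applying the induction hypothesis again to $(f_2,t_1)$ — legitimate since $t_1\in\Desc{r}$ and the second derivation is strictly shorter — produces $t\in\FixP{e}{r}(f_2,t_1)$ with $t\sim\deriv{t_1}{u}$. Because $t_1\in\FixP{e}{r}(f_1,s)$, the step function gives $\FixP{e}{r}(f_2,t_1)\subseteq\bigcup_{s'\in\FixP{e}{r}(f_1,s)}\FixP{e}{r}(f_2,s')=\FixP{e}{r}(f_1\conc f_2,s)$, placing $t$ in the required set. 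It then remains to establish $t\sim\deriv{s}{w}$. Using the standard word-derivative composition law $\deriv{s}{v\conc u}=\deriv{\deriv{s}{v}}{u}$ together with $t\sim\deriv{t_1}{u}$, this reduces to $\deriv{t_1}{u}\sim\deriv{\deriv{s}{v}}{u}$, which follows from $t_1\sim\deriv{s}{v}$ provided similarity is a congruence for derivatives. I expect this congruence — \emph{$a\sim b$ implies $\deriv{a}{x}\sim\deriv{b}{x}$, hence $\deriv{a}{u}\sim\deriv{b}{u}$ for every word $u$} — to be the main obstacle, since the similarity lemma stated earlier only guarantees that similarity respects language equivalence, which is too weak to recover similarity of the residual derivatives. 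I would discharge it as a separate auxiliary lemma by checking that each generating rewrite rule (Idempotency, Commutativity, Associativity, Elim1--4) is preserved under a one-symbol derivative, and then inducting on the length of $u$.

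Finally, the recursive cases $f=\mu\alpha.g$ and $f=\alpha$ are handled uniformly by unfolding once. The reduction $\apply{\psi}{\mu\alpha.g}\reduce w$ (respectively $\apply{\psi}{\alpha}\reduce w$, which denotes the same closed $\mu$-expression) arises through the $\mu$-rule from the reduction of its one-step unfolding; by Lemma~\ref{le:psi} this unfolding is exactly $\apply{\psi}{g}$, giving a strictly shorter derivation $\apply{\psi}{g}\reduce w$. Since $g\in\SubTerm{e}$, the induction hypothesis on $(g,s)$ yields $t\in\FixP{e}{r}(g,s)$ with $t\sim\deriv{s}{w}$, and Lemma~\ref{le:fix-p-alpha} gives $\FixP{e}{r}(g,s)=\FixP{e}{r}(\mu\alpha.g,s)=\FixP{e}{r}(\alpha,s)$, which places $t$ in the desired set in both cases. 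Well-foundedness of the whole induction rests precisely on the fact that each $\mu$-unfolding strictly decreases the derivation height, so the recursion through variables terminates.
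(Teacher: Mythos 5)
Your proposal is correct and follows essentially the same route as the paper's proof: induction on the big-step derivation of $\apply{\psi}{f} \reduce w$ with case analysis on the shape of $f$, the same two-stage use of the induction hypothesis at $(f_1,s)$ and then $(f_2,t_1)$ in the concatenation case, and the same treatment of $\mu\alpha.g$ and $\alpha$ via one unfolding using Lemma~\ref{le:psi} together with Lemma~\ref{le:fix-p-mu}. The one point where you go beyond the paper is worth noting: the paper's concatenation case tacitly asserts $t \sim \deriv{t_1}{w_2} \sim \deriv{s}{w_1 \conc w_2}$, which requires exactly the similarity-derivative congruence ($a \sim b$ implies $\deriv{a}{x} \sim \deriv{b}{x}$, hence for words) that you correctly identify as not following from the stated fact that similarity respects language equivalence, and your plan to verify it rule-by-rule (noting that similar expressions have equal nullability, so the two branches of the concatenation derivative agree) and then induct on word length is sound and fills a small gap the paper leaves implicit.
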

\begin{proof}
  By induction on the derivation $\apply{\psi}{f} \reduce w$
  and observing the various shapes of $\apply{\psi}{f}$.

  {\bf Case} $\apply{\psi}{\mu\alpha.f}$:

  By definition $\apply{\psi}{\mu\alpha.f} = \mu\alpha.\apply{\restrict{\psi}{\alpha}}{f}$.
  Hence, we find that
  $\myirule{\apply{[\alpha \mapsto \restrict{\psi}{\alpha}]}{\apply{\restrict{\psi}{\alpha}}{f}} \reduce w}
  {\mu\alpha.\apply{\restrict{\psi}{\alpha}}{f} \reduce w}
  $
  By Lemma~\ref{le:psi}, we find that
  $\apply{[\alpha \mapsto \mu\alpha.\apply{\restrict{\psi}{\alpha}}{f}]}{\apply{\restrict{\psi}{\alpha}}{f}} = \apply{\psi}{f}$.

  By induction, there exists $t \in \FixP{e}{r}(f,s)$ where $t \sim \deriv{s}{w}$.
  By Lemma~\ref{le:fix-p-mu}, $\FixP{e}{r}(f,s) = \FixP{e}{r}(\mu\alpha.f,s)$
  and thus we are done.

  {\bf Case} $\apply{\psi}{\alpha}$:

  There must exist $\mu\alpha.f \in \SubTerm{e}$.
  Hence, this case can be reduced to the one above and
  we find that $t \in \FixP{e}{r}(\mu\alpha,s)$ where $t \sim \deriv{s}{w}$.
  By Lemma~\ref{le:fix-p-alpha}, $\FixP{e}{r}(\mu\alpha.f,s) = \FixP{e}{r}(\alpha,s)$ and thus we are done again.

  {\bf Case} $\apply{\psi}{f_1 + f_2}$:

  Suppose $\myirule{\apply{\psi}{f_1} \reduce w}
                   {\apply{\psi}{f_1} + \apply{\psi}{f_2} \reduce w}
          $.
 By induction, there exists $t \in \FixP{e}{r}(f_1,s)$ where $t \sim \deriv{s}{w}$.
  By construction $\FixP{e}{r}(f_1+f_2,s) \supseteq \FixP{e}{r}(f_1,s)$
  (recall the definition of $\ReachFunc{}{}{}$). Thus, we are done.
  Same reasoning applies for the (sub)case $\myirule{\apply{\psi}{f_2} \reduce w}
                   {\apply{\psi}{f_1} + \apply{\psi}{f_2} \reduce w}
          $.
  
  {\bf Case} $\apply{\psi}{f_1 \conc f_2}$:

  Consider $\myirule{\apply{\psi}{f_1} \reduce w_1 \ \apply{\psi}{f_2} \reduce w_2}
  {\apply{\psi}{f_1} \conc \apply{\psi}{f_2} \reduce w_1 \conc w_2}$.
  By induction on the case $(f_1,s)$, there exists $t_1 \in \FixP{e}{r}(f_1,s)$ where $t_1 \sim \deriv{s}{w_1}$.
  By induction on the case $(f_2,t_1)$, there exists $t \in \FixP{e}{r}(f_2,t_1)$ where $t \sim \deriv{t_1}{w_2} \sim \deriv{s}{w_1 \conc w_2}$.
 We have that $t \in \FixP{e}{r}(f_1\conc f_2,s)$ based on the following
 reasoning.
 \bda{c}
 \FixP{e}{r}(f_1\conc f_2,s)
 \\ \supseteq
 \\ \ReachFunc{f_2}{\ReachFunc{f_1}{s}{\FixP{e}{r}}}{\FixP{e}{r}}
 \\ \supseteq
 \\ \ReachFunc{f_2}{t_1}{\FixP{e}{r}}
 \\ =
 \\ \FixP{e}{r}(f_2,t_1) \ni t_2
 \eda
 Thus, we are done for this case.

 The remaining cases are straightforward.
 \qed
\end{proof}  

\begin{lemma}
  \label{le:reach-soundness-right}
  Let $e$ be a context-free expression and $r$ be a regular expression.
  Let $\psi = \SubTerm{e}$.
  Let $n\geq 0$, $(f,s) \in \SubTerm{e} \times \Desc{r}$ and $t$ be a regular
  expression such that $t \in \FixStep{n}{e}{r}(f,s)$.
  Then, there exists word $w$ such that $\apply{\psi}{f} \reduce w$
  and $t \sim \deriv{s}{w}$.
\end{lemma}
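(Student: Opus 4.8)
The plan is to prove the statement by a nested induction: an outer induction on the iteration index $n$, and, inside its inductive step, a structural induction on the subterm $f$. Concretely, I would establish for every $n$ the claim $P(n)$: for all $(f,s)\in\SubTerm{e}\times\Desc{r}$ and all $t$, if $t\in\FixStep{n}{e}{r}(f,s)$ then there is a word $w$ with $\apply{\psi}{f}\reduce w$ and $t\sim\deriv{s}{w}$, where $\psi=\Subst{e}$. The base case $P(0)$ is vacuous because $\FixStep{0}{e}{r}(f,s)=\{\}$. For the step, recall that $\FixStep{n+1}{e}{r}(f,s)=\FixStep{n}{e}{r}(f,s)\cup\ReachF{f}{s}{\FixStep{n}{e}{r}}$. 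If $t$ already lies in $\FixStep{n}{e}{r}(f,s)$, the outer hypothesis $P(n)$ finishes immediately; otherwise $t\in\ReachF{f}{s}{\FixStep{n}{e}{r}}$ and I proceed by structural induction on $f$, being allowed to invoke $P(n)$ freely.

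The leaf cases follow directly from the definition of $\ReachF{}{}{}$. For $f=\phi$ there is nothing to prove; for $f=\varepsilon$ we have $t=\simp{s}$ and take $w=\varepsilon$, since $\deriv{s}{\varepsilon}=s\sim\simp{s}$; for $f=x$ we have $t=\simp{\deriv{s}{x}}$ and take $w=x$. The sum case $f=f_1+f_2$ follows by applying the inner hypothesis to whichever summand contributed $t$ and then using the corresponding big-step rule for $+$. The compositional case is $f=f_1\conc f_2$: here $t\in\ReachF{f_2}{s'}{\FixStep{n}{e}{r}}$ for some $s'\in\ReachF{f_1}{s}{\FixStep{n}{e}{r}}$, and since the reachability step only produces descendants, $(f_2,s')\in\SubTerm{e}\times\Desc{r}$. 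Applying the inner hypothesis to $f_1$ yields $w_1$ with $\apply{\psi}{f_1}\reduce w_1$ and $s'\sim\deriv{s}{w_1}$, and applying it to $f_2$ at state $s'$ yields $w_2$ with $\apply{\psi}{f_2}\reduce w_2$ and $t\sim\deriv{s'}{w_2}$. Taking $w=w_1\conc w_2$, the concatenation rule gives $\apply{\psi}{f_1\conc f_2}\reduce w$, and the chain law $\deriv{\deriv{s}{w_1}}{w_2}=\deriv{s}{w_1\conc w_2}$ together with $s'\sim\deriv{s}{w_1}$ gives $t\sim\deriv{s}{w}$.

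The two recursive cases reuse the substitution machinery employed in the companion Lemma~\ref{le:reach-soundness-left}. For $f=\alpha$, the definition gives $\ReachF{\alpha}{s}{\FixStep{n}{e}{r}}=\FixStep{n}{e}{r}(\mu\alpha.g,s)$, where $\mu\alpha.g\in\SubTerm{e}$ is the binder of $\alpha$, so $t\in\FixStep{n}{e}{r}(\mu\alpha.g,s)$ and the outer hypothesis $P(n)$ supplies $w$ with $\apply{\psi}{\mu\alpha.g}\reduce w$; since $\psi=\Subst{e}$ maps $\alpha$ to its closed definition, $\apply{\psi}{\alpha}=\apply{\psi}{\mu\alpha.g}$, whence $\apply{\psi}{\alpha}\reduce w$. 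For $f=\mu\alpha.g$, the definition gives $\ReachF{\mu\alpha.g}{s}{\FixStep{n}{e}{r}}=\ReachF{g}{s}{\FixStep{n}{e}{r}}$, so the inner hypothesis on the strictly smaller body $g$ yields $w$ with $\apply{\psi}{g}\reduce w$ and $t\sim\deriv{s}{w}$; unfolding through the $\mu$-rule and Lemma~\ref{le:psi} then lifts this to $\apply{\psi}{\mu\alpha.g}\reduce w$, closing the case.

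I expect the main obstacle to be the use of similarity as a congruence for derivatives in the concatenation case: from $s'\sim\deriv{s}{w_1}$ I need $\deriv{s'}{w_2}\sim\deriv{\deriv{s}{w_1}}{w_2}$, which is stronger than the mere language equivalence recorded after Definition~\ref{def:desc-sim}. This requires checking that every similarity rewrite rule is preserved under taking a Brzozowski derivative $\deriv{\cdot}{x}$: commutativity, associativity, and idempotency of $+$ are immediate, and each elimination rule is verified by a short computation using nullability, so that $\sim$ is indeed a derivative congruence. The same fact is already used implicitly in the proof of Lemma~\ref{le:reach-soundness-left}. The other delicate point is the substitution bookkeeping behind $\apply{\psi}{\alpha}=\apply{\psi}{\mu\alpha.g}$, which must be justified from the construction of $\Subst{e}$ and the variable-distinctness invariant; this is precisely where a variable ordering guarantees that $\psi$ closes each binder correctly, and Lemma~\ref{le:psi} packages the one identity actually needed.
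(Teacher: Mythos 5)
Your proposal is correct and follows essentially the same route as the paper's proof: an outer induction on $n$ with base case $\FixStep{0}{e}{r}(f,s)=\{\}$, an inner structural induction on $f$ using the decomposition $\FixStep{n+1}{e}{r}(f,s)=\FixStep{n}{e}{r}(f,s)\cup\ReachF{f}{s}{\FixStep{n}{e}{r}}$, the binder lookup $\apply{\psi}{\alpha}=\apply{\psi}{\mu\alpha.g}$ for the placeholder case, and Lemma~\ref{le:psi} to lift the body case through the $\mu$-rule. You are in fact slightly more careful than the paper in two places it leaves implicit: that $\sim$ must be a congruence for Brzozowski derivatives (needed in the concatenation case to pass from $s'\sim\deriv{s}{w_1}$ to $t\sim\deriv{s}{w_1\conc w_2}$), and that intermediate states $s'$ produced by $\ReachF{}{}{}$ remain in $\Desc{r}$ so the induction hypothesis applies.
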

\begin{proof}
  By induction over $n$.

  {\bf Case} $n=0$: Statement holds trivially as $\FixStep{0}{e}{r}(f,s) = \{ \}$.

  {\bf Case} $n \implies n+1$:

  We proceed by induction over the structure of $f$.

  {\bf Subcase} $\phi$: Trivial.

  {\bf Subcase} $x$:

  Consider $t \in \FixStep{n+1}{e}{r}(x,s) = \{ \simp{\deriv{s}{x}} \}$.
  Take $w = x$ and the statement is satisfied.

  {\bf Subcase} $\varepsilon$:

  Consider $t \in \FixStep{n+1}{e}{r}(x,s) = \{ \simp{s} \}$.
  Take $w = \varepsilon$ to satisfy the statement.
  
  {\bf Subcase} $\alpha$:

  Consider 
  \bda{ll}
     t \in &  \FixStep{n+1}{e}{r}(\alpha,s)
       \\ & = (\ReachFunc{e}{r}(\FixStep{n}{e}{r}))(f,s)
       \\ & = \FixStep{n}{e}{r}(\alpha,s) \cup \ReachF{\alpha}{s}{\FixStep{n}{e}{r}}
       \\ & = \FixStep{n}{e}{r}(\alpha,s) \cup \FixStep{n}{e}{r}(\mu\alpha.f,s)
   \eda
   Suppose $t \in \FixStep{n}{e}{r}(\alpha,s)$.
   By induction, there exists $w$ such that
   $\apply{\psi}{\alpha} \reduce w$ and $t \sim \deriv{s}{w}$
   and thus we can establish the statement.
   Otherwise, $t \in \FixStep{n}{e}{r}(\mu\alpha.f,s)$.
   By induction, there exists $w$ such that
   $\apply{\psi}{\mu\alpha.f} \reduce w$ and $t\sim \deriv{s}{w}$.
   By construction of $\psi$ we have that
   $\apply{\psi}{\alpha} = \apply{\psi}{\mu\alpha.f}$
   and we are done again.

   {\bf Subcase} $\mu\alpha.f$:

   Consider
   \bda{ll}
   t \in & \FixStep{n+1}{e}{r}(\mu\alpha.f,s)
   \\ & = \FixStep{n}{e}{r}(\mu\alpha.f,s) \cup \ReachF{f}{s}{\FixStep{n}{e}{r}}
   \\ & \subseteq \FixStep{n}{e}{r}(\mu\alpha.f,s) \cup \FixStep{n+1}{e}{r}(f,s)
   \eda

   Suppose $t \in \FixStep{n}{e}{r}(\mu\alpha.f,s)$.
   By induction, there exists $w$ such that $\apply{\psi}{\mu\alpha.f} \reduce w$ and $t \sim \deriv{s}{w}$. Hence, we can establish the statement.
   Otherwise, $t \in \FixStep{n+1}{e}{r}(f,s)$.
   By induction, there exists $w$ such that $\apply{\psi}{f} \reduce w$
   and $t \sim \deriv{s}{w}$.
     By Lemma~\ref{le:psi} we find that
     $\apply{[\alpha \mapsto \mu\alpha.\apply{\restrict{\psi}{\alpha}}{f}]}{\apply{\restrict{\psi}{\alpha}}{f}} = \apply{\psi}{f}$.
     By definition $\apply{\psi}{\mu\alpha.f} = \mu\alpha.\apply{\restrict{\psi}{\alpha}}{f}$.
     Hence, we can conclude that $\apply{\psi}{\mu\alpha.f} \reduce w$
     and we are done for this subcase.

   {\bf Subcase} $f_1 + f_2$:

   Consider
   \bda{ll}
   t \in & \FixStep{n+1}{e}{r}(f_1 + f_2,s)
   \\ & = (\ReachFunc{e}{r}(\FixStep{n}{e}{r}))(f_1+f_2,s)
   \\ & = \FixStep{n}{e}{r}(f_1+f_2,s)
          \cup \ReachF{f_1}{s}{\FixStep{n}{e}{r}}
          \cup \ReachF{f_2}{s}{\FixStep{n}{e}{r}}
   \eda
   Suppose $t \in \FixStep{n}{e}{r}(f_1+f_2,s)$.
   By induction, there exists $w$ such that $\apply{\psi}{f_1 + f_2}$
   and $t \sim \deriv{s}{w}$.
   Hence, the statement holds.
   Suppose $t \in \ReachF{f_1}{s}{\FixStep{n}{e}{r}}$.
   By definition $\ReachF{f_1}{s}{\FixStep{n}{e}{r}} \subseteq \FixStep{n+1}{e}{r}(f_1,s)$.
   By induction, there exists $w$ such that $\apply{\psi}{f_1} \reduce w$
   and $t \sim \deriv{s}{w}$.
   We can conclude that $\apply{\psi}{f_1 + f_2} \reduce w$ and are done.
   Otherwise, $t \in \ReachF{f_2}{s}{\FixStep{n}{e}{r}}$.
   Similar reasoning applies as in the previous case.

   {\bf Subcase} $f_1 \conc f_2$:

   Consider
   \bda{ll}
   t \in & \FixStep{n+1}{e}{r}(f_1 \conc f_2,s)
   \\ & = \FixStep{n}{e}{r}(f_1 \conc f_2,s)
          \cup \ReachF{f_2}{\ReachF{f_1}{s}{\FixStep{n}{e}{r}}}{\FixStep{n}{e}{r}}
   \eda
   Suppose $t \in \FixStep{n}{e}{r}(f_1 \conc f_2,s)$.
   By induction there exists $w$ such that $\apply{\psi}{f_1 \conc f_2}$
   and $t \sim \deriv{s}{w}$.
   Hence, the statement holds.
   Otherwise, $t \in \ReachF{f_2}{\ReachF{f_1}{s}{\FixStep{n}{e}{r}}}{\FixStep{n}{e}{r}}$.
   There exists $t_1 \in \ReachF{f_1}{s}{\FixStep{n}{e}{r}}$
   such that $t \in \ReachF{f_2}{t_1}{\FixStep{n}{e}{r}}$.
   By induction on $t_1 \in \ReachF{f_1}{s}{\FixStep{n}{e}{r}}$,
   there exists $w_1$ such that $\apply{\psi}{f_1} \reduce w_1$
   and $t_1 \sim \deriv{s}{w_1}$.
   by induction on $t \in \ReachF{f_2}{t_1}{\FixStep{n}{e}{r}}$,
   there exists $w_2$ where $\apply{\psi}{f_2} \reduce w_2$
   and $t_2 \sim \deriv{t_1}{w_2}$.
   We can conclude that $\apply{\psi}{f_1 \conc f_2} \reduce w_1 \conc w_2$
   and $t \sim \deriv{s}{w_1 \conc w_2}$ and are thus done.
   \qed
\end{proof}

\begin{lemma}
  Let $e$ be a context-free expression and $r$ be a regular expression.
  Then, we have that $\FixP{e}{r}(e,r) \sim \reach{e}{r}$
\end{lemma}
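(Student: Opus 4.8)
The plan is to derive this statement as an immediate corollary of the two generalized soundness lemmas, Lemma~\ref{le:reach-soundness-left} and Lemma~\ref{le:reach-soundness-right}, specialized to the root pair $(f,s) = (e,r)$. First I would record the facts that make this specialization legal: $e \in \SubTerm{e}$ and $r \in \Desc{r}$ (taking $r$ in canonical form if it is not already), so that $(e,r) \in \SubTerm{e} \times \Desc{r}$; and, crucially, that $\apply{\Subst{e}}{e} = e$. The latter holds because $e$ is closed, so the binding substitution $\Subst{e}$, which replaces only free placeholders by their $\mu$-definitions, acts as the identity on the whole expression. This identity is what bridges the generalized lemmas, which are phrased in terms of $\apply{\psi}{f}$ with $\psi = \Subst{e}$, and the plain expression $e$ sitting at the root.

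For the first direction I would show that every element of $\reach{e}{r}$ is similar to some element of $\FixP{e}{r}(e,r)$. Take an arbitrary $u \in \reach{e}{r}$, so $u = \simp{\deriv{r}{w}}$ for some word $w$ with $e \reduce w$. Since $\apply{\Subst{e}}{e} = e \reduce w$, Lemma~\ref{le:reach-soundness-left} applied at $(f,s) = (e,r)$ yields some $t \in \FixP{e}{r}(e,r)$ with $t \sim \deriv{r}{w}$. Because the canonical representative satisfies $\simp{\deriv{r}{w}} \sim \deriv{r}{w}$, we get $u \sim \deriv{r}{w}$, and transitivity of $\sim$ (similarity is an equivalence relation) gives $t \sim u$.

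For the converse direction I would show that every element of $\FixP{e}{r}(e,r)$ is similar to some element of $\reach{e}{r}$, using that the least fixed point is reached at a finite stage, $\FixP{e}{r} = \FixStep{m}{e}{r}$ for some $m$. Hence any $t \in \FixP{e}{r}(e,r)$ already lies in $\FixStep{m}{e}{r}(e,r)$, and Lemma~\ref{le:reach-soundness-right} (instantiated with $n = m$ and $(f,s) = (e,r)$) supplies a word $w$ with $\apply{\Subst{e}}{e} = e \reduce w$ and $t \sim \deriv{r}{w}$. Then $\simp{\deriv{r}{w}} \in \reach{e}{r}$, and again combining $\simp{\deriv{r}{w}} \sim \deriv{r}{w}$ with transitivity yields $t \sim \simp{\deriv{r}{w}}$. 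Putting the two directions together gives $\FixP{e}{r}(e,r) \sim \reach{e}{r}$ directly from the definition of $\sim$ on sets of regular expressions.

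The genuine inductive content of the result is entirely carried by the two preceding lemmas, so the remaining work is routine gluing. I expect the only point requiring real care to be the bookkeeping around $\apply{\Subst{e}}{e} = e$ together with the precise handling of canonical forms when $r$ is not already canonical, so that the index $(e,r)$ of $\FixP{e}{r}$, the descendant set $\Desc{r}$, and the derivatives $\deriv{r}{w}$ line up modulo $\sim$. This is the main obstacle, but it is a matter of matching definitions rather than a new argument.
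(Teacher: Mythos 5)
Your proposal is correct and takes essentially the same route as the paper, whose entire proof reads that the claim ``follows from Lemmas~\ref{le:reach-soundness-left} and \ref{le:reach-soundness-right} and the fact that $\apply{\Subst{e}}{e} = e$'' (the paper writes $\SubTerm{e}$ there, an evident typo for $\Subst{e}$, which you silently and correctly repair). You merely make explicit the details the paper leaves implicit --- the set-level definition of $\sim$, the finite stage $\FixP{e}{r} = \FixStep{m}{e}{r}$ needed to invoke Lemma~\ref{le:reach-soundness-right}, and the canonical-form bookkeeping for $(e,r) \in \SubTerm{e} \times \Desc{r}$ --- all of which are accurate.
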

\begin{proof}
  Follows from Lemmas~\ref{le:reach-soundness-left} and \ref{le:reach-soundness-right} and the fact that $\apply{\SubTerm{e}}{e} = e$.
\end{proof}

\section{Proofs}

For some proofs we make use of the terminology and results
introduced in the above.

\subsection{Proof of Lemma \ref{le:flatten-cfg-proof}}

\begin{proof}
  By induction on the derivation $e \reduce w$.
  \qed
\end{proof}

\subsection{Proof of Lemma \ref{le:parse-trees-flattening}}

\begin{proof}
  By induction on the derivation $\turns p : e$.
  For brevity, we consider some selected cases.

  {\bf Case} $\mu\alpha.e$:

  By assumption $  \myirule{\turns p : \apply{[\alpha \mapsto \mu \alpha.e]}{e}}
  {\turns \Fold\ p : \mu \alpha.e}$.
  By induction, $\apply{[\alpha \mapsto \mu \alpha.e]}{e} \reduce \flatten{p}$.
  By definition, $\flatten{\Fold\ p} = \flatten{p}$.
  Hence, $\mu\alpha.e \reduce \flatten{\Fold\ p}$.

  {\bf Case} $e + f$:

  {\bf Subcase} $p = \Left\ p_1$:

  By induction, $e \reduce \flatten{p_1}$.
  By definition, $\flatten{\Left\ p_1} = \flatten{p_1}$.
  Hence, $e + f \reduce \flatten{p}$.

    {\bf Subcase} $p = \Right\ p_2$: Similar reasoning as above.
  \qed
\end{proof}

\subsection{Proof of Theorem \ref {th:containment-reachability}}

\begin{proof}
  By definition $e \le r$ iff
  ($\forall w
  \in\Sigma^*$, $e \reduce
  w$ implies $r \reduce w$)
  iff ($\forall w
  \in\Sigma^*$, $e \reduce
  w$ implies $\deriv rw \reduce \varepsilon$) iff
  each expression in $\reach e r$ is nullable.
  %
  %
  %
  \qed
\end{proof}

\subsection{Proof of Lemma \ref{le:sound-reach-characterization}}
\label{proof:le:sound-reach-characterization}

\begin{proof}
  We generalize the statement as follows.
  Consider $e$ and $r$ fixed.
  For $f,f' \in \SubTerm{e}$ we write $f < f'$ to denote that $f$ is a subexpression in $f'$
  where $f \not= f'$.
  Let $\psi = \Subst{e}$.
  Consider $(f,s) \in \SubTerm{e} \times \Desc{r}$.
  Let $\Gamma = \{ \reachRel{s}{\apply{\psi}{\mu\alpha.f'}}{\reach{\apply{\psi}{\mu\alpha.f'}}{s}} \mid \mu\alpha.f' \in \SubTerm{e} \wedge f < \mu\alpha.f' \}$.
  So, the environment $\Gamma$ consists of all assumptions which are in the surrounding scope of $f$.
  
  We claim that $\Gamma \turns \reachRel{s}{\apply{\psi}{f}}{\reach{\apply{\psi}{s}}{s}}$ is derivable.
  The statement follows for $e$ and $r$ from the fact that for $e$ the environment $\Gamma$ is empty and $\apply{\psi}{e} = e$.

  We verify that $\Gamma \turns \reachRel{s}{\apply{\psi}{f}}{\reach{\apply{\psi}{s}}{s}}$ is derivable
  by induction on $f$.

  {\bf Case} $\mu\alpha.f$: We observe that $\apply{\psi}{\mu\alpha.f} = \mu\alpha.\apply{\restrict{\psi}{\alpha}}{f}$.
  Hence, the desired statement
  $$\Gamma \turns \reachRel{s}{\apply{\psi}{\mu\alpha.f}}{\reach{\apply{\psi}{\mu\alpha.f}}{s}}$$
  is equal to
  $$\Gamma \turns \reachRel{s}{\mu\alpha.\apply{\restrict{\psi}{\alpha}}{f}}{\reach{\mu\alpha.\apply{\restrict{\psi}{\alpha}}{f}}{s}}.$$
  By rule inversion,
  $$\Gamma \turns \reachRel{s}{\mu\alpha.\apply{\restrict{\psi}{\alpha}}{f}}{\reach{\mu\alpha.\apply{\restrict{\psi}{\alpha}}{f}}{s}}$$
  if
  $$\Gamma \cup \{ \reachRel{s}{\mu\alpha.\apply{\restrict{\psi}{\alpha}}{f}}{\reach{\mu\alpha.\apply{\restrict{\psi}{\alpha}}{f}}{s}} \}
  \turns \reachRel{s}{\apply{[\alpha\mapsto\mu\alpha.\apply{\restrict{\psi}{\alpha}}{f}]}{\apply{\restrict{\psi}{\alpha}}{f}}}{\reach{\mu\alpha.\apply{\restrict{\psi}{\alpha}}{f}}{s}} \ (1).$$
    By assumption $\Gamma$ has the proper form for $\apply{\psi}{\mu\alpha.f}$.
    Hence, $\Gamma \cup \{ \reachRel{s}{\apply{\psi}{\mu\alpha.f}}{\reach{\apply{\psi}{\mu\alpha.f}}{s}} \}$
  has the proper form for $f$. By induction,
  $$
  \Gamma \cup \{ \reachRel{s}{\apply{\psi}{\mu\alpha.f}}{\reach{\apply{\psi}{\mu\alpha.f}}{s}} \} \turns \reachRel{s}{\apply{\psi}{f}}{\reach{\apply{\psi}{f}}{s}} \ (2).
  $$
  We observe that $\apply{\psi}{f} = \apply{[\alpha\mapsto\mu\alpha.\apply{\restrict{\psi}{\alpha}}{f}]}{\apply{\restrict{\psi}{\alpha}}{f}}$
  and $\reach{\mu\alpha.\apply{\restrict{\psi}{\alpha}}{f}}{s} = {\reach{\apply{[\alpha\mapsto\mu\alpha.\apply{\restrict{\psi}{\alpha}}{f}]}{\apply{\restrict{\psi}{\alpha}}{f}}}{s}} = \reach{\apply{\psi}{f}}{s}$.
  Hence, (1) and (2) are equal and therefore the desired statement can be derived.

  {\bf Case} $e+f$: Expressions $e$ and $f$ share the same $\Gamma$.
  By induction, $\Gamma \turns \reachRel{s}{\apply{\psi}{e}}{\reach{\apply{\psi}{e}}{s}}$ and
  $\Gamma \turns \reachRel{s}{\apply{\psi}{f}}{\reach{\apply{\psi}{f}}{s}}$.
  By rule \rlabel{Alt}, $\Gamma \turns \reachRel{s}{\apply{\psi}{e}+\apply{\psi}{f}}{\reach{\apply{\psi}{e}}{s} \cup \reach{\apply{\psi}{f}}{s}}$.
  We observe that $\apply{\psi}{e+f} = \apply{\psi}{e}+\apply{\psi}{f}$
  and $\reach{\apply{\psi}{e+f}}{s} = \reach{\apply{\psi}{e}}{s} \cup  \reach{\apply{\psi}{f}}{s}$ and are done for this case.

  {\bf Case} $e \conc f$: Expressions $e$ and $f$ share the same $\Gamma$.
  By induction, $\Gamma \turns \reachRel{s}{\apply{\psi}{e}}{\reach{\apply{\psi}{e}}{s}}$.
  Suppose $\reach{\apply{\psi}{e}}{s} = \{ \}$.
  Then, $\Gamma \turns \reachRel{s}{\apply{\psi}{e}\conc\apply{\psi}{f}}{\{\}}$.
  Under the assumption, $\reach{\apply{\psi}{e}\conc\apply{\psi}{f}}{s} = \{\}$
  and we are done.  
  Otherwise, $\reach{\apply{\psi}{e}}{s} = \{ s_1, ..., s_n \}$ for $n>0$.
  By induction, for each combination $(f,s_i)$, $\Gamma \turns \reachRel{s_i}{\apply{\psi}{f}}{\reach{\apply{\psi}{f}}{s_i}}$.
  By rule \rlabel{Seq}, $\Gamma \turns \reachRel{s}{\apply{\psi}{e}\conc\apply{\psi}{f}}{\reach{\apply{\psi}{f}}{s_1}\cup \ldots \cup \reach{\apply{\psi}{f}}{s_n}}$.
  By the fact that $\reach{\apply{\psi}{f}}{\reach{\apply{\psi}{e}}{s}} = \reach{\apply{\psi}{f}}{s_1}\cup \ldots \cup \reach{\apply{\psi}{f}}{s_n}$
  we reach the desired conclusion.

  {\bf Cases} $x$, $\varepsilon$, $\phi$: Straightforward.
  \qed
\end{proof}  

\subsection{Proof of Lemma \ref{le:complete-reach-characterization}}
\label{proof:le:complete-reach-characterization}

The proof requires a couple of technical statements.

\begin{lemma}[Strengthening]
  \label{le:weaker-env}
  Let $e$ be a context-free expression, $r$ be a regular expression, $S$ be a set
  and $\Gamma', \Gamma$ be two environments such that $\Gamma' \supseteq \Gamma$
  and $\Gamma' \turns \reachRel{r}{e}{S}$ where
  in the derivation tree the extra assumptions $\Gamma' - \Gamma$ are not used.
  Then, we also find that $\Gamma \turns \reachRel{r}{e}{S}$.
\end{lemma}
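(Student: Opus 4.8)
The plan is to proceed by structural induction on the given derivation $D$ of $\Gamma' \turns \reachRel{r}{e}{S}$, transplanting $D$ into a derivation with the smaller environment $\Gamma$ while preserving its shape. The axiom cases \rlabel{Eps}, \rlabel{Phi}, and \rlabel{Sym} are immediate, since these rules ignore the environment entirely: the same conclusion is derivable under \emph{any} environment, in particular under $\Gamma$. For \rlabel{Alt} and \rlabel{Seq}, the hypothesis that $\Gamma' - \Gamma$ is unused in $D$ restricts to each immediate subderivation (no assumption used in a subtree can be more than the assumptions used in the whole tree), so the induction hypothesis applies to every premise, and I reassemble the conclusion with the same rule, now over $\Gamma$.

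The crucial case is \rlabel{Hyp}. Here $e = \mu\alpha.f$, the derivation is a leaf, and it fires precisely because $\reachRel{r}{\mu\alpha.f}{S} \in \Gamma'$. Since a \rlabel{Hyp} leaf by definition uses the looked-up assumption, and by hypothesis no assumption from $\Gamma' - \Gamma$ is used anywhere in $D$, this assumption cannot lie in $\Gamma' - \Gamma$; hence $\reachRel{r}{\mu\alpha.f}{S} \in \Gamma$, and a single application of \rlabel{Hyp} yields $\Gamma \turns \reachRel{r}{\mu\alpha.f}{S}$.

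The remaining case \rlabel{Rec} requires propagating the side condition through the introduction of a new assumption. Here $e = \mu\alpha.f$ and the premise is $\Gamma' \cup \{J\} \turns \reachRel{r}{\apply{[\alpha\mapsto\mu\alpha.f]}{f}}{S}$ with $J = \reachRel{r}{\mu\alpha.f}{S}$. To invoke the induction hypothesis on this premise I pair the environments $\Gamma' \cup \{J\} \supseteq \Gamma \cup \{J\}$ and observe that the extra assumptions only shrink: $(\Gamma' \cup \{J\}) \setminus (\Gamma \cup \{J\}) \subseteq \Gamma' - \Gamma$. Since no element of $\Gamma' - \Gamma$ is used in $D$, in particular none is used in the premise subderivation, so the smaller extra set is also unused there. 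The induction hypothesis then gives $\Gamma \cup \{J\} \turns \reachRel{r}{\apply{[\alpha\mapsto\mu\alpha.f]}{f}}{S}$, and one application of \rlabel{Rec} produces $\Gamma \turns \reachRel{r}{\mu\alpha.f}{S}$.

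The main obstacle is pinning down the informal phrase \emph{``the extra assumptions are not used''} precisely enough that it is simultaneously preserved by passing to subderivations and strong enough to discharge the \rlabel{Hyp} case. The subtlety lies in the interaction of \rlabel{Rec} and \rlabel{Hyp}: a \rlabel{Hyp} leaf inside a \rlabel{Rec}-premise may look up an assumption that is at once present in $\Gamma'$ and freshly reintroduced by the enclosing \rlabel{Rec}. The clean reading is that ``used'' refers to an assumption discharged by \rlabel{Hyp} from the root environment rather than from a locally introduced \rlabel{Rec} assumption; under this reading the transplant is valid in every case, because the \rlabel{Rec}-introduced assumptions are reproduced identically over $\Gamma$ and over $\Gamma'$. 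One should also confirm that the node-wise transplant is sound for the finite derivations that arise in the completeness proof (Lemma~\ref{le:complete-reach-characterization}); were genuinely infinite derivations at issue, the identical local reasoning would be justified coinductively rather than by structural induction.
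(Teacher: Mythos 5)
Your proposal is correct and follows the same route as the paper, whose entire proof is the single line ``By induction on the derivation''; your case analysis simply supplies the details that the paper elides. Your treatment of the \rlabel{Rec} case---pairing $\Gamma' \cup \{J\}$ with $\Gamma \cup \{J\}$ so that the set of extra assumptions only shrinks, and clarifying that a \rlabel{Hyp} lookup of a locally reintroduced judgment does not count as a ``use'' of $\Gamma' - \Gamma$---is exactly the bookkeeping the paper's terse proof leaves implicit, and it matches how the lemma is actually invoked in the proof of the Substitution lemma, where \rlabel{Hyp} applications for the relevant $\mu$-judgment are absent by hypothesis.
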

\begin{proof}
By induction on the derivation.
\end{proof}

\begin{lemma}[Weakening]
  \label{le:stronger-env}
  Let $e$ be a context-free expression, $r$ be a regular expression, $S$ be a set
  and $\Gamma', \Gamma$ be two environments such that $\Gamma' \supseteq \Gamma$
  and $\Gamma \turns \reachRel{r}{e}{S}$.
  Then, we also find that $\Gamma' \turns \reachRel{r}{e}{S}$
\end{lemma}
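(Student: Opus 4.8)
The plan is to turn a derivation of $\Gamma \turns \reachRel{r}{e}{S}$ into a derivation of $\Gamma' \turns \reachRel{r}{e}{S}$ by uniformly augmenting every environment occurring in the derivation with the fixed extra set $\Gamma' \setminus \Gamma$, and to check that this pointwise transformation still respects each rule of Figure~\ref{fig:reachability-proof-system}. Concretely I would proceed by induction on the derivation, reading a derivation as a finite tree whose leaves are closed by the axioms \rlabel{Eps}, \rlabel{Phi}, \rlabel{Sym} or by \rlabel{Hyp} pointing back to an assumption introduced below it by \rlabel{Rec}; should one prefer the purely coinductive reading of the system, the identical case analysis goes through by coinduction instead, since the transformation is defined uniformly on the tree.

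First I would dispatch the axiom cases \rlabel{Eps}, \rlabel{Phi}, and \rlabel{Sym}: their conclusions $\Gamma \turns \reachRel{r}{\varepsilon}{\{\simp{r}\}}$, $\Gamma \turns \reachRel{r}{\phi}{\{\}}$, and $\Gamma \turns \reachRel{r}{x}{\{\simp{\deriv{r}{x}}\}}$ do not inspect $\Gamma$ at all, so the identical axiom instance immediately yields the judgment with $\Gamma'$ in place of $\Gamma$. For the compound rules \rlabel{Alt} and \rlabel{Seq}, each premise is derived under the very same environment $\Gamma$; since $\Gamma' \supseteq \Gamma$, the induction hypothesis applies to every premise and delivers the same premise under $\Gamma'$, after which reapplying the rule reconstructs the conclusion under $\Gamma'$.

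The two remaining cases are the only ones where the environment is consulted. In case \rlabel{Hyp}, the derivation closes because $\reachRel{r}{\mu\alpha.f}{S} \in \Gamma$; from $\Gamma \subseteq \Gamma'$ we get $\reachRel{r}{\mu\alpha.f}{S} \in \Gamma'$, so \rlabel{Hyp} applies verbatim under $\Gamma'$. This is the single point in the entire argument that consumes the hypothesis $\Gamma' \supseteq \Gamma$. In case \rlabel{Rec}, the premise is derived under $\Gamma \cup \{\reachRel{r}{\mu\alpha.f}{S}\}$; observing that $\Gamma' \cup \{\reachRel{r}{\mu\alpha.f}{S}\} \supseteq \Gamma \cup \{\reachRel{r}{\mu\alpha.f}{S}\}$, the induction hypothesis instantiated with these two larger environments yields the premise under $\Gamma' \cup \{\reachRel{r}{\mu\alpha.f}{S}\}$, and reapplying \rlabel{Rec} produces $\Gamma' \turns \reachRel{r}{\mu\alpha.f}{S}$. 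Unlike the coercion rules of Figure~\ref{fig:upcast-coercions}, reachability \rlabel{Rec} carries no freshness side condition, so no additional constraint must be re-established.

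The argument presents no genuine obstacle; the only point demanding a little care is the bookkeeping of the environment as it grows up the tree. Extending every environment by the fixed set $\Gamma' \setminus \Gamma$ commutes with the \rlabel{Rec} additions, which are simply stacked on top of the growing environment, and it preserves the \rlabel{Hyp} closures, which require only membership of a judgment, and membership is monotone under environment extension. Thus the transformed tree is again a valid derivation, establishing $\Gamma' \turns \reachRel{r}{e}{S}$.
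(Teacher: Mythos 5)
Your proof is correct and follows exactly the route the paper intends: the paper's own proof is simply ``by induction on the derivation,'' and your case analysis (monotonicity of membership for \rlabel{Hyp}, uniformly enlarging the environment through \rlabel{Rec}, and the environment-insensitivity of the remaining rules) is the standard fleshing-out of that one-liner. No gaps; the treatment of \rlabel{Rec} via $\Gamma' \cup \{\reachRel{r}{\mu\alpha.f}{S}\} \supseteq \Gamma \cup \{\reachRel{r}{\mu\alpha.f}{S}\}$ is precisely the step that makes the induction go through.
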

\begin{proof}
By induction on the derivation.
\end{proof}

\begin{lemma}[Substitution]
  \label{le:unfold-equiv}
  Let $\mu\alpha.f$ be a context-free expression, $r$ be a regular expression
  and $S$ a set such that $\turns \reachRel{r}{\mu\alpha.f}{S}$.
  Then, we find that $\turns \reachRel{r}{\apply{[\alpha \mapsto \mu\alpha.f]}{f}}{S}$.
\end{lemma}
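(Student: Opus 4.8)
The plan is to combine a single rule inversion with a cut (substitution) argument that discharges a hypothesis from the environment.

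First I would invert the assumed derivation of $\turns \reachRel{r}{\mu\alpha.f}{S}$. Since the subject $\mu\alpha.f$ has topmost operator $\mu$, only \rlabel{Rec} and \rlabel{Hyp} can conclude it; but \rlabel{Hyp} requires $\reachRel{r}{\mu\alpha.f}{S}$ to lie in the (here empty) environment, so it is inapplicable. Hence the derivation must end with \rlabel{Rec}, and its premise is a derivation of $\{\reachRel{r}{\mu\alpha.f}{S}\} \turns \reachRel{r}{\apply{[\alpha\mapsto\mu\alpha.f]}{f}}{S}$. This already has the desired subject and result set; the only gap is the spurious hypothesis $h = \reachRel{r}{\mu\alpha.f}{S}$ left in the environment by \rlabel{Rec}.

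Next I would discharge $h$ by establishing an admissible cut rule, generalised over an arbitrary environment $\Gamma$ but keeping the side derivation $\turns h$ fixed: whenever $\Gamma \cup \{h\} \turns \reachRel{s}{g}{S'}$ and $\turns h$ are both derivable, then $\Gamma \turns \reachRel{s}{g}{S'}$. The lemma follows by instantiating $\Gamma = \{\}$, $g = \apply{[\alpha\mapsto\mu\alpha.f]}{f}$ and $S' = S$. I would prove cut admissibility by induction on the derivation of $\Gamma \cup \{h\} \turns \reachRel{s}{g}{S'}$, casing on the last rule. The structural rules \rlabel{Eps}, \rlabel{Phi}, \rlabel{Sym}, \rlabel{Alt}, and \rlabel{Seq} leave the environment unchanged across their premises, so the induction hypothesis applies to each premise and the rule is simply reapplied. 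In rule \rlabel{Hyp}, either the discharged hypothesis is some $h' \in \Gamma$ distinct from $h$, in which case \rlabel{Hyp} still fires over $\Gamma$; or it is $h$ itself, in which case the conclusion is precisely $\Gamma \cup \{h\} \turns h$, and I supply the assumed derivation of $\turns h$, lifted to $\Gamma \turns h$ by Weakening (Lemma~\ref{le:stronger-env}). In rule \rlabel{Rec}, whose premise is obtained by adjoining the conclusion judgment itself to the environment, the induction hypothesis applies directly to that premise (the fixed side derivation $\turns h$ is unaffected by the enlargement), after which \rlabel{Rec} is reapplied.

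The step I expect to be the main obstacle is justifying the induction in the presence of the coinductive reading of the proof system: a priori a derivation need not be a well-founded tree, so the induction on the derivation of $\Gamma \cup \{h\} \turns \reachRel{s}{g}{S'}$ must be underpinned by an argument that the relevant derivations are finite. This holds because each \rlabel{Rec} adjoins a judgment pairing a $\mu$-subterm of the original expression with a descendant $s$, and closes a matching branch via \rlabel{Hyp}; the subterms range over a finite set and the regular components $s$ over the finitely many dissimilar descendants of $r$, so the nesting of \rlabel{Rec} is bounded while every other rule strictly decreases the context-free subject. Alternatively, the same construction can be read off directly as surgery on the derivation tree — delete $h$ from every environment and replace each \rlabel{Hyp} leaf that discharged $h$ by a Weakening-lifted copy of the derivation of $\turns h$ — which is productive and therefore valid coinductively; here one need only observe that deleting $h$ can never invalidate a \rlabel{Hyp} step discharging a \emph{different} hypothesis, since \rlabel{Hyp} concludes exactly the hypothesis it uses.
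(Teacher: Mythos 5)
Your proposal is correct and is essentially the paper's own proof: the paper likewise inverts \rlabel{Rec} and then performs precisely the derivation-tree surgery you sketch as your alternative reading --- each \rlabel{Hyp} application for $\mu\alpha.f$ is replaced by a copy of the whole derivation, weakened to the larger environment via Lemma~\ref{le:stronger-env}, after which the now-unused hypothesis is removed by strengthening (Lemma~\ref{le:weaker-env}) --- so your cut-admissibility induction is just a more formal packaging of the same idea. One caveat on your ``main obstacle'' paragraph: derivations in this system are finite trees by the inductive definition of the judgment (the coinductive flavor, as in Brandt--Henglein, lies only in hypotheses being usable before they are verified, and the paper's own inductions on derivations for Lemmas~\ref{le:weaker-env} and~\ref{le:stronger-env} presuppose this), so your bounded-\rlabel{Rec}-nesting argument is both unnecessary and not quite accurate, since a judgment may be re-derived by repeated unfolding arbitrarily often before \rlabel{Hyp} fires.
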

\begin{proof}
  We generalize the statement and include some environment $\Gamma$.
  We write $D$ to denote the derivation tree for
  $\Gamma \turns \reachRel{r}{\mu\alpha.f}{S}$.
  The shape of $D$ is as follows.
  \bda{c}
  \inferrule*[left = \rlabel{Rec}]
             {
               \inferrule*[]
                          { \ldots }
                          {
                            \Gamma \cup \{ \reachRel{r}{\mu\alpha.f}{S} \} \turns \reachRel{r}{\apply{[\alpha\mapsto\mu\alpha.f]}{f}}{S}
                            }
             }
             {
               \Gamma \turns \reachRel{r}{\mu\alpha.f}{S}
             }
             
  \eda

  Suppose, in the upper derivation tree (denoted by $\ldots$),
  there are no applications of \rlabel{Hyp} for $\mu\alpha.f$.
  By Lemma~\ref{le:weaker-env}, we can immediately conclude that $\Gamma \turns \reachRel{r}{\apply{[\alpha\mapsto\mu\alpha.f]}{f}}{S}$ is derivable as well.
  Otherwise, we consider all applications of \rlabel{Hyp} for $\mu\alpha.f$.
  In the below, we show only one such application.

  \bda{c}
  \inferrule*[left = \rlabel{Rec}]
             {
               \inferrule*[]
                          { \inferrule*[left = \rlabel{Hyp}]
                            { \Gamma' \cup \{ \reachRel{r}{\mu\alpha.f}{S} \} \turns \reachRel{r}{\mu\alpha.f}{S}
                            }
                            { \ldots }
                            }
                          {
                            \Gamma \cup \{ \reachRel{r}{\mu\alpha.f}{S} \} \turns \reachRel{r}{\apply{[\alpha\mapsto\mu\alpha.f]}{f}}{S}
                            }
             }
             {
               \Gamma \turns \reachRel{r}{\mu\alpha.f}{S}
             }
             
             \eda
             where by construction $\Gamma' \supseteq \Gamma$.

             Each such \rlabel{Hyp} rule application can be replaced
             by the derivation tree $D$ where we make use of $\Gamma' \cup \{ \reachRel{r}{\mu\alpha.f}{S} \}$ instead
             of $\Gamma$ (justified by Lemma~\ref{le:stronger-env}).
             In fact, we can argue that the extra assumption $\reachRel{r}{\mu\alpha.f}{S}$ is no longer required due to the elimination of rule \rlabel{Hyp}.
             Hence, we can argue that $\Gamma \turns \reachRel{r}{\apply{[\alpha\mapsto\mu\alpha.f]}{f}}{S}$ is derivable.
\qed
\end{proof}

We write $\unfold{k}{e}$ to denote that all recursive constructs in $e$
have been unfolded at least $k$-times.

\begin{lemma}
  \label{le:unfold-approx}
  Let $e$ be a context-free expression, $r$ be a regular expression.
  Then, for any $n \geq 0$ there exists a $k$ such that
  $S \supseteq \FixStep{n}{e}{r}(e,r)$ where
  $\turns \reachRel{r}{\unfold{k}{e}}{S}$.
\end{lemma}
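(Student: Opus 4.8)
The plan is to read the statement in its strong form: for every $n$ there is a $k$ (and $k=n$ will turn out to suffice) such that \emph{every} set $S$ with $\turns\reachRel{r}{\unfold{k}{e}}{S}$ satisfies $S\supseteq\FixStep{n}{e}{r}(e,r)$. This is the shape needed to drive the completeness proof (Lemma~\ref{le:complete-reach-characterization}): using the Substitution Lemma~\ref{le:unfold-equiv} to push a given derivation $\turns\reachRel{r}{e}{S}$ through $k$ unfoldings without changing $S$, and then letting $n\to\infty$, so that $S\supseteq\bigcup_n\FixStep{n}{e}{r}(e,r)=\reach{e}{r}$. The guiding intuition is that the structural rules \rlabel{Eps}, \rlabel{Sym}, \rlabel{Alt}, \rlabel{Seq} compute exactly the reachability step function $\ReachF{}{}{}$ and \emph{force} their output sets, whereas \rlabel{Rec} and \rlabel{Hyp} treat a $\mu$-expression either by unfolding it or by looking up an assumed, possibly over-approximating, set. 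Unfolding $k$ times exposes $k$ copies of every recursive body and leaves residual $\mu$-subexpressions at the frontier, while the $n$-th Kleene iterate $\FixStep{n}{e}{r}$ is precisely $\ReachFunc{e}{r}$ applied $n$ times to $\bot$, i.e.\ recursion expanded $n$ times with the empty set at the bottom.

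First I would establish the bridge between the two recursion mechanisms: that $\FixStep{n}{e}{r}(e,r)$ is contained in the set obtained by evaluating $\ReachF{}{}{}$ on $\unfold{n}{e}$ with \emph{any} assignment at the frontier $\mu$-subexpressions. For this I would generalise to subterms, proving by induction on $n$ (outer) and on the structure of $f\in\SubTerm{e}$ (inner)---mirroring the proof of Lemma~\ref{le:reach-soundness-right}---a statement relating $\FixStep{n}{e}{r}(f,s)$, for $(f,s)\in\SubTerm{e}\times\Desc{r}$, to the sets derivable for $\unfold{n}{\apply{\psi}{f}}$ with respect to $s$, where $\psi=\Subst{e}$ maps each free placeholder to its binder. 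The cases $\phi,\varepsilon,x,f_1+f_2,f_1\conc f_2$ are direct, because each rule output coincides with the corresponding clause of $\ReachF{}{}{}$ and the $+$ and $\conc$ clauses are unions that contain every per-branch contribution. The placeholder and $\mu$ cases are handled using Lemma~\ref{le:psi} together with the iterate-analogues of $\FixP{e}{r}(\mu\alpha.f,s)=\FixP{e}{r}(f,s)=\FixP{e}{r}(\alpha,s)$ from Lemma~\ref{le:fix-p-mu}, which is exactly where one unfolding must be matched against one iteration step.

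Second I would close the argument by monotonicity. Any derivation $\turns\reachRel{r}{\unfold{n}{e}}{S}$ processes the exposed, non-recursive structure by the structural rules (whose outputs are forced) and the frontier $\mu$'s by \rlabel{Rec} discharged through \rlabel{Hyp}, which assigns them sets all of which contain $\{\}$. Reading the derivation as an evaluation of $\ReachF{}{}{}$ with this frontier assignment playing the role of the equation system, monotonicity of $\ReachF{}{}{}$ in its last argument (established just after the definition of $\ReachFunc{e}{r}$) shows that replacing the empty frontier assignment used by $\FixStep{n}{e}{r}$ by the larger one implicit in the derivation can only enlarge the computed set. Hence $S\supseteq\FixStep{n}{e}{r}(e,r)$, as required.

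The main obstacle is the $\mu$/placeholder case of the bridge, where the equation-system lookup $\ReachF{\alpha}{s}{\E}=\E(\mu\alpha.f,s)$ used by the iterate has to be synchronised with the syntactic copies produced by $\unfold{n}{\cdot}$: one must argue that the part of the derivation on which the $n$-th iterate depends lies entirely inside the $n$ exposed layers and is therefore computed by the forcing structural rules, so that the frontier \rlabel{Hyp} over-approximations never weaken the lower bound. Getting the depth bookkeeping right for \emph{nested} $\mu$-expressions---so that a single unfolding corresponds uniformly to a single iteration at every nesting level, with $\psi=\Subst{e}$ keeping free placeholders tied to their binders---is the delicate technical heart of the proof.
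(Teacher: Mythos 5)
Your proposal is correct and follows essentially the same route as the paper's own proof: the paper likewise splits the argument into (S1) a bridge relating the iterates $\FixStep{n}{e}{r}(f,s)$ to an evaluation of the step function on unfoldings---concretely via an auxiliary function $\ReachFUnfold{\cdot}{\cdot}$ that assigns $\{\}$ at frontier placeholders and $\mu$-subexpressions, proved by the same outer induction on $n$ and inner induction on $f$ mirroring Lemma~\ref{le:reach-soundness-right} and using Lemmas~\ref{le:psi} and~\ref{le:fix-p-mu}---and (S2) the observation that any derivable $S$ for $\unfold{k}{e}$ dominates that evaluation, which is exactly your monotonicity closing step with your ``arbitrary frontier assignment'' instantiated at bottom. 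The only divergence is cosmetic: in the $\mu$-case the paper sidesteps the delicate $k=n$ bookkeeping you flag (and do not actually justify) by simply taking $k = k_1 + k_2$ from the two induction hypotheses and appealing to monotonicity of $\ReachFUnfold{\cdot}{\cdot}$ in the number of unfoldings, which suffices for the purely existential claim.
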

\begin{proof}

  We define

  \bda{lcl}
  \ReachFUnfold{\phi}{r} & = & \{ \}
  \\
  \ReachFUnfold{\varepsilon}{r} & = & \{ \simp{r} \}
  \\
  \ReachFUnfold{x}{r} & = & \{ \simp{\deriv{r}{x}} \}
  \\
  \ReachFUnfold{e + f}{r} & = & \ReachFUnfold{e}{r} \cup \ReachFUnfold{f}{r}
  \\
  \ReachFUnfold{e \conc f}{r} & = & \bigcup_{s \in \ReachFUnfold{e}{r}} \ReachFUnfold{f}{s}
  \\
  \ReachFUnfold{\alpha}{r} & = & \{ \} 
  \\
  \ReachFUnfold{\mu\alpha.e}{r} & = & \{ \}
  \eda

  S1:
  For a fixed $e$ and $r$, for any $(f,s) \in \SubTerm{e} \times \Desc{r}$
  and $n \geq 0$, there exists $k$ such that
  $\ReachFUnfold{\unfold{k}{(\apply{\psi}{f})}}{s} \supseteq \FixStep{n}{e}{r}(f,s)$.
  Like the proof of Lemma \ref{le:reach-soundness-right},
  we verify the statement by applying
  induction over $n$ and observing the structure of $f$.

  {\bf Case} $n=0$: Straightforward.
  
  {\bf Case} $n \implies n+1$:

  We proceed by induction over the structure of $f$.
  
  {\bf Subcase} $\mu\alpha.f$.
  We have that $\FixStep{n+1}{e}{r}(\mu\alpha.f,s) \subseteq \FixStep{n}{e}{r}(\mu\alpha.f,s) \cup \FixStep{n+1}{e}{r}(f,s)$.
  By induction on $n$, $\ReachFUnfold{\unfold{k_1}{(\apply{\psi}{\mu\alpha.f})}}{s} \supseteq  \FixStep{n}{e}{r}(\mu\alpha.f,s)$ for some $k_1$.
  By induction on $f$, $\ReachFUnfold{\unfold{k_2}{(\apply{\psi}{f})}}{s} \supseteq \FixStep{n+1}{e}{r}(f,s)$ for some $k_2$.
  Recall that $\apply{[\alpha \mapsto \mu\alpha.\apply{\restrict{\psi}{\alpha}}{f}]}{\apply{\restrict{\psi}{\alpha}}{f}} = \apply{\psi}{f}$
  and $\apply{\psi}{\mu\alpha.f} = \mu\alpha.\apply{\restrict{\psi}{\alpha}}{f}$.
  Hence, $\unfold{1}{(\apply{\psi}{\mu\alpha.f})} = \apply{\psi}{f}$.
  Function $\ReachFUnfold{}{}$ is a monotone function respect to unfoldings.
  We set $k = k_1 + k_2$.
  Then, $\ReachFUnfold{\unfold{k}{(\apply{\psi}{\mu\alpha.f})}}{s} \supseteq  \FixStep{n+1}{e}{r}(\mu\alpha.f,s)$ and we are done for this case.

  S2:
  For $\turns \reachRel{r}{\unfold{k}{e}}{S}$, we have that $S \supseteq \ReachFUnfold{\unfold{k}{e}}{r}$.
  By induction on $k$ and observing the structure of $e$.
  \ms{todo, should work out similarly as in the case above.}

  Desired statement follows from S1 and S2.
  \qed
\end{proof}

We are in the position
to proof Lemma~\ref{le:complete-reach-characterization}.
We recall the statement of this proposition:
  Let $e$ be a context-free expression, $r$ be a regular expression
  and $S$ be a set of expressions such that $\turns \reachRel{r}{e}{S}$.
  Then, we find that $S \supseteq \reach{e}{r}$.
\begin{proof}
  Assume the contrary. Then, there exists $s \in S$ and
  $s \not \in \FixStep{n}{e}{r}(e,r)$ for some $n \geq 0$.
  By Lemma \ref{le:unfold-equiv}, we find that
  $\turns \reachRel{r}{e^k}{S}$ for any $k$.
  By Lemma \ref{le:unfold-approx}, $S \supseteq \FixStep{n}{e}{r}(e,r)$
  which contradicts the assumption.
  \qed
\end{proof}

Based on the above, we obtain a greatest fixpoint method to
compute $\reach{e}{r}$.
We consider $e$ and $r$ fixed.
For each combination $(f,s) \in \SubTerm{e} \times \Desc{r}$,
we set the respective $S$ to $\Desc{r}$.
In each greatest fixpoint step, we pick a combination where we remove
one of the elements in $S$. Check if
$\turns \reachRel{f}{s}{S}$~\footnote{Need to include the environment,
  apply $\psi$, as we already start if with some environment,
  can only apply \rlabel{Hyp} after one application of \rlabel{Rec} ...}
is still derivable.
If yes, continue the process of eliminating elements.

\subsection{Proof of Lemma \ref{le:upCoerce}}

\ms{FIXME: semantic well-behaved ....}

\begin{definition}[Well-Behaved Upcast]
  Let $e$ be a context-free expression, $r$ be a regular expression,
  and $c$ be a coercion of type $\upTy{e}{r}$, where we write
  $\upTy{e}{r}$ for the type $(e, \plus{\reach{e}{r}}) \rightarrow r$.

  We say $c$ is a \emph{well-behaved upcast} iff
  for any $\turns p : e$ and $\turnsreg t : \plus{\reach{e}{r}}$ we
  find that $\turnsreg \fapp{c}{p,t} : r$.

  We further define environments $\Delta$ by
   \bda{lcl}
    \Delta & ::= & \{ \} \mid \{ v : \upTy{e}{r} \} \mid \Delta \cup \Delta
  \eda

  We say that $\Delta$ is a \emph{well-behaved upcast} environment iff
  each $(v : \upTy{e}{r}) \in \Delta$ is a well-behaved upcast coercion.
\end{definition}


\begin{lemma}[Soundness]
    \label{le:upCoerce}
  Let $\Delta$ be a well-behaved upcast environment.
  Let $e$ be a context-free expression and $r$ be a regular expression
  such that $\upCoerce{\Delta}{c}{e}{r}$ for some coercion $c$.
  Let $p$ and $t$ be parse trees such that $\turns p : e$
  and $\turnsreg t : \plus{\reach{e}{r}}$ where
  $\flatten{t} \in L(\deriv{r}{\flatten{p}})$.
  Then, we find that $\Delta \turns \fapp{c}{(p,t)} : r$.
\end{lemma}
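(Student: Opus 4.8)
The plan is to induct on the derivation of $\upCoerce{\Delta}{c}{e}{r}$, which is finite: each of \ulabel{Eps}, \ulabel{Sym}, \ulabel{Alt}, \ulabel{Seq} breaks $e$ into strictly smaller pieces, and \ulabel{Rec} records its $\mu$-expression in $\Delta$ so that every recurrent occurrence is closed off by \ulabel{Hyp}. I would first strengthen the conclusion to also record flattening, i.e.\ that $\fapp{c}{(p,t)}$ evaluates to some $t'$ with $\turnsreg t' : r$ and $\flatten{t'} = \flatten{p} \conc \flatten{t}$; the second conjunct is indispensable, since it is what lets me reconstruct the side condition $\flatten{\cdot} \in L(\deriv{r}{\cdot})$ at each recursive call.

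For the base cases I would use the forced shape of $p$: under \ulabel{Eps}, $p = \Eps$ so $\flatten{p} = \varepsilon$, $\plus{\reach{\varepsilon}{r}} = \simp{r}$, the side condition reads $\flatten{t} \in L(r)$, and the regular upcast $b$ from Lemma~\ref{le:regular-coercions} maps $t$ to a valid $r$-tree preserving flattening; under \ulabel{Sym}, $p = \Sym\ x$ and $b$ is applied to $\Seq{v}{t}$, a valid tree of $x \conc \simp{\deriv{r}{x}}$ that flattens to $x \conc \flatten{t}$. The \ulabel{Alt} case is the argument already sketched after Lemma~\ref{le:le-soundness}: splitting on $p = \Left\ p_1$ versus $p = \Right\ p_2$, I would use Lemma~\ref{le:parse-trees-flattening} to obtain $e \reduce \flatten{p_1}$, deduce that $\deriv{r}{\flatten{p_1}}$ is similar to a member of $\reach{e}{r}$ and hence $\flatten{t} \in L(\plus{\reach{e}{r}})$, so that the downcast $b_1$ succeeds (Lemma~\ref{le:regular-coercions}) on $t$ with a flattening-preserving residue, and then finish by the induction hypothesis for $c_1$.

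For \ulabel{Seq} I would exploit the identity $\plus{\reach{e\conc f}{r}} = \plus{\reach{f}{\plus{\reach{e}{r}}}}$: feeding $(p_2,t)$ to $c_2$ is legal because, writing $s = \simp{\deriv{r}{\flatten{p_1}}} \in \reach{e}{r}$, we have $\flatten{t} \in L(\deriv{s}{\flatten{p_2}})$ and $s$ is a summand of $\plus{\reach{e}{r}}$; the induction hypothesis for $c_2$ then yields an intermediate tree $t''$ with $\flatten{t''} = \flatten{p_2}\conc\flatten{t} \in L(s) = L(\deriv{r}{\flatten{p_1}})$, which is exactly the side condition needed to apply the induction hypothesis for $c_1$ to $(p_1,t'')$, producing the final $r$-tree that flattens to $\flatten{p}\conc\flatten{t}$. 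The \ulabel{Hyp} case is immediate: $c = v_{\alpha.e,r} \in \Delta$ and well-behavedness of $\Delta$ is precisely the claim.

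The real difficulty is \ulabel{Rec}, where $\sem{c}{}$ is defined by recursion and the premise only supplies a usable hypothesis for $c'$ under the extended environment $\Delta \cup \{v_{\alpha.e,r} : \upTy{\mu\alpha.e}{r}\}$ --- whose well-behavedness is what we are trying to establish, so a naive appeal is circular. I would break the circularity by a nested induction on the size of the input parse tree: since $\turns p : \mu\alpha.e$ forces $p = \Fold\ p_0$ and every recursive use of $v_{\alpha.e,r}$ is reached only through \ulabel{Hyp}, each such use is applied to a parse tree strictly smaller than $\Fold\ p_0$ (one $\Fold$ having been stripped and at least one \ulabel{Sym}/\ulabel{Seq} step consumed). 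Thus by the inner hypothesis $v_{\alpha.e,r}$ may be assumed well-behaved on all strictly smaller inputs, which makes $\Delta \cup \{v_{\alpha.e,r} : \upTy{\mu\alpha.e}{r}\}$ well-behaved relative to the current input; the outer hypothesis applied to $c'$ on $\apply{[\alpha\mapsto\mu\alpha.e]}{e}$ (with $\turns p_0 : \apply{[\alpha\mapsto\mu\alpha.e]}{e}$) then delivers a valid $r$-tree, and re-applying $\Fold$ is absorbed by the flattening equation. Well-foundedness of this nested induction is exactly the paper's remark that only finitely many combinations of unfolded $\mu$-subterms and dissimilar descendants of $r$ can occur, so the care required to phrase the induction hypothesis in a size-relative form is the one genuinely delicate point.
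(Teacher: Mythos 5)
Your proposal is correct and matches the paper's skeleton — induction on the derivation of $\upCoerce{\Delta}{c}{e}{r}$, the same treatment of \ulabel{Eps}, \ulabel{Sym}, \ulabel{Alt} (via Lemma~\ref{le:parse-trees-flattening} and the guaranteed success of the regular downcasts), and the identity $\plus{\reach{e\conc f}{r}} = \plus{\reach{f}{\plus{\reach{e}{r}}}}$ for \ulabel{Seq} — but it diverges from the paper in two genuine ways, both to your credit. First, you strengthen the invariant to carry $\flatten{t'} = \flatten{p} \conc \flatten{t}$; the paper's proof does not record this, yet its \ulabel{Seq} case silently needs it: to apply the inductive hypothesis for $c_1$ to $(p_1, \fapp{c_2}{(p_2,t)})$ one must know that $\fapp{c_2}{(p_2,t)}$ flattens to $\flatten{p_2}\conc\flatten{t}$ in order to re-establish the side condition $\flatten{\cdot}\in L(\deriv{r}{\flatten{p_1}})$, so your strengthening makes explicit an invariant the paper leaves implicit (indeed, the main-text formulation $\flatten{p}=\flatten{t'}$ is only accurate in the special case $\flatten{t}=\varepsilon$ of Theorem~\ref{th:upcast-coercions}). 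Second, for \ulabel{Rec} the paper argues syntactically: it unfolds the fixpoint once by subject reduction, applies a substitution lemma ``backwards'' to reduce the goal to the typing of $\fapp{c'}{(p',t)}$ under the assumption $v : \upTy{\mu\alpha.e}{r}$, and discharges \ulabel{Hyp} directly from the well-behaved-environment hypothesis, leaving termination of the recursive coercion to an informal remark in the main text. Your nested induction on the size of the input parse tree instead gives a semantic, self-contained argument that also yields definedness of $\sem{c}{}$ on every input — a real gain in rigor. Two small corrections there: the claim that each recursive call additionally consumes ``at least one \ulabel{Sym}/\ulabel{Seq} step'' is neither needed nor always literally true (for a left-recursive body the \ulabel{Hyp} site can sit at the head of a \ulabel{Seq}); what suffices is that every argument fed to $v_{\alpha.e,r}$ is a subtree of $p_0$, hence strictly smaller than $\Fold\ p_0$. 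And the well-foundedness of your inner induction rests simply on finiteness of parse trees, not on the paper's remark about finitely many pairs of unfolded $\mu$-subterms and dissimilar descendants of $r$ — that remark justifies finiteness of the coercion \emph{derivation} (decidability of the proof system), not termination of coercion \emph{evaluation}.
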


\begin{proof}
  By induction on the derivation to construct coercions.
  For brevity, we sometimes omit $\Gamma$ in case it is not relevant.

  \textbf{Case }$\varepsilon$:
  By assumption $\turns p : \varepsilon$ and $\turns t : \plus{\reach{\varepsilon}{r}}$.

  Thus, $p = \Eps$ and $t : \simp{r}$.

  Inversion yields a regular coercion $c_1 : \simp{r} \to r$.

  Hence $(\lambda (\Eps, q). \fapp{c_1}{t}) (\varepsilon, t) = \fapp{c_1}{t}$ with $\turns \fapp{c_1}{t} : r$.

  \textbf{Case }$x$:
  By assumption $\turns p : x$ and $\turns t : \plus{\reach{x}{r}}$.

  Thus $p = \Sym\ x$ and $\turns t : \plus{\reach{x}{r}}$ where $\plus{\reach{x}{r}} = \simp{\deriv{r}{x}}$.

  Inversion yields a regular coercion $c : (x, \simp{\deriv{r}{x}}) \to r$.
  
  Hence $\turns \fapp{c}{(p, t)} : r$

  \textbf{Case} $e\conc f$:
  By assumption $\turns p : e \conc f$ and $\turns t: \plus{\reach{e \conc f}{r}}$ and
  $\flatten{t} \in L(\deriv{r}{\flatten{p}})$.

  Inversion for $p$ yields $p = \Seq{p_1}{p_2}$ such that $\turns p_1 : e$ and $\turns p_2 : f$.
  It holds that $\flatten{p} = \flatten{p_1} \conc \flatten{p_2}$.

  Further,
  \begin{align*}
    \reach{e \conc f}{r} & = \bigcup \{ \reach{f}{s} \mid s\in \reach{e}{r} \}
    \\ & = \reach{f}{\plus{\reach{e}{r}}}
  \end{align*}
  so that $\turns t : \plus{\reach{f}{\plus{\reach{e}{r}}}}$.

  Now
  \begin{align*}
    \flatten{t} \in L(\deriv{r}{\flatten{p}})
    &=  L(\deriv{r}{\flatten{p_1} \conc \flatten{p_2}}) \\
    &\subseteq L (\deriv{\plus{\reach{e}{r}}}{\flatten{p_2}})
  \end{align*}

  Inversion on the coercion derivation yields
  \begin{gather*}
    \upCoerce{\Gamma}{c_1}{e}{r} \\
    \upCoerce{\Gamma}{c_2}{f}{\plus{\reach{e}{r}}}
  \end{gather*}

  Induction on the derivation of $c_2$ yields $\turns \fapp{c_2}{(p_2, t)} : \plus{\reach{e}{r}}$.

  Induction on the derivation of $c_2$ using $p_1$ for $p$ and $\fapp{c_2}{(p_2, t)}$ for $t$ yields
  $\turns \fapp{c_1}{(p_1, \fapp{c_2}{(p_2, t)})} : r$ as desired.

  {\bf Case} $e+f$: By assumption $\turns p : e + f$ and $\turns t: \plus{\reach{e + f}{r}}$ and
  $\flatten{t} \in L(\deriv{r}{\flatten{p}})$. 
  We distinguish among the following subcases.

  {\bf Subcase} $p = \Left\ p_1$:
  At this point, we have $\turns p_1 : e$ by inversion of the assumption. 
  We conclude that $\flatten{p}=\flatten{p_1} \in L(e)$.
  By Lemma~\ref{le:parse-trees-flattening}, $e \reduce \flatten{p_1}$ and
  therefore we find that $\deriv{r}{\flatten{p_1}}$ is similar to an element of $\reach{e}{r}$.  
  Because $\flatten{t} \in L(\deriv{r}{\flatten{p}})$ we conclude that $\flatten{t} \in L(\plus{\reach{e}{r}})$.
  By Lemma~\ref{le:regular-coercions}, it must be that
  $\fapp{b_1}{t}  = \Just\ t_1$ for some $t_1$ where $\turns t_1 : \plus{\reach{e}{r}}$.
  By induction we find that $\turns \fapp{c_1}{(p_1,t_1)} : r$.
  By combining the above results, we conclude that $\fapp{c}{(p,t)} : r$, too.

  {\bf Subcase} $p = \Right\ p_2$: Analogously.

  {\bf Case} $\mu\alpha. e$:

  By assumption $\turns p : \mu\alpha.e$ and $\turns t : \plus{\reach{\mu\alpha.e}{r}}$ and
  $\flatten{t} \in L(\deriv{r}{\flatten{p}})$.

  By inversion of the assumption, $p = \Fold\ p'$, $\turns p' :
  \apply{[\alpha \mapsto \mu \alpha.e]}{e}$, and $\flatten{p} = \flatten{p'}$.

  Further, as $\reach{\mu\alpha.e}{r} = \reach{\apply{[\alpha \mapsto \mu
      \alpha.e]}{e}}{r}$  (by Lemma~\ref{lemma:unfold-cfe}), we obtain
  $\turns t : \plus{\reach{\apply{[\alpha \mapsto \mu 
        \alpha.e]}{e}}{r}}$. 
  
  \textbf{Subcase:} If $(\mu\alpha.e , r)$ is not in $\Gamma$, then inversion yields some $c'$ such that
  \begin{gather*}
    \upCoerce{\Gamma \cup \{ v : (\mu\alpha.e, r) \}}{c'}{\apply{[\alpha\mapsto \mu\alpha.e]}e}{r}
  \end{gather*}

  Hence, induction is applicable observing $\turns p' : \apply{[\alpha
    \mapsto \mu \alpha.e]}{e}$, $\turns t : \plus{\reach{\apply{[\alpha \mapsto \mu 
        \alpha.e]}{e}}{r}}$, and that the flattening assumption holds.
  Thus, $\turns \fapp{c'}{(p', t)} : r$ and we need to show that

  \begin{align*}
    & \turns (\REC\ v. \lambda (\Fold\ p', t). \fapp{c'}{(p',t)}) (\Fold\ p', t) : r
    \\ \Leftrightarrow& \qquad \text{By subject reduction}
    \\ & \turns (\lambda (\Fold\ p', t). \fapp{c'}{(p',t)}[v
    \mapsto (\REC\ v. \lambda (\Fold\ p', t). \fapp{c'}{(p',t)})]) (\Fold\
    p', t) : r
    \\ \Leftrightarrow& \qquad \text{By subject reduction}
    \\ & \turns \fapp{c'}{(p',t)}[v \mapsto (\REC\ v. \lambda (\Fold\ p', t). \fapp{c'}{(p',t)})] : r
    \\ \Leftrightarrow& \qquad \text{Substitution lemma backwards}
    \\ & v : (\mu\alpha.e, \plus{\reach{\mu\alpha.e}{r}}) \to r
    \turns \fapp{c'}{(p',t)} : r
  \end{align*}
  The last statement holds and thus awe are done.

  \textbf{Subcase:} If $v : (\mu\alpha.e, r) \in \Gamma$, then the statement
   holds immediately. 
  \qed
\end{proof}  

\subsection{Proof of Lemma \ref{le:cfe-nullability}}

\begin{proof}
  The direction from left to right follows immediately.

  Suppose $\varepsilon \in L(e)$ which implies $e \reduce \varepsilon$.
  Consider the case
  $  \myirule{\apply{[\alpha \mapsto \mu \alpha.e]}{e} \reduce \varepsilon}
  {\mu \alpha.e \reduce \varepsilon}$.
  We argue that if $\apply{[\alpha \mapsto \mu \alpha.e]}{e} \reduce \varepsilon$
  then $e\reduce \varepsilon$.
  This can be verified by the number of unfolding steps
  applied on $\mu\alpha.e$.

  Consider $\apply{[\alpha \mapsto \mu \alpha.e]}{e} \reduce \varepsilon$
  where no further unfolding steps are executed on $\mu\alpha.e$.
  Then, by induction on the derivation $\apply{[\alpha \mapsto \mu \alpha.e]}{e} \reduce \varepsilon$ we obtain a derivation $e \reduce \varepsilon$.
  For clarity, we write $f \reduceNoFold{\alpha} \varepsilon$ to denote a derivation
  in which no unfolding step takes place for $\mu\alpha.e$.

  Consider the induction step.
  By induction, if $\apply{[\alpha \mapsto \mu \alpha.e]}{e} \reduce \varepsilon$
  then $\apply{[\alpha \mapsto \mu \alpha.e]}{e} \reduceNoFold{\alpha} \varepsilon$.
  By induction on the derivation $\reduceNoFold{}$ we can argue that
  we obtain $e \reduceNoFold{\alpha} \varepsilon$.
  Hence, if $\apply{[\alpha \mapsto \mu \alpha.e]}{e} \reduce \varepsilon$
  then $e\reduceNoFold{\alpha} \varepsilon$.

  The above reasoning considers the elimination of the unfolding step
  for a specific $\mu\alpha.e$.
  By induction, we can argue that any unfolding step can eliminated.
  We write $\reduceNoFold{}$ to denote the derivation where no unfolding steps occur.
  Hence, if $e \reduce \varepsilon$ then $e \reduceNoFold{} \varepsilon$.
  By induction on $\reduceNoFold{}$ we can argue that $\nullable{e}$ holds.
  \qed
\end{proof}

\subsection{Proof of Lemma \ref{le:mkEmpty}}

\begin{proof}
  Based on the observations in the proof of Lemma~\ref{le:cfe-nullability},
  from $\varepsilon \in L(e)$ we can conclude $e \reduceNoFold{} \varepsilon$.
  Then, by induction on $\reduceNoFold{}$ we can derive the desired statements.
  \qed
\end{proof}

\subsection{Proof of Theorem \ref{th:upcast-coercions}}

\begin{proof}
  Follows immediately from Lemmas \ref{le:upCoerce} and \ref{le:mkEmpty}.
  Note that all elements in $\reach{e}{r}$ are nullable.
  Hence, $\flatten{\mkEmpty{\plus{\reach{e}{r}}}} \in \deriv{r}{\flatten{p}}$.
\end{proof}

\subsection{Proof of Lemma \ref{le:downCoerce}}

\ms{polish}

\begin{definition}[Well-Behaved Downcast]
  Let $e$ be a context-free expression and $r$ be a regular expression.
  We write $\dnTy{e}{r}$ to denote the type
$r \rightarrow \Maybe(e, \plus{\reach{e}{r}})$.
Let $c$ be a coercion of type $\dnTy{e}{r}$.
We say $c$ is a \emph{well-behaved downcast} coercion iff
(1) for any $\turnsreg t : r$  we
find that $\turns \fapp{c}{t} : \Maybe(e, \plus{\reach{e}{r}})$.
  Moreover, (2) if $\fapp{c}{t} = \Just (p, t')$, then $\flatten{t} =
  \flatten{p} \conc \flatten{t'}$.
  (3) If $\fapp{c}{t} = \Nothing$, then there exist no $\turns p:e$ and
  $\turnsreg t' : \plus{\reach{e}{r}}$ such that $\flatten{t} =
  \flatten{p} \conc \flatten{t'}$.

We say that $\Delta$ is a \emph{well-behaved downcast} environment iff
each $(v : \dnTy{e}{r}) \in \Delta$ is a well-behaved downcast coercion.
\end{definition}

In case (2) holds, we can conclude that
$\flatten{u} \in L(\deriv{r}{\flatten{p}})$
due to $\turns t : r$ and Lemma~\ref{le:parse-trees-flattening}
and the fact that derivatives denote left quotients.

\begin{lemma}
  \label{le:downCoerce}
  Let $\Delta$ be a well-behaved downcast environment, $e$ be a context-free expression,  $r$ a regular expression, and $c$ a coercion
  such that $\downCoerce{\Delta}{c}{e}{r}$.
  (1) For each regular parse tree $\turnsreg t : r$
  we find that $\Delta \turns \fapp{c}{t} : \Maybe(e, \plus{\reach{e}{r}})$.

  Moreover, (2) if $\fapp{c}{t} = \Just (p, t')$, then $\flatten{t} =
  \flatten{p} \conc \flatten{t'}$ and 
  $\flatten{t'} \in \deriv{r}{\flatten{p}}$.
  (3) If $\fapp{c}{t} = \Nothing$, then there exist no $\turns p:e$ and
  $\turnsreg u : \plus{\reach{e}{r}}$ such that $\flatten{t} =
  \flatten{p} \conc \flatten{u}$ and $\flatten{u} \in \deriv{r}{\flatten{p}}$.
\end{lemma}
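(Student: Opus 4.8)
The plan is to proceed by induction on the derivation of $\downCoerce{\Delta}{c}{e}{r}$ in Figure~\ref{fig:downcast-coercions}, proving the three conjuncts~(1)--(3) simultaneously and carrying the standing hypothesis that $\Delta$ is a well-behaved downcast environment. Throughout I would use the evident regular analogue of Lemma~\ref{le:parse-trees-flattening}, namely that $\turnsreg t : r$ implies $\flatten{t} \in L(r)$. This makes the derivative-membership half of~(2), $\flatten{t'} \in \deriv{r}{\flatten{p}}$, a consequence of the flattening identity $\flatten{t} = \flatten{p} \conc \flatten{t'}$ together with $\flatten{t} \in L(r)$, so that the real content of~(2) is the flattening decomposition and of~(1) the typing.

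For the axiom rules \dlabel{Eps} and \dlabel{Sym}, conjuncts~(1) and~(2) follow directly from the corresponding regular-coercion properties of Lemma~\ref{le:regular-coercions}: the upcast in \dlabel{Eps} preserves flattening, while the downcast $b$ in \dlabel{Sym} splits $\flatten{t}$ into a leading $x$ and a tail lying in $\deriv{r}{x}$ exactly when such a split exists, which simultaneously yields~(3). For \dlabel{Alt} and \dlabel{Seq}, the typing~(1) and the success decomposition~(2) are routine inductions: in \dlabel{Alt} one chains the flattening-preserving upcasts $b_1, b_2$ onto the residue of the chosen branch, and in \dlabel{Seq} one composes the two decompositions, relying on $\plus{\reach{e \conc f}{r}} = \plus{\reach{f}{\plus{\reach{e}{r}}}}$ so that the residue produced by $c_2$ already carries the required type.

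The recursive rules \dlabel{Hyp} and \dlabel{Rec} I would discharge through the well-behavedness of $\Delta$. For \dlabel{Hyp} the conclusion is immediate. For \dlabel{Rec} I would unfold the fixpoint once, exactly as in the corresponding step of the upcast proof (Lemma~\ref{le:upCoerce}), apply the induction hypothesis to the body derivation under the extended environment $\Delta \cup \{ v_{\alpha.e,r} : \dnTy{\mu\alpha.e}{r} \}$, and transport the result back across the $\Fold$ using $\reach{\mu\alpha.e}{r} = \reach{\apply{[\alpha\mapsto\mu\alpha.e]}{e}}{r}$, which follows from Lemma~\ref{lemma:unfold-cfe}. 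Since this step re-uses the very hypothesis it is meant to establish, the circularity must be broken by approximation: because $\Just$-values and $\Nothing$ are finite observations, any value of $\sem{\fapp{c}{t}}{}$ different from $\bot$ is already produced by a finite unfolding of the fixpoint, so one argues by induction on the number of unfoldings, which terminates as only finitely many pairs of unfolded $\mu$-subterms and canonical descendants of $r$ arise.

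The main obstacle is conjunct~(3), the completeness claim that $\Nothing$ is returned only when no admissible decomposition $(p,u)$ exists. For \dlabel{Alt} one assumes such a decomposition and cases on whether $p = \Left\ p'$ or $p = \Right\ p'$, rebuilding a residue for the relevant summand via Lemma~\ref{le:flatten-cfg-proof} and the inclusion $L(\deriv{r}{\flatten{p'}}) \subseteq L(\plus{\reach{e}{r}})$ guaranteed by Lemmas~\ref{le:sound-reach-characterization} and~\ref{le:complete-reach-characterization}, thereby contradicting the induction hypothesis. The genuinely delicate rule is \dlabel{Seq}: when $c_1$ succeeds but $c_2$ fails on the residue, one must argue that the particular prefix parse to which $c_1$ committed was forced, i.e.\ that no alternative prefix of $\flatten{t}$ could have left a residue completable as $f$. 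This left-biased, non-backtracking commitment is sound only when the choice at each step is deterministic, so establishing~(3) hinges on the absence of leading ambiguity, ultimately the guardedness condition of Definition~\ref{def:guarded-cfe}; I expect isolating and justifying this determinism invariant to be where most of the effort lies.
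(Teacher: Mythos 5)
Your plan matches the paper's own proof almost step for step: simultaneous induction on the derivation of $\downCoerce{\Delta}{c}{e}{r}$, Lemma~\ref{le:regular-coercions} at \dlabel{Eps} and \dlabel{Sym}, the identity $\plus{\reach{e\conc f}{r}} = \plus{\reach{f}{\plus{\reach{e}{r}}}}$ at \dlabel{Seq}, well-behavedness of $\Delta$ at \dlabel{Hyp}, and $\reach{\mu\alpha.e}{r} = \reach{\apply{[\alpha\mapsto\mu\alpha.e]}{e}}{r}$ at \dlabel{Rec}. Two of your deviations are improvements on the paper's write-up: deriving the membership $\flatten{t'} \in \deriv{r}{\flatten{p}}$ in~(2) once and for all from $\flatten{t} \in L(r)$ together with the decomposition $\flatten{t} = \flatten{p} \conc \flatten{t'}$ (the paper instead threads this through the induction in the \dlabel{Seq} case), and breaking the \dlabel{Rec} circularity by approximation over finite unfoldings, where the paper appeals to subject reduction and a backwards substitution lemma while tacitly assuming the extended environment is well behaved for the very coercion being defined.

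The point where you stop short --- conjunct~(3) at \dlabel{Seq} --- is, however, not a defect of your proposal so much as a diagnosis of a gap in the lemma itself. As written, your proof is incomplete: you may not invoke guardedness, since Definition~\ref{def:guarded-cfe} is not a hypothesis of the lemma, so deferring to a determinism invariant leaves~(3) unproven. But your suspicion is exactly right, and the paper's own proof fails at the same spot: it dispatches the subcase $\fapp{c_1}{t} = \Just\ (p_1,t_1)$ with $\fapp{c_2}{t_1} = \Nothing$ by saying ``we can derive a similar contradiction,'' yet the induction hypothesis for $c_2$ only excludes decompositions of $\flatten{t_1}$, the suffix determined by the prefix $\flatten{p_1}$ to which $c_1$ committed; it says nothing about a witness $\Seq{p_1'}{p_2'}$ whose split point differs from $\flatten{p_1}$. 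Indeed~(3) is false in general: take $e = x + \varepsilon$, $f = x$, $r = x$, and $t = \Sym\ x$. Then $\plus{\reach{e \conc f}{r}} = \varepsilon$, and $p = \Seq{(\Right\ \Eps)}{(\Sym\ x)}$ with $u = \Eps$ is an admissible pair satisfying $\flatten{t} = \flatten{p} \conc \flatten{u}$ and $\flatten{u} \in L(\deriv{r}{\flatten{p}})$; but the \dlabel{Alt} coercion inside $c_1$ commits to the summand $x$ (the regular downcast for $x \conc \varepsilon \leq x$ succeeds on $t$), leaving a residue of flattening $\varepsilon$, on which $c_2$'s downcast for $x$ fails by Lemma~\ref{le:regular-coercions}(3), so $\fapp{c}{t} = \Nothing$. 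Hence~(3) requires either backtracking in \dlabel{Alt}/\dlabel{Seq} or an added determinism hypothesis such as guardedness --- precisely the invariant you isolate. In short: you prove~(1) and~(2) along the paper's lines with cleaner details at \dlabel{Rec}, and your refusal to claim~(3) for \dlabel{Seq} correctly identifies a step that the paper's proof asserts but does not, and in fact cannot, justify.
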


\begin{proof}
  By induction on the downcast coercion derivation.
  
  \textbf{Case }$\Eps$. (1) follows easily.
  For (2), we find that
  $\flatten{\Eps} \conc \flatten{b\ (t)} = \varepsilon \conc \flatten{t}
  = \flatten{t}$. Case (3) never arises here.

  \textbf{Case }$\Sym\ x$. (1) is again straightforward.
  Case (2), suppose $\fapp{c}{t} = \Just (\Sym\ x, t')$, then 
  $\flatten{\Sym\ x} \conc \flatten{t'} = x \conc \flatten{t'} = \flatten{t}$.
  Case (3), suppose $\fapp{c}{t} = \Nothing$.
  Then, $\fapp{b}{t} = \Nothing$ which implies $\deriv{r}{x}$ denotes the empty language
  and therefore no $p$, $t'$ with the desired property can exist.



  \textbf{Case }$e+f$.
  By induction we obtain $c_1$ such that $\Gamma \turns \fapp{c_1}{t} : \Maybe(e, \plus{\reach{e}{r}})$;
  if $\fapp{c_1}{t} = \Just (p_1, t_1)$, then
  $flatten{t} = \flatten{p_1} \conc \flatten{t_1}$.
  In this case, $\fapp{c}{t} = \Just (\Left \ p_1, \fapp{b_1}{t_1})$.
  By  property of regular coercions
  (see Lemma~\ref{le:regular-coercions}),
  we find that $\flatten{t} = \flatten{p_1} \conc \flatten{\fapp{b_1}{t_1}} = \flatten{p_1} \conc \flatten{t_1}$.
  If $\fapp{c_1}{t} = \Nothing$, then no such $p_1$ and $t_1$ exist
  and therefore also no $t_1'$ of the form $\turns t_1' : \plus{\reach{e+f}{r}}$
  exists.

  Similar reasoning applies to $c_2$.
  On the other hand, if $\fapp{c}{t} = \Just (p_1,t_1')$ then one of the respective
  cases of either $c_1$ or $c_2$ applies.
  Similar reasoning applies in case of $\fapp{c}{t} = \Nothing$
  as any $\turns p : e+f$ has either form
  $\Left\ p_1$ or $\Right\ p_2$, where neither suitable $p_1$ nor $p_2$ exist,
  no such $p$ can
  exist either.

  \textbf{Case }$e \conc f$.
  By induction, we obtain $c_1, c_2$ such that
  $\Gamma \turns \fapp{c_1}{t} : \Maybe(e, \plus{\reach{e}{r}})$ and
  $\Gamma \turns \fapp{c_2}{t_1} :  \Maybe(f,  \plus{\reach{f}{\plus{\reach{e}{r}}}})$.
  Hence, $\Gamma \turns \fapp{c}{t} : \Maybe (e\conc f, \plus{\reach{e\conc f}{r}})$.

  Suppose $\fapp{c}{t} = \Just (p,t_2)$.
  Then, $\fapp{c_1}{t} = \Just (p_1,t_1)$
  and $\fapp{c_2}{t_1} = \Just (p_2,t_2)$.
  By induction, $\flatten{t} = \flatten{p_1} \conc \flatten{t_1}$
  and $\flatten{t_1} = \flatten{p_2} \conc \flatten{t_2}$.
  Hence, $\flatten{t} = \flatten{\Seq{p_1}{p_2}} \conc \flatten{t_2}$.

  Also, $\flatten{t_1} \in \deriv{r}{\flatten{p_1}}$
  and $\flatten{t_2} \in \deriv{\plus{\reach{e}{r}}}{\flatten{p_2}}$,  by induction,
  but as  $\flatten{t_1} = \flatten{p_2} \conc \flatten{t_2}$, we know
  that $\flatten{t_2} \in
  \deriv{\deriv{r}{\flatten{p_1}}}{\flatten{p_2}} =  \deriv{r}{\flatten{\Seq{p_1}{p_2}}}$.

  Suppose $\fapp{c}{t} = \Nothing$.
  Then, either $\fapp{c_1}{t} = \Nothing$
  or if $\fapp{c_1}{t} = \Just (p_1,t_1)$ then $\fapp{c_2}{t_1} =
  \Nothing$.

  If  $\fapp{c_1}{t} = \Nothing$, then there exist no $\turns p_1:e$
  and $\turns t_1 : \plus{\reach{e}{r}}$ such that $\flatten{t} =
  \flatten{p_1}\conc\flatten{t_1}$.
  If there was some suitable $p = \Seq{p_1}{p_2}$ with $\turns
  p:e\conc f$, then we can construct suitable $t_1$ for
  $p_1$. Contradiction.

  If $\fapp{c_1}{t} = \Just (p_1,t_1)$ and $\fapp{c_2}{t_1} =
  \Nothing$, then we can derive a similar contradiction for $p_2$ and $t_2$.

  \textbf{Case} $\mu\alpha.e$.
  By induction, $\Gamma\cup \{ v : (\mu\alpha.e,r) \} \turns \fapp{c'}{t} : \Maybe(\apply{[\alpha\mapsto\mu\alpha.e]}{e}, \plus{\reach{\apply{[\alpha\mapsto\mu\alpha.e]}{e}}{r}})$.
  By subjection reduction, $\Gamma \turns \fapp{c}{t} : \Maybe(\mu\alpha.e,\plus{\reach{\mu\alpha.e}{r}})$.

  Suppose $\fapp{c}{t} = \Just (p,t')$.
  We find that $p = \Fold\ p'$.
  We have that $\plus{\reach{\apply{[\alpha\mapsto\mu\alpha.e]}{e}}{r}} = \plus{\reach{\mu\alpha.e}{r}}$.
  Hence, $\turns t' : \plus{\reach{\mu\alpha.e}{r}}$ implies $\turns t' : \plus{\reach{\apply{[\alpha\mapsto\mu\alpha.e]}{e}}{r}}$ and vice versa. 
  By induction, $\flatten{t} = \flatten{p'} \conc \flatten{t'}$ and this establishes (2).
  Suppose $\fapp{c}{t} = \Nothing$.
  Then, $\fapp{c'}{t} = \Nothing$.
  By induction, no suitable $p'$ and $t'$ exists
  where $\turns p' : \apply{[\alpha\mapsto\mu\alpha.e]}{e}$
  and $\turns t' : \plus{\reach{\apply{[\alpha\mapsto\mu\alpha.e]}{e}}{r}}$.
  Hence, there can be no suitable $p$ and $t'$ either where
  $\turns p : \mu\alpha.e$ and $\turns t' : \plus{\reach{\mu\alpha.e}{r}}$
  and we are done.
  \qed
\end{proof}

\end{document}